\newtheorem{definition}{Definition}
\newtheorem{Proposition}[definition]{Proposition}
\newtheorem{Lemma}[definition]{Lemma}
\newtheorem{Theorem}[definition]{Theorem}
\newtheorem{Corollary}[definition]{Corollary}
\newtheorem{conjecture}[definition]{Conjecture}
\def\>{\rangle}
\def\<{\langle}
\def\be{\begin{equation}}
\def\ee{\end{equation}}
\def\bee{\begin{eqnarray*}}
\def\eee{\end{eqnarray*}}
\newcommand{\nc}{\newcommand}
\newcommand{\ket}[1]{|#1\rangle}
\newcommand{\bra}[1]{\langle#1|}
\newcommand{\braket}[1]{|#1\rangle\langle#1|}
\newcommand{\ti}[1]{\tilde{#1}}
\newcommand{\ov}[1]{\overline{#1}}
\newcommand{\initstate}{\psi^{C_1C_2\ldots C_mBR}}
\newcommand{\initstateA}{\psi^{C_1C_2\ldots C_mABR}}
\newcommand{\inputstate}{\psi^{C_1C_2\ldots C_mAB}}
\newcommand{\inputGroupstate}{\psi^{{\cal T}{\ov{\cal T}}ABR}}
\newcommand{\projstate}{\psi_J^{C^1_MB^0_MBR}}
\newcommand{\projstatebraket}{\braket{\psi_{J}}^{C^1_MB^0_MBR}}
\newcommand{\reducstate}{\psi_{J}^{C^1_M R}}
\newcommand{\mergestate}{\psi^{B_MBR}}
\newcommand{\decouplestate}{\tau^{C^1_M} \otimes \psi^{R}}
\newcommand \Tr {\mathrm{Tr}}
\newcommand \calE {{\cal E}}
\newcommand{\sourcestate}{\psi_{C_1C_2\ldots C_mR}}
\newcommand{\cK}{{\cal T}}
\newcommand{\cKbar}{{\overline{\cal T}}}
\newcommand{\cT}{\ov{\cal T}}
\newcommand{\cTo}{\ov{{\cal T}^0}}
\newcommand{\cTu}{\ov{{\cal T}^1}}
\nc{\outputrho}{\rho^{\cK^1A^1_{\cK} \cTu B^1_{\cKbar}A_{\cK}B_{\cKbar} ABRX}}
\newcommand {\X}{{\cal X}}
\newcommand {\Y}{{\cal Y}}
\newcommand {\cX}{{\overline{\cal X}}}
\newcommand {\cY}{{\overline{\cal Y}}}
\nc{\smfrac}[2]{\mbox{$\frac{#1}{#2}$}}
 \nc{\UA}{U^A_{j_{\cK}}}
 \nc{\VB}{V^B_{j_{\cKbar}}}
 \nc{\jK}{j_{\cK}}
 \nc{\jKbar}{j_{\cKbar}}
\def\squareforqed{\hbox{\rlap{$\sqcap$}$\sqcup$}}
\def\qed{\ifmmode\squareforqed\else{\unskip\nobreak\hfil
\penalty50\hskip1em\null\nobreak\hfil\squareforqed
\parfillskip=0pt\finalhyphendemerits=0\endgraf}\fi}
\newenvironment{proof}{\noindent \textbf{{Proof~} }}{\qed\medskip}
\newenvironment{proof+}[1]{\noindent \textbf{{Proof #1~} }}{\qed\medskip}
\begin{document}

\title{One-Shot Multiparty State Merging}

\author{Nicolas Dutil}
\email{ndutil@cs.mcgill.ca}
\affiliation{
    School of Computer Science,
    McGill University,
    Montreal, Quebec, H3A 2A7, Canada
 }

\author{Patrick Hayden}
\email{patrick@cs.mcgill.ca}
\affiliation{
    School of Computer Science,
    McGill University,
    Montreal, Quebec, H3A 2A7, Canada
 }
 \affiliation{
    Perimeter Institute for Theoretical Physics,
    31 Caroline St. N., Waterloo, Ontario,
    N2L 2Y5, Canada
}

\begin{abstract} We present a protocol for performing state merging
when multiple parties share a single copy of a mixed state, and analyze the entanglement cost in terms of min- and max-entropies. Our protocol allows for interpolation between corner points of the rate region without the need for time-sharing, a primitive which is not available in the one-shot setting. We also compare our protocol to the more naive strategy of repeatedly applying a single-party merging protocol one party at a time, by performing a detailed analysis of the rates required to merge variants of the embezzling states.
Finally, we analyze a variation of multiparty
merging, which we call \emph{split-transfer}, by considering two
receivers and many additional helpers sharing a mixed state. We give a
protocol for performing a split-transfer and apply it to the
problem of assisted entanglement distillation. \end{abstract} \maketitle
\parskip .75ex

\section{Introduction}
An  important part of quantum information theory is concerned with
the design and analysis of quantum communication protocols.
The subject has flourished
over the past two decades, with early discoveries like teleportation \cite{teleportation} and superdense coding
\cite{superdense} laying the groundwork
for a series of major advances over the last five years. (See, for example,
\cite{SVW,Merge,Hayden001,DW,Yard01,Yard02,zero,hasting}.)
Another early result, Schumacher compression~\cite{Schumacher}, studies the amount of quantum communication required
to transmit to another location a sequence of quantum states
$\ket{\psi^{A}_1}\ket{\psi^{A}_2}\ket{\psi^{A}_3}\ldots$ emitted
by a statistical source. If we assume the
states coming from the source are independent and identically
distributed (i.e an i.i.d source), we get the quantum analogue of
Shannon compression, and the optimal rate of compression is given
by the von Neumann entropy $S(\rho)$ of the density matrix
$\rho=\sum_j p_j \psi_j$ associated with the source
\cite{Schumacher}. This gives an informational meaning to the von
Neumann entropy, whose original definition was motivated by the
desire to extend the Gibbs entropy, a thermodynamical concept, to
the quantum setting. Schumacher compression is often used in more
complex protocols as a preliminary preprocessing step.

Other information theoretic quantities, such as the conditional
von Neumann entropy $S(A|B)_{\psi}$ \cite{Negative} and the
conditional mutual information $I(A : B|R)_{\psi}$ \cite{Yard01},
were only more recently given meaning \cite{SW-nature,Merge,Yard01}. If
we consider an i.i.d. source $S$ emitting an unknown sequence of
states
$\ket{\psi^{AB}_1}\ket{\psi^{AB}_2}\ldots\ket{\psi^{AB}_n}$,
distributed to two spatially separated parties $A$ (Alice) and $B$
(Bob), an interpretation of $S(A|B)_{\psi}$ can be obtained
\cite{SW-nature,Merge} as the optimal rate at which pure
entanglement needs to be consumed in order to transfer the entire
sequence to Bob's location. Whenever $S(A|B)_{\psi}$ is negative,
it is understood that entanglement is gained instead of consumed
and that the transfer can be accomplished using only local operations and
classical communication (LOCC). The first protocol
\cite{Merge,SW-nature} for achieving this task, also known as
state merging, was based on a random measurement strategy, a popular approach when designing quantum communication protocols. Examples of other tasks that can be achieved using this approach are distributed compression \cite{Hayden002} and assisted distillation \cite{dfm}. In assisted distillation, $m$ helpers $C_1,
C_2,\ldots, C_m$ and two recipients $A$ and $B$ share a
multipartite pure state $\psi^{C_1C_2\ldots C_mAB}$, and the objective
is to extract an optimal amount of pure entanglement between $A$
and $B$ by using LOCC operations and classical information
broadcasted by the $m$ helpers $C_1, C_2,\ldots, C_m$.

Both distributed compression and assisted distillation involve
multiple parties (i.e more than two) sharing a multipartite state
$\psi$. In the case of distributed compression, we can use the state merging
primitive to perform compression at an optimal rate by transferring each sender's share one at a time \cite{Merge}. This strategy
will work for any rate which is a corner point of the boundary of the rate region associated with distributed compression. To achieve compression at rates which are not corner points, however, we need to use a time-sharing strategy as the decoding operation performed by the receiver can only recover the shares one at a time. One contribution of this paper is to present a protocol for the more general task of multiparty state merging which will eliminate the need for time-sharing. That is, we consider $m$ senders $C_1, C_2,\ldots, C_m$ and a decoder $B$ sharing a multipartite mixed state $\psi^{C_1C_2\ldots C_mB}$,
potentially with additional entanglement in the form of EPR pairs
(ebits) distributed between the decoder and each of the $m$
senders. Given many copies of the input state $\psi^{C_1C_2\ldots C_mB}$, the task is to transfer the shares $C_1, C_2,\ldots,
C_m$ to the receiver $B$ with high fidelity using only LOCC
operations.

If only a single copy of the state $\psi^{C_1C_2\ldots C_mB}$ is
available to the parties, we can use our protocol to achieve merging within an error tolerance $\epsilon$ if we distribute enough initial entanglement between each of the senders and the receiver. In this regime, a more naive strategy consisting of repeatedly applying a one-shot state merging protocol \cite{Berta} on one sender at a time will generally require more initial entanglement to perform the state transfer than does our protocol. In addition, this strategy only yields a handful of achievable combinations of entanglement costs and does not permit interpolating between them. A full characterization of the entanglement cost in the one-shot regime when $m=1$
was performed by \cite{Berta}  using smooth min- and max-entropies. By applying the random
measurement strategy of \cite{Merge} and by using the min- and max-entropy formalism of \cite{Renner02}, we generalize some of the results of \cite{Berta} to the multipartite case ($m \geq 2$). This work complements other recent attempts to study quantum information theory in the one-shot setting~\cite{Renner01,Berta02,Datta1,Datta,Renes}.

To perform assisted distillation in the context of multiple
parties, we introduce a second decoder, whom we label $A$ (Alice),
and consider a variation of multiparty merging. Given a partition
of the helpers $C_1, C_2\ldots, C_m$ into a set ${\cal T}$ and its
complement $\overline{{\cal T}}:=\{C_1C_2\ldots
C_m\}\backslash{\cal T}$, we want to transfer the shares ${\cal
T}$ and ${\overline{\cal T}}$ to the locations of the decoders $A$
and $B$ respectively. We call this task a \emph{split-transfer} of
the state $\psi^{C_1C_2\ldots C_mB}$. A protocol for performing a
split-transfer can be obtained by using the random measurement
strategy on $C_1, C_2,\ldots, C_m$, followed by appropriate
decodings $U_A$ and $V_B$ by the decoders $A$ and $B$. The optimal
achievable rate for assisted distillation was found in
\cite{Merge} by using a recursive argument. By using a split-transfer protocol, we give a simpler demonstration which does not rely on a recursive argument.

\textbf{Structure of the paper:} In Section~\ref{sec:defns}, we introduce the
definition for multiparty merging of a state
$\psi^{C_1C_2\ldots C_mBR}$ and review the known results for the i.i.d setting. In Section~\ref{sec:merging-many}, we formulate a condition that a set of instruments performed by
the senders $C_1,C_2,\ldots,C_m$ must satisfy in order to accomplish
merging within a fixed error tolerance. In Section~\ref{sec:random-meas}, we consider
random measurements performed by the senders $C_1, C_2,\ldots,
C_m$ and prove an upper bound to the quantum error when a single
copy of the input state is available. We analyze the asymptotic setting
in Section~\ref{sec:iid}, recovering the main theorem of Section~\ref{sec:defns} without the need for time-sharing. In Section~\ref{sec:one-shot}, we reformulate the bounds obtained in Section~\ref{sec:merging-many} in terms of
min-entropies and give necessary and sufficient conditions for
merging in the one-shot regime.  Section\ref{sec:example} is devoted to analyzing the rates achievable for variants of the embezzling states, comparing our protocol to a strategy of merging the shares one at a time. We introduce a \emph{split-transfer} of the state $\psi^{C_1C_2\ldots
C_mB}$ in Section~\ref{sec:split-transfer} and show the existence of a protocol for
performing this task. We use this protocol to recover the optimal
distillation rate for the problem of assisted distillation.
Appendices, containing relevant folklore material, appear at the end.

\textbf{Notation:} In this paper, we restrict our attention to
finite dimensional Hilbert spaces. Quantum systems under
consideration will be denoted $A, B,\ldots,$ and are freely
associated with their Hilbert spaces, whose (finite) dimensions
are denoted $d_A, d_B$, etc... If $A$ and $B$ are two Hilbert
spaces, we write $AB \equiv A \otimes B$ for their tensor product
and write $A^n$ for the tensor product $\bigotimes_{i=1}^n A$. If we have $m$ Hilbert spaces $A_1, A_2, \ldots, A_m$, we write $A_M$ for the tensor product $\bigotimes^m_{i=1}A_i$. An ancilla augmenting the system $A_i$ is denoted as $A^0_i$. We write $A^0_M$ for the tensor
product $\bigotimes^m_{i=1} A^0_i$ of $m$ ancillas $A^0_1, A^0_2,\ldots, A^0_m$. The maximally entangled
state $\frac{1}{\sqrt{K_i}}\sum_{k=1}^{K_i} \ket{k}\ket{k}$ of Schmidt
rank $K_i$ is denoted as $\ket{\Phi^{K_i}}$. We
write $\Phi^K$ for the density operator $\braket{\Phi^{K_1}}\otimes
\braket{\Phi^{K_2}} \otimes \ldots \otimes \braket{\Phi^{K_m}}$. The space of linear
operators acting on the Hilbert space $A$ is denoted by ${\cal L}(A)$. The
identity operator acting on $A$ is denoted by $I^A$. The symbol
$\mathrm{id}_A$ denotes the identity map acting on ${\cal L}(A)$. Unless otherwise
stated, a "state" can be either pure or mixed. The symbol for such
a state (such as $\psi$ and $\rho$) also denotes its density
operator. The density operator $\braket{\psi}$ of a pure state
will frequently be written as $\psi$.  We denote by $\tau_A$ the
maximally mixed state of dimension $d_A$. We write $S(A)_{\psi} =
-\Tr(\psi^A \log \psi^A)$ to denote the von Neumann entropy of a
density matrix $\psi^A$ for the system $A$. The function
$F(\rho,\sigma) := \Tr \sqrt{\rho^{1/2}\sigma\rho^{1/2}}$ is the
fidelity \cite{uhlmann:fid} between the two states $\rho$
and $\sigma$. The trace norm of an operator, $\|X\|_1$ is defined
to be $\Tr|X| = \Tr \sqrt{X^{\dag}X}$. We will use the terms
"receiver" and "decoder" interchangeably throughout the following
sections.

\section{Definitions and Main Result} \label{sec:defns}

For a bipartite state $\rho^{AB}$, the operation known as
\emph{quantum state merging} can be viewed in two different ways.
The original formulation of the problem \cite{Merge} was in terms
of a statistical source emitting (unknown) states
$\ket{\psi^{AB}_1},\ket{\psi^{AB}_2},\ldots,$ with average density
operator $\rho^{AB}$ assumed to be known by the parties. The
objective was then to transfer the entire sequence to the location
of the decoder (Bob) using as little quantum communication as
possible. An equivalent view of the problem is to consider a
purification $\psi^{ABR}$ of the density matrix $\rho^{AB}$, and
regard the process of merging as that of transferring all the
correlations between Alice's share and the purification system $R$
to the location of the decoder $B$. This means decoupling Alice's
system from the reference $R$, while leaving the state $\psi^R$
intact (up to some arbitrarily small perturbation) in the process.
The receiver will hold a purification $\phi^{BR}$ of the system
$R$, and since all purifications are equivalent up to an isometry
on the purification system, he can recover the original state
$\psi^{ABR}$ by applying an appropriate isometry to $\phi^{BR}$.
Additional entanglement between Alice and Bob might be distilled
in the process.

To analyze the multipartite scenario, where $m$
senders and a decoder/receiver share a state $\initstate$, with
purifying system $R$, we adopt the second view and look at the
transformations that can be performed by the senders on their shares to allow the
receiver to recover the purified state $\initstate$ with high
fidelity. The resources at their disposal will be pure entanglement,
in the form of maximally entangled states shared between each of the senders $C_1, C_2, \ldots, C_m$ and the receiver Bob, and
noiseless classical channels, which will be used to
transmit measurement outcomes to the receiver. Any
transformation applied by the senders will need to decouple the
reference $R$ from the senders' shares $C_1,C_2,\ldots,C_m$ and leave the
reference unchanged. Otherwise, the receiver might hold a
purification that cannot be taken, by means of an isometry, to the
original state. If each of the senders $C_1, C_2, \ldots, C_m$ perform an incomplete measurement, described by Kraus operators $P_i$ mapping $C_i$ to a subspace $C^1_i$, we would want each outcome state $\psi^{C^1_1C^1_2\ldots
C^1_mR}_{J}$ to have a product form
 \begin{equation}
    \psi^{C^1_1C^1_2\ldots C^1_m R}_J \approx \psi^{C^1_1C^1_2\ldots C^1_m}_J \otimes \psi^{R},
 \end{equation}
where $J= (j_1, j_2, \ldots, j_m)$ are the measurement outcomes.
The states $\{\psi^{C^1_1C^1_2\ldots C^1_m}_J\}$ could be entangled
and/or contain classical correlations between some of the subsystems $C^1_1, C^1_2,\ldots,C^1_m$. In this paper, we will primarily be
concerned with extracting pure bipartite entanglement, in the form of
maximally entangled states $\frac{1}{\sqrt{K}}\sum^K_{i=1}\ket{k}\ket{k}$ shared between the senders and the decoder, and thus, we will
further impose that the operations applied by the senders
destroy all correlations existing with the other senders' shares. That is,
we want
 \begin{equation}
    \psi^{C^1_1C^1_2\ldots C^1_m R}_J \approx \tau^{C^1_1} \otimes \tau^{C^1_2} \otimes \ldots \otimes \tau^{C^1_m} \otimes \psi^{R},
 \end{equation}
where $\tau^{C^1_i}$ is the maximally mixed state of dimension
$L_i$ on the subspace $C^1_i$.  With this assumption in
mind, we can give a definition of a multiparty state merging for a
state $\initstate$.

\begin{figure}[t]
  \centering
    \includegraphics[width=0.9\textwidth]{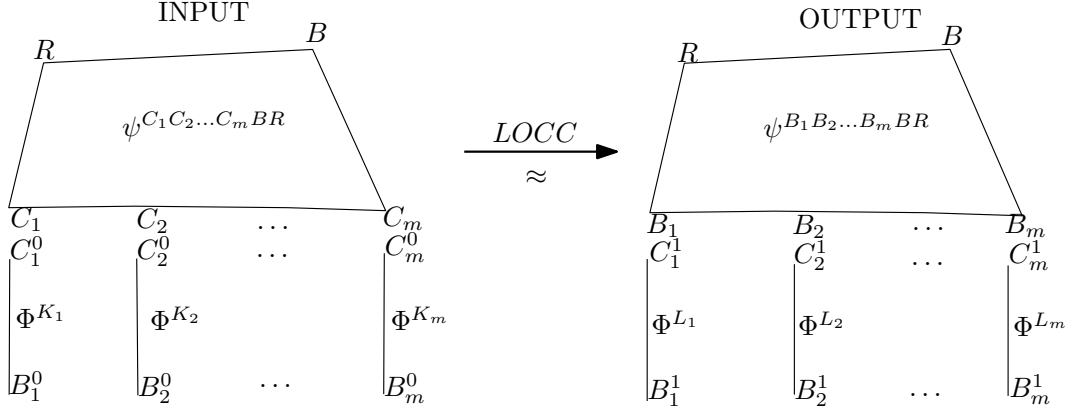}
\caption{Picture of the initial and final steps of a multiparty
state merging protocol.
}\label{fig:statemerging}
\end{figure}

Let $\Lambda^m_{\rightarrow} : C_MC^0_M \otimes BB^0_M \rightarrow
C^1_M \otimes B^1_MBB_M$ be an LOCC quantum operation performed by
the senders $C_1,C_2,\ldots,C_m$ and the decoder $B$. Initially, each sender
$C_i$ is given an ancilla $C^0_i$. The receiver also has
ancillas $B^0_M:=B^0_1B^0_2\ldots B^0_m$, with $d_{B^0_i} =
d_{C^0_i}$, and $B^1_M:=B^1_1B^1_2\ldots B^1_m$ with $d_{B^1_i} = d_{C^1_i}$. Before the map $\Lambda^m_{\rightarrow}$ is applied, the systems $C^0_M$ and $B^0_M$ will hold maximally entangled states $\Phi^{K_1} \otimes \Phi^{K_2} \otimes \ldots \otimes \Phi^{K_m}$, where $\Phi^{K_i}$ has Schmidt rank $K_i = d_{C^0_i}$ and is shared between the sender $C_i$ and the receiver $B$.
After the map $\Lambda^m_{\rightarrow}$ is applied, the senders
share a subsystem $C^1_M:=C^1_1C^1_2\ldots C^1_m$ of $C_M$, and the receiver holds three systems:
$B$, $B^1_M$ and $B_M$, with $B_M$ being an ancillary system of dimension of the same size as
the system $C_M$.

This operation will implement merging, as illustrated in Figure
\ref{fig:statemerging}, if the output state of the map
$\mathrm{id}_R \otimes \Lambda^m_{\rightarrow}$ is approximately a
tensor product of the initial state $\initstate$ and maximally
entangled states $\Phi^L:=\Phi^{L_1} \otimes \Phi^{L_2} \otimes
\ldots \otimes \Phi^{L_m}$ shared between the senders and the
decoder. Each $\Phi^{L_i}$ is a maximally entangled state of
Schmidt rank $L_i$ on the tensor space $C^1_iB^1_i$. More
formally, we have

\begin{definition}[$m$-Party State Merging]
Let $\Lambda^m_{\rightarrow}$ be defined as in the previous
paragraphs. We say that $\Lambda^m_{\rightarrow}$ is an
\emph{$m$-party state merging protocol} for the state $\initstate$ with error $\epsilon$ and
entanglement cost $\overrightarrow{E}:= (\log K_1 -\log L_1,\log K_2 -\log L_2, \ldots, \log K_m - \log L_m)$ if
\begin{equation}
 \biggl \| (\mathrm{id}_{R} \otimes \Lambda^m_{\rightarrow})(\initstate \otimes \Phi^K) -
\mergestate \otimes \Phi^L \biggr \|_1 \leq
\epsilon,
\end{equation} where the state $\mergestate$ corresponds to the initial state
$\initstate$ with the system $B_M$ substituted for $C_M$. If we
are given $n$ copies of the same state, $\psi = (\sigma)^{\otimes
n}$, the entanglement rate $\overrightarrow{R}(\sigma)$ is defined
as $\overrightarrow{R}(\sigma) :=
\frac{1}{n}\overrightarrow{E}(\psi)$.
\end{definition}


Before stating the main theorem, we need to define what it means
for a rate-tuple $\overrightarrow{R}$ to be achievable for
multiparty merging using LOCC operations.

\begin{definition}[The Rate Region]\label{def:rateregion}
We say that the rate-tuple
$\overrightarrow{R}:=(R_1,R_2,\ldots,R_m)$ is achievable for
multiparty merging of the state $\psi^{C_1C_2\ldots C_m B}$ if,
for all $\epsilon
> 0$, we can find an $N(\epsilon)$ such that for every $n \geq
N(\epsilon)$ there exists an $m$-party state merging
protocol $\Lambda^m_{\rightarrow}$ acting on $\psi^{\otimes n} \otimes \Phi^{K^n}$, with error
$\epsilon$ and entanglement rate $\overrightarrow{R_n}:= \frac{1}{n}(\log K^n_{1} -\log L^n_1, \log K^n_2 -\log L^n_2,\ldots, \log K^n_m - \log L^n_m)$ approaching
$\overrightarrow{R}$. We call the closure of the set of achievable
rate-tuples the rate region.
\end{definition}

Suppose $m$ systems $C_1, C_2,\ldots, C_m$ in the state $\psi^{C_1C_2\ldots C_mR}$, where $R$ is a purifying system, are distributed to $m$ senders, spatially separated from each other. To recover the purified state $\psi^{C_1C_2\ldots C_mR}$ at the receiver's end, a task called distributed compression, the senders need an enough supply of initial entanglement. It was shown in \cite{Merge} that the rate region associated with distributed compression is characterized by the inequalities
\begin{equation}
\sum_{i \in {\cal T}} R_i \geq S({\cal T}|\overline{{\cal T}})_{\psi} \phantom{==} \quad \text{for all nonempty sets } {\cal T}\subseteq \{1,2,\ldots,m\}.
\end{equation}
Here, the symbol ${\cal T}$ also denotes the tensor product space ${\cal T}:=\bigotimes_{i \in {\cal
T}} C_i$ associated with the set ${\cal T}$. The set $\overline{{\cal T}}$ is defined as $\{1,2,\ldots,m\}\backslash {\cal T}$ and the tensor product space $\overline{{\cal T}}$ as $\bigotimes_{i \in {\overline{\cal
T}}} C_i$. If a rate-tuple $(R_1, R_2, \ldots, R_m)$ is achievable for distributed compression and some of the rates $R_i$ are negative, then the senders $C_i$ will be able to transfer their shares to the receiver using only LOCC operations, and furthermore, they will gain a potential for future communication in the form of maximally entangled states. Allowing the receiver to have side information $B$ as well, leads to a similar set of equations describing the rate region associated with the task of multiparty state merging.

\begin{Theorem}[$m$-Party Quantum State Merging \cite{Merge}] \label{thm:statemerging}
Let $\initstate$ be a pure state shared between $m$ senders
$C_1,C_2,\ldots,C_m$ and a receiver Bob, with purifying system
$R$. Then, the rate
$\overrightarrow{R}:=(R_1,R_2,\ldots,R_m)$ is achievable for
multiparty merging  iff the inequality \be \label{eq:Mainthm3} \sum_{i \in {\cal
T}} R_i \geq S({\cal T}|\overline{{\cal T}}B)_{\psi} \ee holds for
all non empty subsets ${\cal T} \subseteq \{1,2,...,m\}$.
\end{Theorem}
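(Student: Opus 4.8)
The statement is an ``if and only if'', so the plan is to treat the converse (necessity of the cut inequalities) and the achievability (sufficiency) separately. For the converse I would, for each nonempty ${\cal T}$, relax the problem in two harmless ways: hand all systems in $\overline{\cal T}$ together with the entanglement those senders share with Bob to the receiver, and allow the senders in ${\cal T}$ to collaborate as a single party Alice. Both relaxations can only lower the entanglement required, so the given $m$-party protocol, read in this coarser setting, becomes a two-party LOCC merging of ${\cal T}$ into a receiver holding $\overline{\cal T}B$, whose net entanglement cost across the cut separating Alice from the receiver is exactly $\sum_{i\in{\cal T}}R_i$. Comparing with the optimal single-party merging cost then forces $\sum_{i\in{\cal T}}R_i\geq S({\cal T}\,|\,\overline{\cal T}B)_\psi$, and ranging over all $2^m-1$ nonempty subsets yields the full system of inequalities. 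The single-party bound itself is the converse of \cite{Merge}, which I would re-derive by tracking an LOCC-monotone measuring the correlations between Alice's side and the inaccessible reference $R$ through the protocol.

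For achievability I would use a simultaneous random-measurement strategy rather than merging the senders sequentially. Each sender $C_i$ adjoins its half of the shared entanglement, applies a Haar-random unitary to its enlarged system, and performs an incomplete measurement projecting onto a subspace $C^1_i$ of dimension $L_i$ chosen to realize the target rate $R_i=\log K_i-\log L_i$. The crux is a decoupling estimate: I would bound the expectation, over the product of Haar measures, of $\bigl\|\psi^{C^1_1\ldots C^1_m R}_J-\tau^{C^1_1}\otimes\cdots\otimes\tau^{C^1_m}\otimes\psi^R\bigr\|_1$ and show that it splits into a sum of contributions indexed by the nonempty cuts ${\cal T}$. Each such contribution is governed by a collision/min-entropy quantity and becomes exponentially small as soon as $\sum_{i\in{\cal T}}R_i$ strictly exceeds $S({\cal T}\,|\,\overline{\cal T}B)_\psi$; for the input $\sigma^{\otimes n}$ this is obtained by combining the one-shot smooth min-entropy bound (generalizing the single-party analysis of \cite{Berta}) with the quantum asymptotic equipartition property, which sends $\frac{1}{n}H^{\eps}_{\min}({\cal T}|\overline{\cal T}B)_{\sigma^{\otimes n}}$ to $S({\cal T}|\overline{\cal T}B)_\sigma$. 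Once $C^1_M$ is decoupled from $R$, Bob holds a purification of $R$ and recovers $\mergestate$ together with the leftover maximally entangled states $\Phi^L$ by a local isometry; taking the closure of the achievable set then absorbs the boundary of the region.

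The main obstacle is precisely this simultaneous decoupling. The naive strategy merges one sender at a time and therefore only reaches the vertices of the polytope, requiring time-sharing to interpolate between them, a primitive unavailable with a single copy. The delicate point is to show that one joint random measurement, with independently tunable leftover dimensions $L_i$, decouples $C^1_M$ from $R$ for \emph{all} $2^m-1$ cuts at once, so that any interior rate-tuple satisfying the inequalities is reached directly. This forces the error bound to be organized as a sum over subsets, each cut contributing its own entropic penalty, and requires the smoothing parameters to be chosen uniformly so that the total error still vanishes in the limit. I expect verifying this multiparty decoupling bound---and the bookkeeping ensuring each cut's penalty is governed by the conditional quantity $S({\cal T}\,|\,\overline{\cal T}B)_\psi$ rather than by an unconditioned entropy---to be the technical heart of the argument.
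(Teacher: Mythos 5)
Your converse is sound: coarse-graining the senders in ${\cal T}$ into a single party and handing $\overline{\cal T}B$, together with its share of the entanglement, to the receiver can only make the task easier, so the bipartite converse of \cite{Merge} applied cut by cut yields eq.~(\ref{eq:Mainthm3}); the paper itself simply cites \cite{Merge} at this step. Your achievability also follows the paper's strategy in outline: simultaneous Haar-random incomplete measurements by all senders, a decoupling estimate organized as a sum over the $2^m-1$ nonempty cuts, and an Uhlmann isometry by which Bob recovers $\mergestate \otimes \Phi^L$.

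The gap is in how you pass from the one-shot decoupling bound to the i.i.d.\ rates. You invoke a ``one-shot smooth min-entropy bound (generalizing the single-party analysis of \cite{Berta})'' in which each cut's error contribution is controlled by $H^{\epsilon}_{\min}({\cal T}|\overline{\cal T}B)$, and then let the asymptotic equipartition property send $\frac{1}{n}H^{\epsilon}_{\min}$ to the conditional von Neumann entropy. But the multiparty bound that the random-measurement technique actually delivers (Lemma \ref{Lemma:oneshotdecouple} and Theorem \ref{thm:cost1}) involves \emph{unsmoothed} min-entropies $H_{\min}(\psi^{{\cal T}R}|\psi^R)$, all conditioned on one common state; upgrading this to independently smoothed min-entropies, one per cut, is exactly Conjecture \ref{conj:cost1}, which the paper explicitly leaves open (``the monolithic nature of the protocol does not naturally permit tailoring the smoothing state term-by-term''). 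Without smoothing the step fails quantitatively: by additivity, $\frac{1}{n}H_{\min}$ of an i.i.d.\ state equals the single-copy $H_{\min}$, which is in general strictly smaller than the conditional von Neumann entropy, so applying the unsmoothed bound directly to $\sigma^{\otimes n}$ gives strictly suboptimal rates. The paper's way out is to do the smoothing \emph{before} decoupling, by a single operation that serves all cuts at once: project $\psi^{\otimes n}$ onto the tensor product of typical subspaces (Schumacher compression), so that every purity $\Tr \big[ (\Psi^{R{\cal T}})^2 \big]$ appearing in the unsmoothed bound of Proposition \ref{prop:isometry} is simultaneously bounded by $2^{-n(S(R{\cal T})_{\psi}-O(\delta))}$, and then use the gentle measurement lemma to carry the protocol back to $\psi^{\otimes n}$. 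Your closing remark that the smoothing parameters must be ``chosen uniformly'' gestures at this but does not supply the mechanism; the difficulty is the smoothing \emph{states} (one common state valid for all cuts), not the parameters, and the typical projection is precisely what resolves it.
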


The theorem was proved in \cite{Merge} by showing that the corner points of the region are achievable and then using time-sharing to interpolate between them. In addition to recovering the result without time-sharing, we will extend it to the one-shot setting. Time-sharing, which consists of partitioning a large supply of states and applying different protocols to each subset, is impossible if only a single copy of a state is available. Instead, we will construct a single protocol which merges all shares at once.

\section{Conditions for Merging Many Parties} \label{sec:merging-many}

Let's try to construct an LOCC operation $\Lambda^m_{\rightarrow} : C_M C^0_M \otimes BB^0_M \rightarrow C^1_M \otimes B^1_MBB_M$ that when applied to the state $\psi^{C_1C_2\ldots C_mB} \otimes \Phi^{K}$ will achieve merging, and destroy all existing correlations between the senders' shares $C_1,C_2,\ldots,C_m$ at the same time. It can be seen as a three step process:
\begin{itemize}
 \item First, each sender $C_{i}$ applies a quantum instrument ${{\cal I}_i} :=\{\calE^i_j\}^X_{j=1}$ to his share of the state $\initstate \otimes \Phi^K$. This will yield both quantum and classical outputs. Each operator $\calE^i_j$ for the instrument ${\cal I}_i$ is completely positive, and maps the space $C_iC^0_i$ to the subspace $C^1_i$.
 \item Secondly, the senders $C_1,C_2,\ldots,C_m$ send their classical outputs $J:=(j_1,j_2,\ldots,j_m)$ to the decoder $B$.
 \item Finally, the decoder will use his side information $\psi^{B}$, his share of the maximally entangled states $\{\Phi^{K_i}\}$, and the classical information $J$ to perform a decoding operation ${\cal D}_{J}:BB^0_M\rightarrow B^1_MBB_M$ (i.e a trace-preserving completely positive map (TP-CPM)) and recover the state $\initstate \otimes \Phi^L$.
\end{itemize}
The state of the systems $C^1_MB^0_MBRX$ after steps 1 and 2 are performed can be written as:
 \begin{equation}\label{eq:LOCC}
   \begin{split}
     \psi^{C^1_MB^0_MBRX} &:= \sum_{J:=j_1j_2\ldots j_m} [(\mathrm{id}^{B^0_MBR} \otimes \calE_{J})(\initstate \otimes \Phi^K)]^{C^1_MB^0_MBR} \otimes \braket{J}^X \\
   &=\sum_J p_J \psi_J^{C^1_MB^0_MBR} \otimes \braket{J}^X, \\
   \end{split}
 \end{equation} where $\calE_J := {\calE}^1_{j_1} \otimes {\calE}^2_{j_2} \otimes \ldots \otimes {\calE}^m_{j_m}$ and $\psi_J^{C^1_MB^0_MBR}$ is the normalized state given by $(\mathrm{id}^{B^0_MBR} \otimes \calE_{J})(\initstate \otimes \Phi^K)$. The system $X$ is an ancillary system held by the receiver which contains the classical outputs of the instruments ${\cal I}_1,{\cal I}_2,\ldots,{\cal I}_m$. If we restrict the operators $\calE^i_j$ to consist of only one Kraus operator (i.e $\calE^i_j(\rho)=A^i_j \rho (A^i_j)^{\dag}$ for all $i,j$) and to satisfy $\sum_j (A^i_j)^{\dag}A^i_j = I^{C_i}$, the outcome states $\{\psi^{C^1_MB^0_MBR}_J\}$ are pure and are the result of performing $m$ incomplete measurements, one for each sender $C_i$.

After the senders have finished performing their instruments, we
would ideally like for the state $\psi^{C^1_MR}_J$ to be in the
product form \begin{equation}
 \begin{split}
\reducstate &=
\tau^{C^1_1} \otimes \tau^{C^1_2} \otimes \ldots \otimes
\tau^{C^1_m} \otimes \psi^{R}, \\
&= \tau^{C^1_M} \otimes \psi^R, \\
 \end{split}
\end{equation}
where $\tau_{C^1_i}$ is the maximally mixed state of dimension
$L_i=d_{C^1_i}$ on the space $C^1_i$. Suppose, for the moment, that this
property is satisfied for all $\{\reducstate\}$. Then, the state
$\projstate$ purifies $\tau^{C^1_M} \otimes \psi^R$, with
purification systems $B^0_MB$. Another purification of
$\tau^{C^1_M} \otimes \psi^R$ is also given by $\Phi^{L} \otimes
\mergestate$, where the state $\mergestate$ corresponds to the
original state $\initstate$ with the system $B_M$ substituted for
$C_M$. It follows from the Schmidt decomposition that these two
purifications are related by an isometry $U_J: B^0_MB \rightarrow
B^1_MBB_M$ on Bob's side such that
\begin{equation}
 \begin{split}
(I^{C^1_MR} \otimes U_J) \projstate (I^{C^1_MR} \otimes U_J)^{\dag} &= \Phi^{L_1}
\otimes \Phi^{L_2} \otimes \ldots \otimes \Phi^{L_m}
\otimes \mergestate,  \\
 &= \Phi^{L} \otimes \mergestate. \\
 \end{split}
\end{equation}
Hence, if the senders can perfectly decouple their systems $C_MC^0_M$ from
the reference, their "$R$-entanglement" will be transferred to
Bob's location. Furthermore, by applying $U_J$, the receiver will
recover the original state and distill some pure bipartite entanglement.

The previous scenario was ideal, and in general, will not be feasible for most states $\initstate$. Hence, we relax our decoupling requirement and accept that the measurements performed by the senders will perturb the reference $\psi^{R}$ up to some tolerable amount, and that a small dose of correlations between the senders' shares might still be present. In more formal terms, we have

\begin{Proposition}[Compare to Proposition 4 of \cite{Merge}] \label{prop:mergeCond}
 Let $\psi_J^{C^1_MB^0_MBR}$ be defined as in eq.~(\ref{eq:LOCC}), with reduced density matrix
 $\psi_J^{C^1_MR}$. Define the following quantity:
  \begin{equation}\label{quanterror:eq}
    Q_{{\cal I}}(\initstate \otimes \Phi^K) := \sum_{J} p_{J} \biggl \| \psi_J^{C^1_MR} -
    \tau^{C^1_M} \otimes \psi^{R}\biggr \|_1,
  \end{equation}where $p_{J}$ is the probability of obtaining the state $\psi_J^{C^1_MB^0_MBR}$ after all the senders have performed their instruments.
If $Q_{\cal I}(\initstate \otimes \Phi^K) \leq \epsilon$, then there exists an LOCC operation $\Lambda^m_{\rightarrow}$ which is an $m$-party state merging protocol for the state $\initstate$ with error $2\sqrt{\epsilon}$ and entanglement cost $\overrightarrow{E} = (\log K_1-\log L_1, \log K_2 -\log L_2, \ldots, \log K_m - \log L_m)$.
\end{Proposition}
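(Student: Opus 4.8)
The plan is to follow the standard route from a decoupling condition to a merging protocol, treating each measurement outcome $J$ separately and then averaging. The hypothesis $Q_{{\cal I}}(\initstate \otimes \Phi^K) \le \epsilon$ says precisely that, writing $\epsilon_J := \| \reducstate - \decouplestate \|_1$, the average satisfies $\sum_J p_J \epsilon_J \le \epsilon$. So for each $J$ the conditional reduced state $\reducstate$ is close in trace norm to the decoupled product $\decouplestate$, and the whole difficulty is to promote this closeness of \emph{reduced} states to closeness of the \emph{full} states the receiver actually holds, through an operation he can apply knowing only $J$.

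First I would set up the two purifications to which Uhlmann's theorem applies. For fixed $J$, the state $\projstate$ is a purification of $\reducstate$ with purifying systems $B^0_M B$, while $\Phi^L \otimes \mergestate$ is a purification of $\decouplestate$ with purifying systems $B^1_M B_M B$ (indeed $\Phi^L$ reduces to $\tau^{C^1_M}$ on $C^1_M$ and $\mergestate$ reduces to $\psi^R$ on $R$). Converting the trace-norm bound to fidelity via the Fuchs--van de Graaf inequalities gives $F(\reducstate, \decouplestate) \ge 1 - \epsilon_J/2$, and Uhlmann's theorem then furnishes an isometry $U_J : B^0_M B \to B^1_M B B_M$, acting only on the receiver's registers, for which the purifications attain the same fidelity. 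Converting back yields
\be
 \big\| (I^{C^1_M R}\otimes U_J)\,\projstate\,(I^{C^1_M R}\otimes U_J)^{\dag} - \Phi^L \otimes \mergestate \big\|_1 \le 2\sqrt{\epsilon_J},
\ee
since $1-(1-\epsilon_J/2)^2 \le \epsilon_J$; a short dimension count ($d_{B^1_M} d_{B_M} \ge d_{B^0_M}$) confirms such an isometry exists.

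With the $U_J$ in hand I would assemble $\Lambda^m_{\rightarrow}$ as the three-step LOCC map described above: the senders apply their instruments ${\cal I}_i$, broadcast the outcome $J$, and the receiver performs the isometry $U_J$ conditioned on the classical register $X=J$, finally discarding $X$. This is manifestly LOCC, and since each sender consumes a maximally entangled state of rank $K_i$ and recovers one of rank $L_i$ inside $\Phi^L$, the entanglement cost is exactly $(\log K_1 - \log L_1,\ldots,\log K_m - \log L_m)$. Tracing out $X$ can only decrease the trace norm, so by the triangle inequality over the branches (weighted by $p_J$, which sum to one) the merging error is bounded by $\sum_J p_J \cdot 2\sqrt{\epsilon_J}$.

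The final step is the averaging, where concavity of the square root (Jensen's inequality) gives $\sum_J p_J \sqrt{\epsilon_J} \le \sqrt{\sum_J p_J \epsilon_J} = \sqrt{Q_{{\cal I}}(\initstate\otimes\Phi^K)} \le \sqrt{\epsilon}$, so the total error is at most $2\sqrt{\epsilon}$, as claimed. The main obstacle is the Uhlmann step: one must verify that the freedom in Uhlmann's theorem is realizable by an isometry on the receiver's systems \emph{alone} (keeping the overall map LOCC) while simultaneously producing the distilled state $\Phi^L$ with the correct ranks $L_i$; the per-branch fidelity-to-trace-distance conversions and the concavity argument are then routine.
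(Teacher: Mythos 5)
Your proof is correct and follows essentially the same route as the paper: per-outcome Uhlmann isometries $U_J$ acting only on Bob's registers, assembled into the three-step LOCC protocol, with Fuchs--van de Graaf conversions between trace distance and fidelity. The only (immaterial) difference is bookkeeping: the paper averages the fidelities over $J$, uses concavity of the fidelity to lower-bound the fidelity of the mixed output state, and converts to trace distance once at the end, whereas you convert back to trace distance within each branch and then apply convexity of the trace norm together with Jensen's inequality on the square root; both orderings yield the identical bound $2\sqrt{\epsilon}$.
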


\begin{proof} The proof of the above statement is very similar to the proof of Proposition 4 in \cite{Merge}. We give the full proof here for completeness. Using Lemma \ref{Lemma:relation} (see Appendix~\ref{app:facts}), we have
\begin{equation}
\sum_{J} p_{J} F(\reducstate,\decouplestate) \geq 1 - \frac{\epsilon}{2}.
\end{equation}
By Ulhmann's theorem, we know there exist an isometry (i.e. a
decoding) $U_J:B^0_MB \rightarrow B^1_MBB_M$ implementable by Bob
such that
\begin{equation}
F(\reducstate,\decouplestate) = F\biggl ((I^{C^1_MR} \otimes
U_J)\projstate (I^{C^1_MR} \otimes U_J)^{\dag},\Phi^{L} \otimes \mergestate \biggr ).
\end{equation}
Thus, using the concavity of $F$ (see \cite{Nielsen} for a proof) in its first argument, we have
\begin{equation}
  \begin{split}
  F(&\psi^{C^1_MB^1_MB_MBR}, \Phi^{L} \otimes \mergestate)\\
  & \phantom{======} \geq  \sum_{J}p_{J} F\biggl ((I^{C^1_MR} \otimes
U_J)\projstate (I^{C^1_MR} \otimes U_J)^{\dag},\Phi^{L} \otimes \mergestate \biggr )  \\
  & \phantom{======} \geq 1- \frac{\epsilon}{2}, \\
   \end{split}
\end{equation} where
  \begin{equation}
    \begin{split}
   \psi_{C^1_MB^1_MBB_MR} &= (\mathrm{id}_{R} \otimes \Lambda^m_{\rightarrow})(\initstate \otimes \Phi^K) \\
    &:= \sum_{J}p_{J} (I^{C^1_MR} \otimes U_J)\projstatebraket (I^{C^1_MR} \otimes U_J)^{\dag} \\
    \end{split}
  \end{equation}
is the output state of the protocol. Using the relation between fidelity and trace distance
once more, we arrive at
 \begin{equation}
 \biggl \| \psi^{C^1_M B^1_MBB_MR} -  \Phi^L \otimes \mergestate \biggr \|_1 \leq 2\sqrt{\epsilon-\epsilon^2/4} \leq 2\sqrt{\epsilon}.
 \end{equation}
\end{proof}

\section{One-Shot Merging by Random Measurements} \label{sec:random-meas}

One possible strategy for decoupling the system $C_i$ from the
reference $R$ and the other systems $\{C_j : j \neq i\}$ is to
perform a random von Neumann measurement on $C_i$ with
$N_i=\lfloor \frac{d_{C_i}}{L_i}\rfloor$ projectors of rank $L_i$,
and a little remainder, followed by a unitary $U_i$ mapping the
outcome state to a subspace $C^1_i$ of $C_iC^0_i$. For such
measurements, we can bound the quantum error $Q_{\cal
I}(\initstate \otimes \Phi^K)$ as follows:

\begin{Proposition}[One-Shot Multiparty Merging] \label{prop:isometry}
Let $\initstate$ be a multipartite state, with local dimensions
$d_{B},d_{R}$ and $d_{C_i}, 1 \leq i \leq m$, and let $\Phi^K$ be
some additional pure entanglement, as defined in the previous
sections, shared between the receiver and the senders. For each
sender $C_i$, there exists an instrument ${\cal I}_i$ consisting of
$N_i := \lfloor \frac{d_{C_i}K_i}{L_i} \rfloor$ CP maps \be
{\calE^i_j}(\rho):=P^i_j \rho (P^i_j)^{\dag} \phantom{==} 1\leq j
\leq N_i, \ee where $P^i_j:C_i \rightarrow C^1_i$ is a partial
isometry of rank $L_i$ (i.e. $(P^i_j)^{\dag} P^i_j $ is a
projector onto an $L_i$ dimensional subspace of $C_i$), and one
map $\calE^i_0(\rho):=P^i_0 \rho (P^i_0)^{\dag}$, where $P^i_0$ is
of rank $L_i'= d_{C_i}K_i - N_i L_i < L_i$, such that the overall
quantum error $Q_{{\cal I}}(\initstate \otimes \Phi^K)$ is bounded
by
\begin{equation} \label{eq:upperbound}
  \begin{split}
  Q_{\cal I}(\initstate \otimes \Phi^K) &\leq 2 \sum_{\substack{{\cal T} \subseteq \{1,2,...,m\} \\ {\cal T} \notin \emptyset}} \prod_{i \in {\cal T}}\frac{L_i}{d_{C_i}K_i} + 2\sqrt{d_{R} \sum_{\substack{{\cal T} \subseteq \{1,2,\ldots,m\} \\ {\cal T}
\neq \emptyset}}  \prod_{i \in {\cal T}} \frac{L_i}{K_i} \Tr \bigg [ (\psi^{R{\cal T}})^2 \bigg ]} =: \Delta_{\cal I}, \\
  \end{split}
\end{equation}
and there is a merging protocol with error at most $2\sqrt{\Delta_{\cal I}}$. In fact, for each sender $C_i$, if we
perform a random von Neumann measurement on $C_i$ followed by a
unitary $U$ mapping the outcome state to a subspace $C^1_i$, the left hand side of eq.~(\ref{eq:upperbound}) is
bounded from above on average by the right hand side.
\end{Proposition}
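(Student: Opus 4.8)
The plan is to generate each sender's instrument ${\cal I}_i$ by drawing a Haar-random unitary $V_i$ on $C_iC^0_i$, measuring in a fixed partition of $C_iC^0_i$ into $N_i$ blocks of dimension $L_i$ together with the leftover block of dimension $L_i'$ (this last block realizes $\calE^i_0$), and relabelling the selected block as $C^1_i$ by a fixed unitary $U$. Writing $\mathbb{E}$ for the average over $V_1,\dots,V_m$, I would establish the averaged bound $\mathbb{E}\,Q_{\cal I}(\initstate\otimes\Phi^K)\le\Delta_{\cal I}$ of eq.~(\ref{eq:upperbound}); a single good instrument then exists by the probabilistic method, and Proposition~\ref{prop:mergeCond} converts it into a merging protocol of error $2\sqrt{\Delta_{\cal I}}$. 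Everything thus reduces to estimating the expected quantum error.

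I would split that error into a bias and a fluctuation. Let $\ov{\psi}^{C^1_MR}:=\mathbb{E}\sum_J p_J\,\reducstate$ be the doubly averaged output. By the triangle inequality and $\sum_Jp_J=1$,
\begin{equation}
\mathbb{E}\,Q_{\cal I}\le\bigl\|\ov{\psi}^{C^1_MR}-\decouplestate\bigr\|_1+\mathbb{E}\sum_Jp_J\bigl\|\reducstate-\ov{\psi}^{C^1_MR}\bigr\|_1.
\end{equation}
The first (bias) term I would evaluate by a first-moment Haar computation: averaging a single random projective measurement drives the marginal on $C^1_iR$ toward $\tau^{C^1_i}\otimes(\cdot)$, and since the $m$ measurements are independent the product structure expands into a sum over subsets ${\cal T}\subseteq\{1,\dots,m\}$. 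The empty set reproduces $\decouplestate$ and cancels, so only nonempty ${\cal T}$ contribute, each carrying a factor $\prod_{i\in{\cal T}}L_i/(d_{C_i}K_i)$. This yields the first term of $\Delta_{\cal I}$ directly in trace norm, without a square root.

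For the fluctuation term I would pass to the Hilbert--Schmidt norm $\|X\|_2:=\sqrt{\Tr(X^\dagger X)}$. Since $\reducstate-\ov{\psi}^{C^1_MR}$ is supported on $C^1_MR$, Cauchy--Schwarz contributes a factor $\sqrt{d_{C^1_M}d_R}$ with $d_{C^1_M}=\prod_iL_i$, and two uses of Jensen's inequality (for the $p_J$-weighted sum and for $\mathbb{E}$) move everything under one square root:
\begin{equation}
\mathbb{E}\sum_Jp_J\bigl\|\reducstate-\ov{\psi}^{C^1_MR}\bigr\|_1\le\sqrt{d_{C^1_M}d_R\;\mathbb{E}\sum_Jp_J\bigl\|\reducstate-\ov{\psi}^{C^1_MR}\bigr\|_2^2}.
\end{equation}
The bracketed variance is a degree-two function of the post-measurement state, so I would evaluate it with the second-moment (swap-trick) Haar formula applied independently to each $V_i$. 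Expanding the $m$-fold product again organizes the answer as a sum over nonempty ${\cal T}$, each term proportional to $\prod_{i\in{\cal T}}(L_i/K_i)\,\Tr[(\psi^{R{\cal T}})^2]$ and divided by $d_{C^1_M}$. The seemingly dangerous prefactor $d_{C^1_M}$ then cancels against this $1/d_{C^1_M}$ --- equivalently against $\Tr[(\tau^{C^1_M})^2]=1/d_{C^1_M}$ --- which is exactly why $d_R$ and the purities $\Tr[(\psi^{R{\cal T}})^2]$ survive in eq.~(\ref{eq:upperbound}) while $\prod_iL_i$ does not. This reproduces the second term of $\Delta_{\cal I}$.

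The main obstacle I anticipate is the second-moment Haar average over $m$ independent random measurements: expanding the $m$-fold tensor product of single-sender twirls generates $2^m$ cross terms, and correctly matching each to its subset ${\cal T}$, its purity $\Tr[(\psi^{R{\cal T}})^2]$, and the right powers of $L_i,d_{C_i},K_i$ is the delicate bookkeeping at the heart of the estimate. A secondary nuisance is the remainder map $\calE^i_0$ of rank $L_i'<L_i$, which breaks the exactly-uniform block structure; I would bound its contribution by a separate crude estimate and absorb it into the nonempty-${\cal T}$ sums, verifying that it does not degrade the stated constants.
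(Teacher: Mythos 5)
Your setup (Haar-random partial isometries obtained by preceding a fixed block decomposition of $C_iC^0_i$ with a random unitary, the probabilistic method, and conversion into a protocol via Proposition \ref{prop:mergeCond}) matches the paper's, but the central step of your estimate has a genuine gap: the $p_J$-weighted fluctuation term is \emph{not} a degree-two polynomial in the random unitaries, so the second-moment (swap-trick) Haar formula cannot evaluate it. Write $\omega_J$ for the subnormalized outcome operator, $p_J=\Tr[\omega_J]$, so that $\psi_J^{C^1_MR}=\omega_J/p_J$. Then, using $\sum_Jp_J=1$,
\begin{equation}
\sum_J p_J\bigl\|\psi_J^{C^1_MR}-\ov{\psi}\bigr\|_2^2
=\sum_J\frac{\Tr[\omega_J^2]}{p_J}
-2\,\Tr\Bigl[\Bigl(\textstyle\sum_J\omega_J\Bigr)\ov{\psi}\Bigr]
+\Tr[\ov{\psi}^2].
\end{equation}
The middle term is linear in the $\omega_J$ and the last is deterministic, but the first sum carries the random normalization $p_J$ in the denominator: its expectation is an average of a \emph{ratio} of polynomials in the $U_i$, which no second-moment computation gives you. (Your bookkeeping of prefactors, where $d_{C^1_M}$ cancels against $1/d_{C^1_M}$, implicitly uses $p_J\approx L/(d_{C_M}K)$ — and that concentration statement is exactly what still needs proof.) If you instead clear denominators by applying Cauchy--Schwarz with uniform weights over the $\prod_iN_i$ outcomes, you must compute $\mathbb{E}\,\|\omega_J-p_J\ov{\psi}\|_2^2$, whose cross terms $\mathbb{E}[p_J\Tr(\omega_J\ov{\psi})]$ and $\mathbb{E}[p_J^2]$ involve the two-copy twirl of $\omega_J\otimes\omega_J$ with nontrivial contractions — doable in principle, but a different and messier calculation than the one you describe.

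The repair, and the route the paper takes, is never to take moments of normalized states. For outcomes with no remainder block, the Haar average of the subnormalized operator is exactly the scaled target, $\mathbb{E}[\omega_J]=\frac{L}{d_{C_M}K}\tau^{C^1_M}\otimes\psi^R$ (with $L=\prod_iL_i$, $K=\prod_iK_i$), so Lemma \ref{Lemma:rdmisometry} applies to each outcome and the swap trick legitimately bounds $\mathbb{E}\sum_J\|\omega_J-\frac{L}{d_{C_M}K}\tau^{C^1_M}\otimes\psi^R\|_1$. Tracing that same inequality controls $\mathbb{E}\sum_J|p_J-\frac{L}{d_{C_M}K}|$, and the triangle inequality
\begin{equation}
p_J\bigl\|\psi_J^{C^1_MR}-\tau^{C^1_M}\otimes\psi^R\bigr\|_1
\le\Bigl\|\omega_J-\tfrac{L}{d_{C_M}K}\tau^{C^1_M}\otimes\psi^R\Bigr\|_1
+\Bigl|p_J-\tfrac{L}{d_{C_M}K}\Bigr|
\end{equation}
then yields the second term of eq.~(\ref{eq:upperbound}), its factor of $2$ arising precisely from this two-part bound. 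Note also that the first term of $\Delta_{\cal I}$ is not a ``bias'' of the averaged state (for the good outcomes there is no bias at all): it is the crude bound — trace distance at most $2$ times expected outcome probability — on the contribution of outcomes involving a remainder map $\calE^i_0$, which is also the correct way to absorb what you call the secondary nuisance. With these changes your argument coincides with the paper's proof.
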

\newcommand \omeg {\omega^{C^1_MR}}
\newcommand \omegapos {\omega^{C'^1_MR'}}
\newcommand \omegsquare {(\omega^{C^1_MR})^2}
\newcommand \decouple {\frac{L}{d_{C_M}} \tau^{C^1_M} \otimes \psi^{R}}
\newcommand \decoupled {\frac{L}{d_{C_M}K} \tau^{C^1_M} \otimes \psi^{R}}

\newcommand {\expec}[1]{E \bigg [ #1 \bigg ]}
\newcommand {\smexpec}[1]{E[#1]}

To prove this proposition, we will need the following technical
lemma, which generalizes Lemma 6 in \cite{Merge} to the case of
$m$ senders. The proof will follow a similar line of reasoning.
\begin{Lemma}[Compare to Lemma 6 in \cite{Merge}] \label{Lemma:rdmisometry}
For each sender $C_i$, let $P_i: C_i \rightarrow C^1_i$ be a random partial isometry of rank $L_i$.
One way to construct such an isometry is to fix some rank $L_i$-projector $Q_i$ onto a subspace $C^1_i$ of $C_i$ and precede it with a Haar
distributed unitary $U_i$ on $C_i$ (i.e $P_i := Q_i U_i$). Define the subnormalized density matrix
\begin{equation}
 \omeg(U)  := (Q_1U_1 \otimes Q_2U_2 \otimes \ldots \otimes Q_mU_m \otimes I_{R}) \psi^{C_MR}
(Q_1U_1 \otimes Q_2U_2  \otimes \ldots \otimes Q_mU_m \otimes I_{R})^{\dag},
\end{equation}
where $U:=U_1 \otimes U_2 \otimes \ldots \otimes U_m$. Then, we have
\begin{equation}\label{eq:decoupling}
\displaystyle
\int_{\mathbb{U}(C_1)}\int_{\mathbb{U}(C_2)}\cdots\int_{\mathbb{U}(C_m)}
\bigg \| \omeg(U)  - \decouple \bigg \|_1 dU \leq
\frac{L}{d_{C_M}} \sqrt{d_{R} \sum_{\substack{{\cal T} \subseteq
\{1,2,\ldots,m\}\\ {\cal T} \neq \emptyset}}  \prod_{i \in {\cal
T}} L_i \Tr \bigg [ (\psi^{R{\cal T}})^2 \bigg ]},
\end{equation}
where $dU:=dU_1dU_2\ldots dU_m, \int dU_i=1$ and $L := \prod_i
L_i$.
\end{Lemma}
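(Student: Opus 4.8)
The plan is to run the standard decoupling estimate, bounding the averaged trace distance through the Hilbert--Schmidt norm. First I would pass from the $1$-norm to the $2$-norm by Cauchy--Schwarz, $\|M\|_1 \leq \sqrt{\mathrm{rank}(M)}\,\|M\|_2$, noting that $\omeg(U) - \decouple$ lives on $C^1_MR$, which has dimension $L\,d_R$; combining this with concavity of the square root (Jensen) to pull the Haar average inside gives
\begin{equation}
\int \big\| \omeg(U) - \decouple \big\|_1\, dU \;\leq\; \sqrt{L\,d_R}\;\sqrt{\int \big\| \omeg(U) - \decouple \big\|_2^2\, dU}.
\end{equation}
The next point is that $\decouple$ is exactly the Haar mean of $\omeg(U)$: since $\int_{\mathbb{U}(C_i)} U_i\,X\,U_i^\dagger\,dU_i = \frac{\Tr X}{d_{C_i}}I^{C_i}$ and the $U_i$ are independent, each sender contributes the replacement $X \mapsto \frac{L_i}{d_{C_i}}(\Tr_{C_i}X)\,\tau^{C^1_i}$, so $\int \omeg(U)\,dU = \decouple$. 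Hence the averaged squared $2$-norm collapses to $\int \Tr\,\omegsquare\,dU - \Tr[(\decouple)^2]$, and the whole problem reduces to the second moment.

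For the second moment I would use the swap trick $\Tr\,\omegsquare = \Tr[\mathbb{F}\,(\omeg\otimes\omeg)]$, where $\mathbb{F}=\bigotimes_i \mathbb{F}_{C^1_i}\otimes\mathbb{F}_R$ is the swap between the two copies. Writing $P_i=Q_iU_i$ and using cyclicity to conjugate the swap by the isometries, the elementary identity $\mathbb{F}_{C^1_i}(P_i\otimes P_i)=(P_i\otimes P_i)\mathbb{F}_{C_i}$ yields, factor by factor, $(P_i^{\otimes 2})^\dagger\,\mathbb{F}_{C^1_i}\,P_i^{\otimes 2}=(U_i^\dagger\Pi_iU_i)^{\otimes 2}\,\mathbb{F}_{C_i}$ with $\Pi_i:=Q_i^\dagger Q_i$ the rank-$L_i$ projector on $C_i$. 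The average then factorizes over the senders, and I would apply the two-copy Haar twirl $\int (U_i^\dagger\Pi_iU_i)^{\otimes 2}\,dU_i = a_i\,I_{C_i}+b_i\,\mathbb{F}_{C_i}$, whose coefficients depend only on $\Tr[\Pi_i\otimes\Pi_i]=L_i^2$ and $\Tr[(\Pi_i\otimes\Pi_i)\mathbb{F}_{C_i}]=L_i$, giving $a_i=\frac{L_i(d_{C_i}L_i-1)}{d_{C_i}(d_{C_i}^2-1)}$ and $b_i=\frac{L_i(d_{C_i}-L_i)}{d_{C_i}(d_{C_i}^2-1)}$. Since $\mathbb{F}_{C_i}^2=I$, each averaged factor becomes $a_i\mathbb{F}_{C_i}+b_i I_{C_i}$, so that
\begin{equation}
\int \Tr\,\omegsquare\,dU = \Tr\Big[(\psi^{C_MR})^{\otimes 2}\,\Big(\bigotimes_i (a_i\mathbb{F}_{C_i}+b_i I_{C_i})\Big)\otimes\mathbb{F}_R\Big].
\end{equation}

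Expanding the tensor product over senders indexes the terms by subsets ${\cal T}\subseteq\{1,\dots,m\}$ (choose $\mathbb{F}_{C_i}$ for $i\in{\cal T}$ and $I_{C_i}$ otherwise), and the reverse swap trick identifies the ${\cal T}$-term with $\Tr[(\psi^{R{\cal T}})^2]$, so $\int \Tr\,\omegsquare\,dU=\sum_{\cal T}\big(\prod_{i\in{\cal T}}a_i\big)\big(\prod_{i\notin{\cal T}}b_i\big)\Tr[(\psi^{R{\cal T}})^2]$. The proof finishes with two elementary coefficient bounds, both consequences of $L_i\leq d_{C_i}$: one checks $a_i\leq L_i^2/d_{C_i}^2$ and $b_i\leq L_i/d_{C_i}^2$. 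The empty-set term then satisfies $\prod_i b_i\,\Tr[(\psi^R)^2]\leq \frac{L}{d_{C_M}^2}\Tr[(\psi^R)^2]=\Tr[(\decouple)^2]$, so it is cancelled by the subtracted $\Tr[(\decouple)^2]$, while each nonempty term is bounded by $\frac{L}{d_{C_M}^2}\prod_{i\in{\cal T}}L_i\,\Tr[(\psi^{R{\cal T}})^2]$. Substituting this back into the Cauchy--Schwarz/Jensen estimate produces exactly the claimed right-hand side.

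I expect the main obstacle to be the second-moment computation: carrying out the two-copy twirl cleanly in the multipartite setting, in particular justifying the commutation $(P_i^{\otimes 2})^\dagger\mathbb{F}_{C^1_i}P_i^{\otimes 2}=(U_i^\dagger\Pi_iU_i)^{\otimes 2}\mathbb{F}_{C_i}$ and tracking which swaps act on $C_i$ versus the lower-dimensional $C^1_i$, since the partial isometries change dimension. By contrast the $2^m$-fold combinatorial expansion and the verification that the empty-set contribution cancels $\Tr[(\decouple)^2]$ are routine once the per-sender factorization is in hand.
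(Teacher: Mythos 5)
Your proposal is correct and follows essentially the same route as the paper's own proof: identify $\frac{L}{d_{C_M}}\tau^{C^1_M}\otimes\psi^{R}$ as the Haar mean of $\omega^{C^1_MR}$ so that the averaged trace norm reduces, via the Cauchy--Schwarz/rank bound $L d_R$ and Jensen, to the variance $\mathbb{E}[\Tr(\omega^{C^1_MR})^2]-\Tr[\mathbb{E}[\omega^{C^1_MR}]^2]$, then compute the second moment by the swap trick and a per-sender two-copy Haar twirl, expand over subsets ${\cal T}$, and cancel the empty-set term against $\Tr[\mathbb{E}[\omega^{C^1_MR}]^2]$. The only (immaterial) discrepancy is in the twirl coefficients: your $a_i=\frac{L_i(d_{C_i}L_i-1)}{d_{C_i}(d_{C_i}^2-1)}$ is the exact value, whereas the paper writes $s_i=\frac{L_i^2(d_{C_i}-1)}{d_{C_i}(d_{C_i}^2-1)}$, but both obey the only estimate actually used, namely $s_i\leq L_i^2/d_{C_i}^2$, so the final bound is unaffected.
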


\begin{proof} For the remainder of this proof, we will write $\int f(U) dU$ as $\mathbb{E}[f(U)]$,
indicating expectation, and abbreviate $\omeg(U)$ by $\omeg$. We have
\begin{equation}
  \begin{split} \label{variance:eq}
 \mathbb{E} \bigg [ \bigg \| \omeg - \decouple \bigg \|^2_2 \bigg ] & =  \mathbb{E}\bigg [\bigg \| \omeg -\mathbb{E}[{\omeg}] \bigg \|^2_2 \bigg ] \\
  & =  \mathbb{E} \bigg [\Tr[\omegsquare]\bigg ] - \Tr\biggl [\mathbb{E}[{\omeg}]^2 \biggr ]. \\
 \end{split}
\end{equation}
To evaluate the average of $\Tr[\omegsquare]$ , we use the following property:
\begin{equation} \Tr[\omegsquare] = \Tr \bigg (
(\omeg \otimes \omegapos )(F^{C^1_MC'^1_M} \otimes F^{RR'}) \bigg ),
\end{equation}
where $F^{C^1_MC'^1_M} := \bigotimes_{i=1}^m F^{C^1_iC'^1_i}$ is a tensor product of swap operators \mbox{$F^{C^1_iC'^1_i} := Q_i \otimes Q_i F^{C_iC'_i} Q_i \otimes Q_i$} exchanging the system $C^1_i$ and a copied version $C'^1_i$. The expectation of $\Tr [\omegsquare]$ then becomes equal to
\newcommand {\swapC} {F^{C^1_MC'^1_M}}
\newcommand \swapR {F^{RR'}}
\newcommand \UU {UU_{C_MC'_M}}
\begin{equation}
  \begin{split}
 \mathbb{E}\bigg [&\Tr[\omegsquare]\bigg ] \\ &= \mathbb{E}\bigg [ \Tr \bigg ((\omeg \otimes \omegapos )(\swapC \otimes \swapR) \bigg )\bigg ] \\
&= \mathbb{E} \bigg [ \Tr \bigg ( (U_{C_M} \otimes U_{C'_M}
\otimes I_{RR'}) ( \psi^{C_MR} \otimes
\psi^{C'_MR'}) (U_{C_M} \otimes U_{C'_M} \otimes I_{RR'})^{\dag}(\swapC \otimes \swapR) \bigg )\bigg ] \\
&= \Tr \bigg ( (\psi^{C_MR} \otimes \psi^{C'_MR'})\mathbb{E} \bigg
[(U_{C_M} \otimes U_{C'_M})^{\dag} \swapC (U_{C_M} \otimes
U_{C'_M}) \bigg ] \otimes \swapR \bigg ), \label{avgomega:eq}
  \end{split}
\end{equation}
where we have used the shorthand $U_{C_M} := U_1 \otimes U_2
\otimes \ldots U_m$, with $U_i$ being a Haar distributed unitary
on $C_i$. The unitary $U_{C'_M}$ is identical to $U_{C_M}$ but
acts on $C'_M$. Observe that the projections $\{Q_i\}$ from the
state $\omega^{C^1_MR}$ are absorbed by the swap operators
$\{F^{C^1_iC'^1_i}\}$ (i.e $F^{C^1_iC'^1_i} = Q_i \otimes Q_i
F^{C^1_iC'^1_i} Q_i \otimes Q_i$). The expectation $\mathbb{E}
\bigg [(U_{C_M} \otimes U_{C'_M})^{\dag} \swapC(U_{C_M} \otimes
U_{C'_M})\bigg ]$  can then be expanded as
\begin{equation} \label{eq:expand1}
 \mathbb{E} \bigg [(U_{C_M} \otimes U_{C'_M})^{\dag} (\swapC) (U_{C_M} \otimes U_{C'_M}) \bigg ] =
   \bigotimes^m_{i=1} \mathbb{E} \bigg [(U^{\otimes 2}_i)^{\dag} F^{C^1_iC'^1_i} U^{\otimes 2}_i \bigg ], \\
\end{equation}
where we have used the shorthand $U^{\otimes 2}_i := U_i \otimes
U_i$. Each of the expected values $\mathbb{E} \bigg [(U^{\otimes
2}_i)^{\dag} F^{C^1_iC'^1_i} U^{\otimes 2}_i\bigg ]$ can be
re-expressed, using an argumentation similar to the one found in
Appendix B of \cite{Merge}, as
\begin{equation} \label{eq:expand2}
\mathbb{E} \bigg [(U^{\otimes 2}_i)^{\dag} F^{C^1_iC'^1_i}
U^{\otimes 2}_i \bigg ] =r_i I^{C_iC'_i} + s_i
F^{C_iC'_i}, \\
\end{equation} where the coefficients $r_i$ and $s_i$ are defined as
\begin{equation}
 \begin{split}\label{coeffs:eq}
   r_i &:= \frac{L_i}{d_{C_i}}  \frac{d_{C_i} - L_i}{d^2_{C_i} - 1} \leq \frac{L_i}{d^2_{C_i}}, \\
 s_i &:= \frac{L_i^2}{d_{C_i}} \frac{d_{C_i} - 1}{d^2_{C_i} -1} \leq
\frac{(L_i)^2}{d^2_{C_i}}. \\
\end{split}
\end{equation} Substituting eqs.~(\ref{eq:expand1}), (\ref{eq:expand2}) and (\ref{coeffs:eq}) into eq.~(\ref{avgomega:eq}), we get
\begin{equation}
  \begin{split}\label{set:eq}
    \mathbb{E} \bigg [\Tr \omegsquare \bigg ] &= \Tr \bigg [ ( \psi^{C_MR} \otimes \psi^{C'_MR'}) \bigotimes_{i=1}^m
    \bigg (r_i I^{C_iC'_i} + s_i F^{C_iC'_i} \bigg ) \otimes F^{RR'} \bigg ] \\
   &= \sum_{{\cal T} \subseteq \{1,2,\ldots,m\}} \prod_{i \notin {\cal T}} r_i \prod_{i \in {\cal T}} s_i \Tr \bigg
    [ (\psi^{R{\cal T}})^2 \bigg ], \\
  \end{split}
\end{equation}
where ${\cal T}$ appearing in $\psi^{R{\cal T}}$ denotes the
system $\otimes_{i \in {\cal T}}C_i$. When ${\cal T}$ is the empty
set, the last expression in eq.~(\ref{set:eq}) reduces to
$\prod_{i=1}^m r_i \Tr[(\psi^{R})^2]$. From eq.~(\ref{coeffs:eq}),
we can bound the quantity $\prod_{i=1}^m r_i \Tr[(\psi^{R})^2]$
from above by:
\begin{equation}
\begin{split}
\prod_{i=1}^m r_i \Tr[(\psi^R)^2] &\leq \frac{L}{d^2_{C_M}} \Tr[(\psi^{R})^2] \\
& = \Tr \bigg [\frac{L^2}{d^2_{C_M}} (\tau^{C^1_M})^2 \otimes (\psi^{R})^2 \bigg ] \\
& = \Tr \bigg [ \bigg ( \frac{L}{d_{C_M}} \tau^{C^1_M} \otimes \psi^{R} \bigg )^2 \bigg ] \\
& = \Tr \biggl [ \mathbb{E} [ \omeg ]^2\biggr ]. \\
\end{split}
\end{equation}
Hence, using eqs.~(16), (20), (21) and the previous bound, we have
\begin{equation}\label{eq:decouple2}
  \mathbb{E} \bigg [ \bigg \| \omeg - \decouple \bigg \|^2_2 \bigg ] \leq \sum_{\substack{{\cal T} \subseteq \{1,2,\ldots,m\}\\ {\cal T} \notin \emptyset}} \prod_{i \notin {\cal T}} \frac{L_i}{d^2_{C_i}} \prod_{i \in {\cal T}} \frac{(L_i)^2}{d^2_{C_i}} \Tr
  \bigg [ (\psi^{R{\cal T}})^2 \bigg ].
\end{equation}
To obtain a bound on $\mathbb{E} \bigg [ \bigg \| \omeg  -
\decouple \bigg \|_1 \bigg ]$, we use the Cauchy-Schwarz
inequality:
\begin{equation}
  \begin{split}
    \mathbb{E} \bigg [ \bigg \| \omeg  - \decouple \bigg \|^2_1 \bigg ] & \leq L d_{R} \mathbb{E}\bigg [ { \bigg \| \omeg - \decouple \bigg \|^2_2}\bigg ] \\ & \leq L d_{R} \sum_{\substack{{\cal T} \subseteq \{1,2,\ldots,m\}\\ {\cal T} \neq \emptyset}} \prod_{i \notin {\cal T}} \frac{L_i}{d^2_{C_i}} \prod_{i \in {\cal T}} \frac{(L_i)^2}{d^2_{C_i}} \Tr \bigg [ (\psi^{R{\cal T}})^2 \bigg ] \\ & \leq L^2 \frac{d_{R}}{d^2_{C_M}} \sum_{\substack{{\cal T} \subseteq \{1,2,\ldots,m\}\\ {\cal T} \neq \emptyset}}  \prod_{i \in {\cal T}} L_i  \Tr \bigg [ (\psi^{R{\cal T}})^2 \bigg ]. \\
   \end{split}
\end{equation} And thus,
\begin{equation}
  \mathbb{E} \bigg [ \bigg \| \omeg  - \decouple \bigg \|_1 \bigg ] \leq \frac{L}{d_{C_M}}
\sqrt{d_{R} \sum_{\substack{{\cal T} \subseteq \{1,2,\ldots,m\}\\
{\cal T} \neq \emptyset}}  \prod_{i \in {\cal T}} L_i \Tr \bigg [
(\psi^{R{\cal T}})^2 \bigg ]}.
\end{equation}
\end{proof}

\begin{proof+}{of Proposition \ref{prop:isometry}} Fix a random measurement by choosing, for each sender $C_i$, $N_i:= \lfloor \frac{d_{C_i}K_i}{L_i} \rfloor$ fixed orthogonal
subspaces of $C_iC^0_i$ of dimension $L_i$ and one of dimension
$L'_i = d_{C_i}K_i - N_iL_i < L_i$. The projectors onto these
subspaces followed by a fixed unitary mapping it to $C^1_i$, we
denote by $Q^j_i$, $j=0,...,N_i$. Note that $Q^0_i$ projects onto a subspace of dimension $L'_i < L_i$.
Set $P^j_i := Q^j_i U_i$ with a Haar distributed random
unitary $U_i$ on $C_iC^0_i$. Applying Lemma \ref{Lemma:rdmisometry} for a measurement outcome $J=(j_1,j_2,\ldots, j_m)$,
with $\omeg_{J} = (Q^{j_1}_1U_1 \otimes Q^{j_2}_2U_2 \otimes \ldots
\otimes Q^{j_m}_mU_m \otimes I_{R})\psi^{C_MR}\otimes
\tau^{C^0_M}(Q^{j_1}_1U_1 \otimes Q^{j_2}_2U_2 \otimes \ldots \otimes
Q^{j_m}_mU_m \otimes I_{R})^{\dag}$, we have
\begin{equation}
  \begin{split} \mathbb{E} \bigg [\sum_{j_{1}=1}^{N_1} \sum_{j_2=1}^{N_2} \cdots \sum_{j_m=1}^{N_m} &\bigg \| \omeg_{J}
- \decoupled \bigg \|_1 \bigg ] \\ &\leq \bigg ( \prod_{i=1}^m N_i
\bigg ) \frac{L}{d_{C_M}K} \sqrt{d_{R} \sum_{\substack{{\cal T}
\subseteq \{1,2,...,m\} \\ {\cal T} \neq \emptyset}}  \prod_{i \in
{\cal T}} \frac{L_i}{K_i}  \Tr
\bigg [ (\psi^{R{\cal T}})^2 \bigg ]} \\
&\leq \sqrt{d_{R} \sum_{\substack{{\cal T} \subseteq \{1,2,...,m\}
\\ {\cal T} \neq \emptyset}}  \prod_{i \in {\cal T}}
\frac{L_i}{K_i}
\Tr \bigg [ (\psi^{R{\cal T}})^2 \bigg ]}. \\
  \end{split}\label{eq:measurement}
\end{equation} Taking the normalisation into account, with $p_{J}=\Tr(\omega^{C^1_MR}_{J})$ and
$\psi^{C^1_MR}_J=\frac{1}{p_J}\omega^{C^1_MR}_J$, we need to show
that on average, the $p_J$ are close to $\frac{L}{d_{C_M}K}$.
Looking at eq.~(\ref{eq:measurement}) and tracing out, we get

\begin{equation}
  \mathbb{E} \bigg [\sum_{j_{1}=1}^{N_1} \sum_{j_2=1}^{N_2} \cdots \sum_{j_m=1}^{N_m} \bigg | p_{J} -
\frac{L}{d_{C_M}K} \bigg | \bigg ] \leq \sqrt{d_{R}
\sum_{\substack{{\cal T} \subseteq \{1,2,\ldots,m\}
\\ {\cal T} \neq \emptyset}}  \prod_{i \in {\cal T}} \frac{L_i}{K_i} \Tr \bigg [ (\psi^{R{\cal T}})^2 \bigg ]}. \\
\end{equation} Hence we obtain, using the triangle inequality,

\begin{equation}
 \mathbb{E} \bigg [\sum_{j_{1}=1}^{N_1} \sum_{j_2=1}^{N_2} \cdots \sum_{j_m=1}^{N_m} p_{J} \bigg \|
\psi^{C^1_MR}_J - \tau^{C^1_M} \otimes \psi^{R} \bigg \|_1 \bigg ]
\leq 2\sqrt{d_{R} \sum_{\substack{{\cal T} \subseteq
\{1,2,\ldots,m\}
\\ {\cal T} \neq \emptyset}}  \prod_{i \in {\cal T}}
\frac{L_i}{K_i} \Tr \bigg [ (\psi^{R{\cal T}})^2 \bigg ]} =:
\Gamma_{\psi \otimes \Phi^K}.
\end{equation}

Lastly, we need to consider what happens when at least one sender
$i$ obtains a measurement outcome $j_i$ equal to 0. For an outcome
$J=(j_1,j_2,\ldots,j_m)$, define the subset ${\cal T}(J) \subseteq
\{1,2,...m\}$ such that $i \in {\cal T}(J)$ iff $j_i = 0$. Also,
define the set ${\cal Z} = \{J:|{\cal T}(J)| > 0\}$. Then, it is
easy to show that the cardinality of the set ${\cal Z}$ is
 \begin{equation}
 |{\cal Z}| = \sum_{\substack{{\cal T} \subseteq \{1,2,...,m\} \\ {\cal T} \neq \emptyset}} \prod_{i \notin {\cal T}} N_i.
 \end{equation}
For an outcome $J \in {\cal Z}$, the expected probability of the state $\omega^{C^1_MR}_{J}$ is given
by
\begin{equation}
 \begin{split}
 \mathbb{E}_{U_1U_2\ldots U_m}\bigg [\Tr(\omega^{C^1_MR}_J)\bigg ] &= \Tr \bigg [ \mathbb{E}_{U_1U_2\ldots U_m}( \omega^{C^1_MR}_J) \bigg ]\\
 &= \Tr \bigg [ \bigotimes_{i \in {\cal T}(J)} Q^0_i \tau^{C_iC^0_i} (Q^0_i)^{\dag} \bigotimes_{i \notin {\cal T}(J)} Q^{j_i}_i \tau^{C_iC^0_i} (Q^{j_i}_i)^{\dag} \bigg ] \\
 &=\frac{\prod_{i \in {\cal T}(J)} L_i' \prod_{i \notin {\cal T}(J)} L_i}{d_{C_M}K}.\\
 \end{split}
\end{equation}
 With this formula in hand and the fact that the trace norm between two states is at most 2, we
can bound the expected value of the quantum error $Q_{\cal I}(\initstate \otimes \Phi^K)$ as follows:

\begin{equation} \label{eq:final}
  \begin{split}
   \mathbb{E} \bigg [\sum_{j_{1}=0}^{N_1} \sum_{j_2=0}^{N_2} \cdots \sum_{j_m=0}^{N_m} p_{J} \bigg \| \psi^{C^1_MR}_J - \tau^{C^1_M} \otimes \psi^R \bigg \|_1 \bigg ] &\leq 2\sum_{\substack{{{\cal T} \subseteq \{1,2,...,m\}} \\ {\cal T} \neq \emptyset}} \frac{\prod_{i \in {\cal T}} L_i' \prod_{i \notin {\cal T}} N_iL_i}{d_{C_M}K} + \Gamma_{\psi \otimes \Phi^K}\\
&\leq 2 \sum_{\substack{{\cal T} \subseteq \{1,2,...,m\} \\ {\cal T} \neq \emptyset}}\prod_{i \in {\cal T}} \frac{L_i'}{d_{C_i}K_i} + \Gamma_{\psi \otimes \Phi^K} \\
&\leq 2 \sum_{\substack{{\cal T} \subseteq \{1,2,...,m\} \\ {\cal T} \neq \emptyset}}\prod_{i \in {\cal T}} \frac{L_i}{d_{C_i}K_i} + \Gamma_{\psi \otimes \Phi^K}. \\
  \end{split}
\end{equation}
\end{proof+}

\section{Multiparty State Merging: I.I.D Version} \label{sec:iid}

In this section, we analyze the case where the parties have at
their disposal arbitrarily many copies of the state $\initstate$. We give a proof of Theorem \ref{thm:statemerging}, and
then look at the case of distributed compression as an application. As mentioned earlier, the rates characterized by eq.~(\ref{eq:Mainthm3}) will be achievable without the need for a time-sharing strategy. Indeed, we will show the existence of multiparty merging protocols where each sender performs a single measurement on his share of the input state and communicates the outcome to the receiver. If the parties were to employ a time-sharing strategy, on the other hand, the many initial copies of the input state $\initstate$ would need to be divided into blocks, and for each of these blocks, the senders would have to perform a different measurement.

\subsection{Proof of Theorem \ref{thm:statemerging}}
\begin{proof} To prove the direct statement of the theorem, we use Proposition \ref{prop:isometry} in combination with Shumacher compression \cite{Schumacher}.
For $n$ copies of the state $\initstate$, consider the Schumacher
compressed state \be \ket{\Omega} := (\Pi_{\ti{B}} \otimes
\Pi_{\ti{C}_1} \otimes \Pi_{\ti{C}_2} \otimes \ldots \otimes
\Pi_{\ti{C}_m} \otimes \Pi_{\ti{R}}) \ket{\psi}^{\otimes n},\ee
and its normalized version $\ket{\Psi} := \frac{1}{\sqrt{\langle
\Omega | \Omega \rangle}}\ket{\Omega}$. Here, the systems
$\ti{B},\ti{C}_1,\ldots,\ti{C}_m,\ti{R}$ are the typical subspaces
(see \cite{Hayden001} for detailed definitions) of $B^n,C^n_1,\ldots,C^n_m,R^n$ and
$\Pi_{\ti{B}},\Pi_{\ti{C}_1},\ldots,\Pi_{\ti{C}_m},\Pi_{\ti{R}}$
are the projection operators onto these typical subspaces. In
particular, we have that
\begin{equation}
 \langle \Omega | \Omega \rangle = \bra{\psi}^{\otimes n} \Pi_{\ti{B}} \otimes \Pi_{\ti{C}_1} \otimes \ldots \otimes \Pi_{\ti{C}_m} \otimes \Pi_{\ti{R}} \ket{\psi}^{\otimes n} \geq 1-\epsilon
\end{equation} for any $\epsilon > 0$ and large enough $n$. Furthermore, we can set $\epsilon$ to be equal to $(m+2)\mathrm{exp}(-c\delta^2n)$
for some constant $c$, where $\delta > 0$ is a typicality
parameter. This follows from the fact (see Appendix~\ref{app:operatorineq}) that
 \begin{equation}\label{eq:operatorineq}
    \Pi_{\ti{B}} \otimes \Pi_{\ti{C_1}} \otimes \ldots \otimes \Pi_{\ti{C_m}} \otimes \Pi_{\ti{R}} \geq \Pi_{\ti{B}} +
   \Pi_{\ti{C_1}} + \ldots + \Pi_{\ti{C_m}} + \Pi_{\ti{R}} - (m+1) I_{\ti{B}\ti{C}_1\ldots \ti{C}_m\ti{R}} \\
 \end{equation}
  and, that by typicality, we have $\Tr(\psi^{\otimes n}_{\ti{B}}\Pi_{\ti{B}}), \Tr(\psi^{\otimes n}_{\ti{R}}\Pi_{\ti{R}}),
  \Tr(\psi^{\otimes n}_{\ti{C}_i}\Pi_{\ti{C}_i}) \geq 1 - \mathrm{exp}(-c\delta^2n)$ for all $1 \leq i \leq m$ (see \cite{Winter01}
  for the exponential bounds). Note that we have omitted some identity operator factors on the right hand side for the sake of clarity.
  The operator $\Pi_{\ti{B}}$ on the right hand side of eq.~(\ref{eq:operatorineq}) is in fact $(I^{C_MR} \otimes \Pi_{\ti{B}})$, and
  the same applies for all the other projectors on that side of the inequality.

The properties for the typical projectors
$\Pi_{\ti{B}},\Pi_{\ti{C}_1},\ldots,\Pi_{\ti{C}_m}$ allow us to
tightly bound the various dimensions and purities appearing in
Proposition \ref{prop:isometry} by appropriate "entropic"
formulas. In particular, we have \cite{Hayden001} for any system $F=C_i, B, R$:
  \begin{equation}
        \begin{split}
         (1-\epsilon)2^{n(S(F)_{\psi} - \delta)} &\leq \Tr[\Pi_{\ti{F}}] \leq 2^{n(S(F)_{\psi}+\delta)} \\
              &\Tr [ (\Psi^F)^2 ] \leq 2^{-n(S(F)_{\psi} - \delta)}. \\
         \end{split}
  \end{equation}

Hence, all parties follow a merging protocol as if they had
$\Psi$, with additional entanglement $\Phi^K:=\Phi^{K_1} \otimes
\Phi^{K_2} \otimes \ldots \otimes \Phi^{K_m}$. If each sender $C_i$
performs a random measurement on his system, as in Proposition
\ref{prop:isometry}, with projectors of rank $L_i$ (and one of
rank $L'_i \leq L_i$) such that
 \begin{equation}\label{eq:upper2}
  \prod_{i \in {\cal T}} \frac{L_i}{K_i} \leq 2^{n(S(R{\cal T})_{\psi}-S(R)_{\psi}-3\delta|{\cal T}|)}
  \end{equation} holds for all
nonempty subsets ${\cal T} \subseteq \{1,2,\ldots,m\}$, then the
expected value of the quantum error $Q_{\cal I}(\Psi \otimes
\Phi^K)$ is bounded from above by
\begin{equation} \label{eq:qerror1}
 \begin{split}
\mathbb{E}_{U_1U_2\ldots U_m} [Q_{\cal I}(\Psi \otimes \Phi^K)] &\leq 2\sum_{\substack{{\cal T} \subseteq \{1,2,...,m\} \\ {\cal T} \neq \emptyset}} \prod_{i \in {\cal T}} \frac{L_i}{d_{\ti{C}_i}K_i} + 2\sqrt{d_{\ti{R}} \sum_{\substack{{\cal T} \subseteq \{1,2,\ldots,m\} \\ {\cal T} \neq \emptyset}}  \prod_{i \in {\cal T}} \frac{L_i}{K_i} \Tr \bigg [ (\Psi_{\ti{R}\ti{{\cal T}}})^2 \bigg ]} \\
&\leq 2\sum_{\substack{{\cal T} \subseteq \{1,2,...,m\} \\ {\cal T} \neq \emptyset}} \frac{2^{n(S(R{\cal T})_{\psi} -S(R)_{\psi} - \sum_{i \in {\cal T}} S(C_i)_{\psi} -2\delta|{\cal T}|)}} {(1-\epsilon)^{|{\cal T}|}} + 2 \sqrt{\sum_{\substack{{\cal T} \subseteq \{1,2,\ldots,m\} \\ {\cal T} \neq \emptyset}} 2^{-n\delta}} \\
&\leq 2\sum_{\substack{{\cal T} \subseteq \{1,2,...,m\} \\ {\cal
T} \neq \emptyset}} \frac{2^{-2n\delta|{\cal
T}|}}{(1-\epsilon)^{|{\cal T}|}} +
2\sqrt{2^{m-n\delta}-2^{-n\delta}}=\textit{O}(2^{-n\delta/2}).
\end{split}
\end{equation} To bound the first term on the right hand side of eq.~(\ref{eq:qerror1}), we have used subadditivity twice: $S(R{\cal T})_{\psi} \leq S(R)_{\psi} + S({\cal T})_{\psi}$ and $S({\cal T})_{\psi} \leq \sum_{i \in {\cal T}} S(C_i)_{\psi}$. Hence, by
Proposition \ref{prop:mergeCond}, we can conclude that there
exists a merging protocol with error $\textit{O}(2^{-n \delta
/4})$ and entanglement cost $\overrightarrow{E}:=(\log K_1 - \log
L_1,\log K_2 - \log L_2,\ldots, \log K_m -\log L_m)$. From
eq.~(\ref{eq:upper2}), the entanglement rate
$\frac{1}{n}\overrightarrow{E}$ must satisfy
 \begin{equation}\label{eq:upper3}
  \begin{split}
  \sum_{i \in {\cal T}} \frac{1}{n} (\log K_i - \log L_i) &\geq (S(R)_{\psi} - S(R{\cal T})_{\psi} + 3\delta|{\cal T}|) \\
  & = S({\cal T}|{\overline{\cal T}}B)_{\psi} + 3\delta|{\cal T}|\\
  \end{split}
  \end{equation}
for all non empty subsets ${\cal T} \subseteq
\{1,2,\ldots,m\}$. Since we have a vanishing error for this
protocol as $n$ goes to infinity, all rate-tuples
$\overrightarrow{R}$ satisfying the preceding set of inequalities
are achievable for merging for the state $\Psi$. However, by the
gentle measurement lemma and the triangle inequality,
 \begin{equation} \bigg \| (\initstate)^{\otimes n} - \Psi \bigg \|_1 \leq 4 \sqrt{\epsilon},
 \end{equation}
and so, if we apply the same merging protocol on the state $(\initstate)^{\otimes n} \otimes \Phi^K$, we get an error of $\textit{O}(2^{-n \delta /4})+ \textit{O}(2^{-cn\delta^2/2})$. This error also vanishes as $n$ goes to infinity, and since $\delta$ was arbitrarily chosen, we can conclude that any rate-tuple $\overrightarrow{R}=(R_1,R_2,\ldots,R_m)$ satisfying
 \be \label{eq:thm3}
\sum_{i \in {\cal T}} R_i \geq S({\cal T}|\overline{{\cal T}}B)_{\psi}
 \ee
 for all non empty ${\cal T} \subseteq \{1,2,\ldots,m\}$ must be contained in the rate region. This proves the direct part of Theorem \ref{thm:statemerging}.

The converse was established in \cite{Merge} so we are done.

\end{proof}

\subsection{Distributed compression for two senders}
To illustrate some of the properties of our protocol, let's
consider the problem of distributed compression for two senders
sharing a state $\psi^{C_1C_2R}$, with purifying system $R$. The
objective is the same as in state merging, except that the decoder
has no prior information about the state. Quantum communication
will be achieved using pre-shared EPR pairs and classical
communication. The decoder will recover the original state by
applying an appropriate decoding operation. Pure entanglement, shared between the decoder and
the involved senders, might also be distilled in the process. If we
let $R_i$ denote the net amount of entanglement consumed (or
generated if $R_i$ is negative) in a distributed compression
scheme, it was found in \cite{Merge} that the rates must obey
\begin{equation} \label{eq:condDistCompress}
  \begin{split}
   R_1 &\geq S(C_1 | C_2)_{\psi} \\
   R_2 &\geq S(C_2 | C_1)_{\psi} \\
   R_1 + R_2 &\geq S(C_1C_2)_{\psi}. \\
   \end{split}
\end{equation}

\begin{figure}[t!]
\centering
    \includegraphics[width=0.7\textwidth]{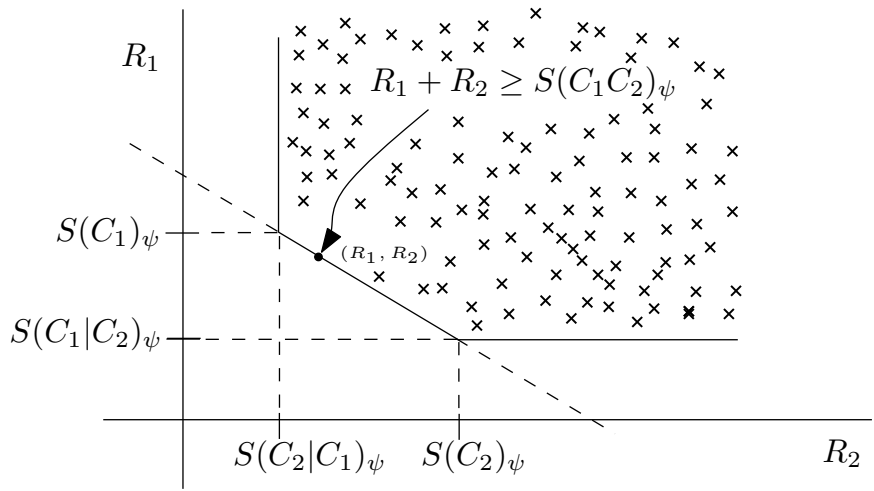}
\caption{The rate region for distributed compression ($m=2$) when
the conditional entropies $S(C_1|C_2)_{\psi}$ and
$S(C_2|C_1)_{\psi}$ are both positives. For the point $(R_1,R_2)$
on the boundary, time-sharing is needed if we perform two
applications of the original state merging protocol. Our protocol, on the other hand, can achieve this rate
without the need for time-sharing. }\label{fig:rateregion}
\end{figure}

Observe that this is just a special case of eq.~(\ref{eq:thm3})
with $m=2$ and Bob having no side information. Figure
\ref{fig:rateregion} shows the achievable rate region when the
conditional entropies have positive values.

One way to perform distributed compression is to apply the
original state merging protocol as many times as needed, adjusting
the amount of pre-shared entanglement required depending on the
information the decoder has after each application of the
protocol. For instance, if we wish to first transfer $C_1$'s share
to Bob, we can apply the state merging protocol using an
entanglement rate of $S(C_1)_{\psi}$, which amounts to Schumacher
compressing the state $\psi^{C_1}$ since the receiver has no prior
information about the state $\psi^{C_1C_2R}$. Then, to transfer
$C_2$'s share of the state, we perform another state merging, this
time, with an entanglement rate of $S(C_2|C_1)_{\psi}$. This will
correspond to one specific corner of the boundary in Figure
\ref{fig:rateregion}. Transferring $C_2$ first, and then $C_1$
will give us the other corner. To attain all other points on the
boundary using this approach, time-sharing will be required.

The techniques used to prove Theorem \ref{thm:statemerging}, however, demonstrated that time sharing is not essential to the task
of multiparty state merging. Let $(R_1,R_2)$ be any point in the
rate region. Then, $R_1$ and $R_2$ must satisfy
eq.~(\ref{eq:condDistCompress}), and so by Theorem
$\ref{thm:statemerging}$, the rate-tuple $(R_1,R_2)$ is achievable
for multiparty merging for the state $\psi^{C_1C_2R}$. That is,
given a large number of copies of $\psi^{C_1C_2R}$, there exist
multiparty state merging protocols $\Lambda^2_{\rightarrow}$ of vanishing error and
entanglement rate $\frac{1}{n}(\log K^n_1 - \log L^n_1, \log K^n_2 -
\log L^n_2)$ approaching $R_1$ and $R_2$ respectively. In the proof of Theorem \ref{thm:statemerging}, we have shown the existence of merging protocols of a specific kind. For these protocols, each sender performs a single measurement with projectors of rank $L_i$ (and one of rank $L_i' \leq L_i$) on his share $(C_iC^0_i)^{\otimes n}$ . The amount of pre-shared entanglement required and the rank of the projectors will need to satisfy eq.~(\ref{eq:upper2}). The receiver will then apply a decoding $U_J$ once he receives the outcome of the measurements. These protocols do not partition the input state $(\psi^{C_1C_2R})^{\otimes n}$ to achieve the desired rates $(R_1, R_2)$. Hence, time-sharing is not required and the parties can perform merging at any rate $(R_1, R_2)$ lying in the rate region if they were supplied with enough initial entanglement.

\section{Min-Entropies and One-Shot Merging} \label{sec:one-shot}

\subsection{Review of Min- and Max-Entropies}
Quantum min- and max-entropies are adaptations of the classical
R\'enyi entropies of order $\alpha$ when $\alpha \rightarrow
\infty$ and $\alpha=1/2$ respectively. The R\'enyi entropies were
introduced by R\'enyi \cite{Renyi} in 1961 as alternatives to the
Shannon entropy as measures of information. Although introduced in
an operational way, the Shannon entropy can also be regarded as
the unique function which satisfies a set of prescribed
postulates. R\'enyi showed that by generalizing some of the
postulates, other information-theoretic quantities could be
obtained, and this gave rise to the family of R\'enyi entropies,
parameterized by a positive number $\alpha$. R\'enyi entropies and
their quantum generalizations have found applications in areas
such as cryptography
\cite{Cachin97smoothentropy,Renner05simpleand} and statistics
\cite{Order,Diversity}. For our purposes, only the definitions and
some basic properties of the min- and max- entropies will actually
be needed.

Let ${\cal S}_{\leq}(AR)$ be the set of sub-normalized density operators (i.e $\Tr (\bar{\rho}^{AR}) \leq 1$) on the space $AR$.
The quantum min-entropy \cite{Renner02} of an
operator $\rho^{AR} \in {\cal S}_{\leq}(AR)$ relative to a density operator $\sigma^{R}$ is given by
 \begin{equation}
     H_{\min}(\rho^{AR}|\sigma^{R}):= -\log \lambda,
 \end{equation}
where $\lambda$ is the minimum real number such that $\lambda (I^A \otimes \sigma^{R}) - \rho^{AR}$ is
positive semidefinite. The conditional min-entropy $H_{\min}(\rho^{AR}|R)$ is obtained by maximizing the previous quantity over
all density operators $\sigma^{R}$:
 \begin{equation}
    H_{\min}(\rho^{AR}|R) := \sup_{\sigma^{R}} H_{\min}(\rho^{AR}|\sigma^{R}).
  \end{equation}
 For two sub-normalized states $\rho$ and $\bar{\rho}$, we define the purified distance between $\rho$ and $\bar{\rho}$ as
 \begin{equation}
    P(\rho,\bar{\rho}) := \sqrt{1- \overline{F}(\rho,\bar{\rho})^2},
 \end{equation}
 where $\overline{F}(\rho,\bar{\rho})$ is the generalized fidelity between $\rho$ and $\bar{\rho}$ (see \cite{Renner01} for the definition). It is related to the trace distance $D(\rho, \bar{\rho}) := \frac{1}{2}\| \rho - \bar{\rho} \|_1$ as follows
 \begin{equation}
    D(\rho, \bar{\rho}) \leq P(\rho, \bar{\rho}) \leq 2 \sqrt{D(\rho,\bar{\rho})}.\label{eq:purified}
 \end{equation}
A proof of this fact can be found in Lemma 6 of \cite{Renner01}. (Lemma 6 actually relates the purified distance to the generalized distance $\bar{D}(\rho, \bar{\rho})$. However, $\bar{D}(\rho,\bar{\rho})$ is always greater than or equal to the trace distance and bounded above by $\|\rho - \bar{\rho}\|_1$.)

Using the purified distance as our measure of closeness, we obtain a family of smooth min-entropies $\{H^{\epsilon}_{\min}\}$ by optimizing over all sub-normalized density operators close to $\rho^{AB}$ with respect to $P(\bar{\rho},\rho)$:
 \begin{equation}
    H^{\epsilon}_{\min}(\rho^{AR}|R) := \sup_{\bar{\rho}^{AR}} H_{\min}(\bar{\rho}^{AR}|R),
 \end{equation}
where the supremum is taken over all $\bar{\rho}^{AR}$ such that $P(\bar{\rho}^{AR},\rho^{AR}) \leq \epsilon$. If we use the trace distance instead as our measure of closeness, we obtain the family $\{\bar{H}_{\min}^{\epsilon}\}$:
 \begin{equation}
    \bar{H}^{\epsilon}_{\min}(\rho^{AR}|R) := \sup_{\bar{\rho}^{AR}} H_{\min}(\bar{\rho}^{AR}|R),
 \end{equation}
where the supremum is taken over all sub-normalized $\bar{\rho}^{AR}$ such that $D(\bar{\rho}^{AR}, \rho^{AR}) \leq \epsilon$. From eq.~(\ref{eq:purified}), the smooth min-entropy $H^{\epsilon}_{\min}$ can be bounded by $\bar{H}^{\epsilon}_{\min}$ in the following way:
 \begin{equation}
    H^{\epsilon}_{\min}(\rho^{AR}|R) \leq \bar{H}^{\epsilon}_{\min}(\rho^{AR}|R) \leq H^{2\sqrt{\epsilon}}_{\min}(\rho^{AR}|R). \label{eq:minIneqs}
 \end{equation}

Given a purification $\rho^{ABR}$ of $\rho^{AR}$, with purifying
system $B$, the family of smooth max entropies $\{H^{\epsilon}_{\max}\}$ is defined as
 \begin{equation}
    H^{\epsilon}_{\max}(\rho^{AB}|B) := - H^{\epsilon}_{\min}(\bar{\rho}^{AR}|R)
 \end{equation}
for any $\epsilon \geq 0$. The smooth max-entropies can also be expressed as
 \begin{equation}\label{eq:smoothmax}
    H^{\epsilon}_{\max}(\rho^{AB}|B) = \inf_{\bar{\rho}^{AB}} H_{\max}(\bar{\rho}^{AB}|B),
 \end{equation}
where the infimum is taken over all $\bar{\rho}^{AB}$ such that $P(\bar{\rho},\rho) \leq \epsilon$. We refer to \cite{Renner01} for a proof of this fact. When $\epsilon=0$, an alternative expression for the max-entropy $H_{\max}(\psi^{AB}|B)$ was obtained in \cite{Renner03}:
\begin{equation}
  H_{\max}(\rho^{AB}|B) = \sup_{\sigma^B} \log F^2(\rho^{AB}, I^A \otimes \sigma^B),
\end{equation}
where the supremum is taken over all density operators $\sigma^B$ on the space $B$. From this last expression, the smooth max-entropy $H^{\epsilon}_{\max}(\rho^{A})$ of a sub-normalized operator $\rho^A \in {\cal S}_{\leq}(A)$ reduces to
\begin{equation}
  H^{\epsilon}_{\max}(\rho^{A}) = 2\log \sum_x \sqrt{\bar{r}_x},
\end{equation}
where $\bar{r}_x$ are the eigenvalues of the sub-normalized density operator $\bar{\rho}^A$ which optimizes the right hand side of eq.~(\ref{eq:smoothmax}).

When defining the smooth min- and max-entropies using the purified
distance, other useful properties such as quantum data processing
inequalities and concavity of the max-entropy are also known to hold.
A detailed analysis of these properties can be found in \cite{Renner01}.

We will also need, for technical reasons, another entropic
quantity called the conditional collision entropy
$H_2(\rho^{AB}|\sigma^{B})$\cite{Renner01}. It is defined as
\begin{equation}
   H_2(\rho^{AB}|\sigma^{B}):=  -\log \Tr \bigg [ \bigg ((I_A \otimes \sigma_{B}^{-1/4}) \rho^{AB} (I_A \otimes \sigma_B^{-1/4}) \bigg )^2 \bigg ].
\end{equation} It is a quantum adaptation of the classical conditional collision entropy. We have the following lemma relating the min-entropy to the collision entropy:
\begin{Lemma}\cite{Renner01}
For density operators $\rho^{AB}$ and $\sigma^{B}$ with $\mathrm{supp}\{\Tr_A(\rho^{AB})\} \subseteq \mathrm{supp}\{\sigma^{B}\}$, we have
  \[ H_{\min}(\rho^{AB}|\sigma^{B}) \leq H_2(\rho^{AB}|\sigma^{B}). \]
\end{Lemma}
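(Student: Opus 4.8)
The plan is to work directly from the two definitions and reduce the claimed inequality to a single operator-inequality manipulation. Set $\lambda := 2^{-H_{\min}(\rho^{AB}|\sigma^B)}$, so that by definition $\lambda$ is the smallest scalar with $\lambda(I^A \otimes \sigma^B) - \rho^{AB} \geq 0$, equivalently $\rho^{AB} \leq \lambda(I^A \otimes \sigma^B)$. Since $H_2(\rho^{AB}|\sigma^B) = -\log \Tr[M^2]$ with $M := (I_A \otimes \sigma_B^{-1/4})\rho^{AB}(I_A \otimes \sigma_B^{-1/4})$, the desired inequality $H_{\min} \leq H_2$ is equivalent, after applying $-\log$ to both sides, to the scalar bound $\Tr[M^2] \leq \lambda$. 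So the whole proof reduces to establishing $\Tr[M^2] \leq \lambda$.

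First I would conjugate the operator inequality $\rho^{AB} \leq \lambda(I^A \otimes \sigma^B)$ by the Hermitian operator $I_A \otimes \sigma_B^{-1/4}$ (interpreting inverse powers on the support of $\sigma^B$, which the support hypothesis legitimizes). Conjugation by a fixed Hermitian operator preserves the positive-semidefinite order, so this yields $M \leq \lambda(I_A \otimes \sigma_B^{1/2})$, using $\sigma_B^{-1/4}\sigma_B\sigma_B^{-1/4} = \sigma_B^{1/2}$ on the support; and $M \geq 0$ because it is a Hermitian conjugation of the positive operator $\rho^{AB}$.

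Next I would exploit $0 \leq M \leq N$ with $N := \lambda(I_A \otimes \sigma_B^{1/2})$. Writing $\Tr[M^2] = \Tr[MN] - \Tr[M(N-M)]$ and using that $\Tr[XY] \geq 0$ for any two positive-semidefinite operators (applied to $X = M$ and $Y = N-M$), I obtain $\Tr[M^2] \leq \Tr[MN] = \lambda\,\Tr[M(I_A \otimes \sigma_B^{1/2})]$. The final step is a direct computation: by cyclicity of the trace, $\Tr[M(I_A \otimes \sigma_B^{1/2})] = \Tr[\rho^{AB}(I_A \otimes \sigma_B^{-1/4}\sigma_B^{1/2}\sigma_B^{-1/4})] = \Tr[\rho^{AB}(I_A \otimes \Pi_{\sigma})]$, where $\Pi_\sigma$ is the projector onto the support of $\sigma^B$. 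The hypothesis $\mathrm{supp}\{\Tr_A(\rho^{AB})\} \subseteq \mathrm{supp}\{\sigma^B\}$ forces $\Tr_A(\rho^{AB})\,\Pi_\sigma = \Tr_A(\rho^{AB})$, so this trace equals $\Tr[\rho^{AB}] = 1$. Hence $\Tr[M^2] \leq \lambda$, which is exactly what was required.

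The only genuinely delicate point is the handling of the fractional inverse powers $\sigma_B^{-1/4}$ when $\sigma^B$ is rank-deficient; this is precisely what the support hypothesis is there to control, and it is also what makes the cancellation $\sigma_B^{-1/4}\sigma_B^{1/2}\sigma_B^{-1/4} = \Pi_\sigma$ (rather than $I^B$) come out correctly in the last trace. Everything else is standard: preservation of the operator order under Hermitian conjugation, and nonnegativity of the trace of a product of two positive-semidefinite operators.
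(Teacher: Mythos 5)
Your proof is correct. Note that the paper itself gives no proof of this lemma at all---it is stated with a bare citation to \cite{Renner01}---so there is no internal argument to compare against; what you have written is essentially the standard proof from that reference. There, one goes directly from the defining inequality $\rho^{AB} \leq \lambda (I^A \otimes \sigma^B)$ to the bound $\Tr[M^2] = \Tr[\rho^{AB} C] \leq \lambda \Tr[(I^A\otimes\sigma^B) C]$ with $C := (I_A \otimes \sigma_B^{-1/2})\rho^{AB}(I_A \otimes \sigma_B^{-1/2}) \geq 0$, using monotonicity of $X \mapsto \Tr[XC]$ under the operator order; the right-hand side then collapses to $\lambda\,\Tr[\rho^{AB}(I_A\otimes\Pi_\sigma)] = \lambda$ exactly as in your final step. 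Your packaging---conjugating the operator inequality by $I_A\otimes\sigma_B^{-1/4}$ first and then writing $\Tr[M^2] = \Tr[MN] - \Tr[M(N-M)]$ with $N = \lambda(I_A\otimes\sigma_B^{1/2})$---is the same idea in slightly different clothing: both versions hinge on the nonnegativity of the trace of a product of two positive semidefinite operators, and both use the support hypothesis in the same place, namely to turn $\sigma_B^{-1/4}\sigma_B^{1/2}\sigma_B^{-1/4}$ into the support projector $\Pi_\sigma$ so that the final trace equals $\Tr[\rho^{AB}] = 1$ rather than something smaller. Your explicit attention to the rank-deficient case is exactly the right delicate point to flag.
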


The last two results we will need are the additivity of the min-entropy and the following lemma which relates the trace norm of an
hermitian operator $S$ to its Hilbert-Schmidt norm, with respect
to a positive semidefinite operator $\sigma$:
 \begin{Lemma}\label{Lemma:TrSchmidt}
   Let $S$ be an hermitian operator acting on some space $X$ and $\sigma$ be a positive semidefinite operator on $X$. Then, we have
     \begin{equation}
         \|S\|_1 \leq \sqrt{\Tr(\sigma)}  \|\sigma^{-1/4} S \sigma^{-1/4} \|_2.
     \end{equation}
 \end{Lemma}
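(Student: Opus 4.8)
The plan is to factor $S$ through the sandwiched operator appearing on the right-hand side and then apply a H\"older-type inequality for Schatten norms. Concretely, set $T := \sigma^{-1/4} S \sigma^{-1/4}$, so that $\|T\|_2$ is exactly the quantity we wish to bound against, and observe that $S = \sigma^{1/4} T \sigma^{1/4}$ (this identity holds on the support of $\sigma$; see the caveat below). The right-hand side of the claimed inequality is then $\sqrt{\Tr\sigma}\,\|T\|_2$, so everything reduces to estimating $\|\sigma^{1/4} T \sigma^{1/4}\|_1$ in terms of $\|T\|_2$ and a power of $\Tr\sigma$.

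First I would invoke the generalized H\"older inequality for Schatten norms, $\|ABC\|_1 \leq \|A\|_p \|B\|_q \|C\|_r$ whenever $1/p + 1/q + 1/r = 1$, with the choice $A = C = \sigma^{1/4}$, $B = T$, and exponents $(p,q,r) = (4,2,4)$. This gives
\[
 \|S\|_1 = \|\sigma^{1/4} T \sigma^{1/4}\|_1 \leq \|\sigma^{1/4}\|_4 \, \|T\|_2 \, \|\sigma^{1/4}\|_4 .
\]
Since $\sigma^{1/4}$ is positive semidefinite, $\|\sigma^{1/4}\|_4 = \big(\Tr (\sigma^{1/4})^4\big)^{1/4} = (\Tr\sigma)^{1/4}$, so the two equal factors combine to $\|\sigma^{1/4}\|_4^2 = (\Tr\sigma)^{1/2}$. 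Substituting back yields exactly $\|S\|_1 \leq \sqrt{\Tr\sigma}\,\|T\|_2 = \sqrt{\Tr\sigma}\,\|\sigma^{-1/4} S \sigma^{-1/4}\|_2$, as desired.

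If one prefers to avoid quoting the three-factor H\"older inequality, the same bound follows from two applications of Cauchy--Schwarz for the Hilbert--Schmidt inner product $\langle A,B\rangle = \Tr(A^{\dag}B)$. Writing $\|S\|_1 = \Tr(WS)$ for the unitary phase $W$ in the polar decomposition of the Hermitian operator $S$, I would cyclically rearrange $\Tr(WS) = \Tr(W\sigma^{1/4} T \sigma^{1/4}) = \Tr\big((\sigma^{1/4} W^{\dag} \sigma^{1/4})^{\dag} T\big)$ and apply Cauchy--Schwarz to peel off the factor $\|T\|_2$, leaving $\|\sigma^{1/4} W^{\dag} \sigma^{1/4}\|_2$. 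A second Cauchy--Schwarz, together with the unitarity $W^{\dag} W = I$ and unitary invariance of $\|\cdot\|_2$, then bounds this last factor by $\sqrt{\Tr\sigma}$: the key computation is $\|\sigma^{1/4} W^{\dag} \sigma^{1/4}\|_2^2 = \Tr(\sigma^{1/2} W \sigma^{1/2} W^{\dag}) \leq \|\sigma^{1/2}\|_2 \, \|W\sigma^{1/2}W^{\dag}\|_2 = \Tr\sigma$.

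The only genuine subtlety — and the point that needs the most care — is the meaning of $\sigma^{-1/4}$ when $\sigma$ is merely positive semidefinite rather than strictly positive. I would interpret $\sigma^{-1/4}$ as the pseudoinverse on $\mathrm{supp}(\sigma)$ and note that the bound is only nontrivial (with finite right-hand side) when $\mathrm{supp}(S) \subseteq \mathrm{supp}(\sigma)$; under that hypothesis the factorization $S = \sigma^{1/4}(\sigma^{-1/4} S \sigma^{-1/4})\sigma^{1/4}$ is valid, and restricting every operator to $\mathrm{supp}(\sigma)$ lets the argument above go through verbatim. Otherwise one recovers the general statement by a limiting argument, replacing $\sigma$ with $\sigma + \eta I$ and letting $\eta \to 0^{+}$.
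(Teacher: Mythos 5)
Your proof is correct. Note that the paper does not actually prove this lemma itself---it defers to the reference [Berta] (the proof there, inherited from Renner's thesis, is precisely your double Cauchy--Schwarz argument: write $\|S\|_1=\Tr(WS)$ with the sign unitary $W$, peel off $\|\sigma^{-1/4}S\sigma^{-1/4}\|_2$, and bound $\|\sigma^{1/4}W^{\dag}\sigma^{1/4}\|_2\leq\sqrt{\Tr\sigma}$), so your argument matches the cited proof essentially verbatim, with the three-factor H\"older inequality as a clean repackaging of the same idea and with the support caveat handled appropriately.
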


\begin{Lemma}[Additivity]
Let $\rho^{AB}$ and $\rho^{A'B'}$ be sub-normalized operators on the spaces $AB$ and $A'B'$ respectively. For density operators $\sigma^{B}$ and $\sigma^{B'}$, we have
\begin{equation}
  H_{\min}(\rho^{AB} \otimes \rho^{A'B'}|\sigma^{B} \otimes \sigma^{B'}) = H_{\min}(\rho^{AB}|\sigma^{B}) + H_{\min}(\rho^{A'B'}|\sigma^{B'}).
\end{equation}
\end{Lemma}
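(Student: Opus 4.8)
The plan is to reduce the additive identity for $H_{\min}$ to a multiplicative identity for the scalar appearing in its definition, and then to establish that multiplicativity directly. Recall that $H_{\min}(\rho^{AB}|\sigma^B) = -\log \lambda(\rho^{AB},\sigma^B)$, where $\lambda(\rho^{AB},\sigma^B)$ is the smallest real number with $\lambda(I^A \otimes \sigma^B) - \rho^{AB} \geq 0$. Since the logarithm turns products into sums, the claimed equality is equivalent to
\begin{equation}
\lambda(\rho^{AB} \otimes \rho^{A'B'}, \sigma^B \otimes \sigma^{B'}) = \lambda(\rho^{AB},\sigma^B)\,\lambda(\rho^{A'B'},\sigma^{B'}),
\end{equation}
so it suffices to prove this multiplicativity. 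I write $\lambda_1,\lambda_2$ for the two factors and $\lambda_{12}$ for the left-hand side.

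First I would dispose of the support condition. If $\lambda(I^A \otimes \sigma^B) \geq \rho^{AB}$ holds for some finite $\lambda$, then $\mathrm{supp}(\rho^{AB}) \subseteq \mathrm{supp}(I^A \otimes \sigma^B)$, so after restricting all operators to the subspace $A \otimes \mathrm{supp}(\sigma^B)$ (and likewise for the primed system) I may assume $\sigma^B$ and $\sigma^{B'}$ are invertible; in the degenerate case where no finite $\lambda$ exists, both sides are $+\infty$ and there is nothing to prove. On the support, conjugating the defining inequality by the positive invertible operator $(I^A \otimes \sigma^B)^{-1/2}$ preserves positivity and shows that $\lambda_1$ equals the operator norm (largest eigenvalue) of the positive semidefinite sandwiched operator $\widetilde\rho^{AB} := (I^A \otimes (\sigma^B)^{-1/2})\,\rho^{AB}\,(I^A \otimes (\sigma^B)^{-1/2})$, i.e. $\lambda_1 = \|\widetilde\rho^{AB}\|_\infty$, and similarly $\lambda_2 = \|\widetilde\rho^{A'B'}\|_\infty$.

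The key step is then to observe that the sandwiching factors through the tensor product: because $(\sigma^B \otimes \sigma^{B'})^{-1/2} = (\sigma^B)^{-1/2} \otimes (\sigma^{B'})^{-1/2}$, conjugating $\rho^{AB} \otimes \rho^{A'B'}$ by $(I^{AA'} \otimes \sigma^B \otimes \sigma^{B'})^{-1/2}$ yields exactly $\widetilde\rho^{AB} \otimes \widetilde\rho^{A'B'}$, up to the harmless reordering of tensor legs. Hence $\lambda_{12} = \|\widetilde\rho^{AB} \otimes \widetilde\rho^{A'B'}\|_\infty$, and multiplicativity of the operator norm under tensor products, $\|X \otimes Y\|_\infty = \|X\|_\infty \|Y\|_\infty$, gives $\lambda_{12} = \lambda_1 \lambda_2$ at once, establishing both inequalities simultaneously.

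I expect the only genuine obstacle to be the bookkeeping around the support condition and the generalized inverse; the positivity content is elementary. Should one prefer to avoid the operator-norm formula, the inequality $\lambda_{12} \leq \lambda_1\lambda_2$ can be obtained purely from positivity by writing, with $P := \lambda_1(I^A \otimes \sigma^B) \geq \rho^{AB} =: Q \geq 0$ and $P',Q'$ defined analogously for the primed system,
\begin{equation}
P \otimes P' - Q \otimes Q' = P \otimes (P' - Q') + (P - Q) \otimes Q' \geq 0,
\end{equation}
since a tensor product of positive semidefinite operators is positive semidefinite; the reverse inequality then follows by testing the sandwiched joint inequality $\lambda_{12} I \geq \widetilde\rho^{AB} \otimes \widetilde\rho^{A'B'}$ against the tensor product of top eigenvectors of $\widetilde\rho^{AB}$ and $\widetilde\rho^{A'B'}$.
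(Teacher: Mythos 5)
Your proof is correct, but there is nothing in the paper to compare it against: immediately after stating this lemma the text says ``For proofs of the preceding two lemmas, see \cite{Berta},'' so the paper defers the proof entirely to the cited reference. Your argument is a legitimate self-contained replacement. The reduction of additivity to multiplicativity of the minimal $\lambda$, the restriction to $A\otimes\mathrm{supp}(\sigma^B)$ (justified because $\lambda(I^A\otimes\sigma^B)\geq\rho^{AB}\geq 0$ forces $\mathrm{supp}(\rho^{AB})\subseteq A\otimes\mathrm{supp}(\sigma^B)$), the identification $\lambda_1=\bigl\|(I^A\otimes(\sigma^B)^{-1/2})\,\rho^{AB}\,(I^A\otimes(\sigma^B)^{-1/2})\bigr\|_\infty$, and the multiplicativity of the operator norm under tensor products are all sound, and this route has the merit of delivering both inequalities in one stroke. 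Your fallback argument---positivity of $P\otimes(P'-Q')+(P-Q)\otimes Q'$ for the direction $\lambda_{12}\leq\lambda_1\lambda_2$, and testing against a product of top eigenvectors for the reverse---is essentially the standard proof appearing in the cited literature, so either version would do. The only caveat is the trivial degenerate case $\rho^{A'B'}=0$, which the phrase ``sub-normalized operator'' technically admits and for which the statement itself becomes ill-defined ($+\infty-\infty$); excluding that, your handling of the infinite-$\lambda$ situation is also correct, since the joint support condition fails exactly when one of the marginal support conditions fails.
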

For proofs of the preceding two lemmas, see \cite{Berta}.

\subsection{Characterizing the entanglement cost of merging using min-entropies}

In \cite{Berta}, Berta showed that the smooth min-entropy is the
information theoretic measure which quantifies the minimal amount
of entanglement necessary for performing state merging when a
single copy of $\psi^{ABR}$ is available. More specifically, he
proved that the minimal entanglement cost $\log K - \log L$
necessary for merging the $A$ part of a state $\psi^{ABR}$ to the
location of the $B$ system is bounded from below by
 \begin{equation}\label{eq:lowerboundCost}
    \log K - \log L \geq -\bar{H}^{\sqrt{\epsilon}}_{\min}(\psi^{AR}|R).
 \end{equation}Furthermore, he demonstrated the existence of a state merging protocol using an entanglement
 cost\footnote{The numbers $K, L$ are natural numbers, and so we must choose values for $K$ and $L$ such that $\log K - \log L$ is minimal, but greater or equal than the right hand side of eq.~(\ref{eq:berta}).} of
 \begin{equation}\label{eq:berta}
   \log K - \log L = -\bar{H}_{\min}^{\frac{\epsilon^2}{64}}(\psi^{AR}|R) + 4\log \left(\frac{1}{\epsilon} \right ) +
   12.
 \end{equation}
This last result was derived by re-expressing the upper bound to
the quantum error (Lemma 6 of \cite{Merge}) as a function of the
smooth min-entropy.

In this section, we would like to generalize the main results of
\cite{Berta} to the case of multiple parties sharing a state
$\initstate$. Our first result is a reformulation of Lemma
\ref{Lemma:rdmisometry} in terms of min-entropies:
\newcommand{\cR}{{\cal R}}

\begin{Lemma}[Compare to Lemma 4.5 of \cite{Berta}] \label{Lemma:oneshotdecouple}
For each sender $C_i$, let $P_i: C_i \rightarrow C_i^1$ be a projector onto
a subspace $C_i^1$ of $C_i$ and $U_i$ be a unitary on the space $C_i$. Define the state
 \begin{equation}
   \omega^{C^1_MR} := (P_1U_1 \otimes P_2U_2 \otimes \ldots
   \otimes P_mU_m \otimes I_R) \sourcestate (P_1U_1 \otimes P_2U_2
   \otimes \ldots \otimes P_mU_m \otimes I_R)^{\dag}.
 \end{equation}
If the unitaries $U_1, U_2,\ldots, U_m$ are distributed according to
the Haar measure, then for any state $\sigma^R$ of the system $R$, we have
 \begin{equation}\label{eq:weak}
   \mathbb{E} \biggl [ \biggl \|  \omega^{C^1_MR} -
   \frac{L}{d_{C_M}} \tau^{C^1_M} \otimes \psi^{R} \biggr \|_1 \biggr ] \leq
   \frac{L}{d_{C_M}}\sqrt{\sum_{\substack{\cK \subseteq \{1,2,\ldots,m\} \\ \cK
   \neq \emptyset}}
   2^{-(H_{\mathrm{min}}(\psi^{\cK R}|\sigma^R) -
   \log L_{\cK})}},
 \end{equation}
where $L_{\cK}=\prod_{i \in \cK} L_i$.
\end{Lemma}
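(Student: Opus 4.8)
The plan is to rerun the variance argument of Lemma~\ref{Lemma:rdmisometry}, but to weight the reference system $R$ by $\sigma^R$, so that the reduced purities $\Tr[(\psi^{R{\cal T}})^2]$ become conditional collision entropies; a final application of the cited inequality $H_{\min}\le H_2$ then converts these into the min-entropy form. First I would record the mean: since each $U_i$ is Haar distributed and $P_i$ is a rank-$L_i$ projector, $\mathbb{E}_{U_i}[P_iU_i\rho U_i^\dagger P_i] = \frac{L_i}{d_{C_i}}(\Tr\rho)\,\tau^{C^1_i}$, whence $\mathbb{E}[\omega^{C^1_MR}] = \frac{L}{d_{C_M}}\tau^{C^1_M}\otimes\psi^R$, i.e. the operator we subtract is exactly the expectation. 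I would then apply Lemma~\ref{Lemma:TrSchmidt} to the Hermitian operator $S := \omega^{C^1_MR} - \frac{L}{d_{C_M}}\tau^{C^1_M}\otimes\psi^R$ with the weight $\sigma := I_{C^1_M}\otimes\sigma^R$, for which $\Tr(\sigma)=L$. This gives $\|S\|_1 \le \sqrt{L}\,\|(I_{C^1_M}\otimes(\sigma^R)^{-1/4})\,S\,(I_{C^1_M}\otimes(\sigma^R)^{-1/4})\|_2$; taking $\mathbb{E}$ over the $U_i$ and using Jensen's inequality (concavity of the square root) to move the expectation inside the squared norm reduces the problem to bounding the variance $V := \mathbb{E}[\|\tilde\omega - \mathbb{E}\tilde\omega\|_2^2]$, where $\tilde\omega := (I_{C^1_M}\otimes(\sigma^R)^{-1/4})\,\omega^{C^1_MR}\,(I_{C^1_M}\otimes(\sigma^R)^{-1/4})$. (If $\mathrm{supp}(\psi^R)\not\subseteq\mathrm{supp}(\sigma^R)$ then $H_{\min}(\psi^{{\cal T}R}|\sigma^R)=-\infty$ and the bound is vacuous, so I may assume the inclusion and work on the support of $\sigma^R$.)

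The key observation for the variance step is that the weight acts only on $R$, hence commutes with every $P_iU_i$, so one can write $\tilde\omega = (P_1U_1\otimes\cdots\otimes P_mU_m\otimes I_R)\,\tilde\psi\,(P_1U_1\otimes\cdots\otimes P_mU_m\otimes I_R)^\dagger$ with $\tilde\psi := (I_{C_M}\otimes(\sigma^R)^{-1/4})\,\psi^{C_MR}\,(I_{C_M}\otimes(\sigma^R)^{-1/4})$. Consequently the entire swap-trick calculation of Lemma~\ref{Lemma:rdmisometry}---expanding $\mathbb{E}[\Tr\tilde\omega^2]$ via $F^{C^1_MC'^1_M}\otimes F^{RR'}$, factorizing the Haar average over the senders, and invoking the coefficients $r_i\le L_i/d_{C_i}^2$ and $s_i\le L_i^2/d_{C_i}^2$ of eq.~(\ref{coeffs:eq})---goes through verbatim with $\psi$ replaced by $\tilde\psi$. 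This yields $\mathbb{E}[\Tr\tilde\omega^2] = \sum_{{\cal T}}\prod_{i\notin{\cal T}}r_i\prod_{i\in{\cal T}}s_i\,\Tr[(\tilde\psi^{{\cal T}R})^2]$, and subtracting $\Tr[(\mathbb{E}\tilde\omega)^2]$ removes the ${\cal T}=\emptyset$ term (it is over-cancelled, since $\prod_i r_i\le L/d_{C_M}^2$), leaving only nonempty ${\cal T}$. Crucially, the reduced purity is now exactly a collision-entropy expression: $\Tr[(\tilde\psi^{{\cal T}R})^2] = \Tr[((I_{\cal T}\otimes(\sigma^R)^{-1/4})\psi^{{\cal T}R}(I_{\cal T}\otimes(\sigma^R)^{-1/4}))^2] = 2^{-H_2(\psi^{{\cal T}R}|\sigma^R)}$.

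Collecting the coefficient bounds gives $\prod_{i\notin{\cal T}}r_i\prod_{i\in{\cal T}}s_i\le \frac{L}{d_{C_M}^2}L_{\cal T}$, hence $V\le \frac{L}{d_{C_M}^2}\sum_{{\cal T}\neq\emptyset}L_{\cal T}\,2^{-H_2(\psi^{{\cal T}R}|\sigma^R)}$. Applying the cited inequality $H_{\min}(\psi^{{\cal T}R}|\sigma^R)\le H_2(\psi^{{\cal T}R}|\sigma^R)$, so that $2^{-H_2}\le 2^{-H_{\min}}$, and then multiplying by the prefactor $\sqrt{L}$ produces exactly $\frac{L}{d_{C_M}}\sqrt{\sum_{{\cal T}\neq\emptyset}2^{-(H_{\min}(\psi^{{\cal T}R}|\sigma^R)-\log L_{\cal T})}}$, using $L_{\cal T}=2^{\log L_{\cal T}}$. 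The main thing to get right---rather than any single hard estimate---is the bookkeeping that makes the Haar average identical to that of Lemma~\ref{Lemma:rdmisometry}: because $(\sigma^R)^{-1/4}$ lives entirely on $R$ and is untouched by the randomization, the only change is the substitution $\psi\mapsto\tilde\psi$ in the reduced purities, which is precisely what turns $d_R\Tr[(\psi^{R{\cal T}})^2]$ (the $\sigma^R=\tau^R$ case) into the general weighted collision entropy. The one point demanding care is the support condition on $\sigma^R$ noted above.
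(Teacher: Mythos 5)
Your proposal is correct and follows essentially the same route as the paper's own proof: apply Lemma~\ref{Lemma:TrSchmidt} with weight $I^{C^1_M}\otimes\sigma^R$ (so $\Tr\sigma=L$), observe that $(\sigma^R)^{-1/4}$ commutes with the random isometries so the variance bound eq.~(\ref{eq:decouple2}) of Lemma~\ref{Lemma:rdmisometry} applies verbatim to $\tilde\psi$, identify $\Tr[(\tilde\psi^{\cK R})^2]=2^{-H_2(\psi^{\cK R}|\sigma^R)}$, and finish with $H_{\min}\leq H_2$. Your added remarks (the explicit computation of $\mathbb{E}[\omega^{C^1_MR}]$ and the support condition on $\sigma^R$, under which the bound is vacuously true) are harmless refinements rather than a different argument.
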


\begin{proof} Using Lemma \ref{Lemma:TrSchmidt}, we have, for
any state $\sigma^R$ of $R$,
\begin{equation}
\label{eq:A.6} \biggl \| \omega^{C^1_MR} -
\frac{L}{d_{C_M}}\tau^{C^1_M} \otimes \psi^{R} \biggr \|_1 \leq \sqrt{L} \biggl \|
(I^{C^1_M} \otimes (\sigma^R)^{-\frac{1}{4}})(
\omega^{C^1_MR} - \frac{L}{d_{C_M}}\tau^{C^1_M} \otimes \psi^R)(I^{C^1_M} \otimes
(\sigma^R)^{-\frac{1}{4}}) \biggr \|_2.
\end{equation}
Define
\begin{equation}
\begin{split}
 \tilde{\psi}^{C_MR} &:= (I^{C_M} \otimes (\sigma^R)^{-\frac{1}{4}})
 (\psi^{C_MR})(I^{C_M} \otimes (\sigma^R)^{-\frac{1}{4}}) \\
 \tilde{\omega}^{C^1_MR} &:= (P_1U_1 \otimes P_2U_2 \otimes \ldots
 \otimes P_mU_m \otimes I_R) \tilde{\psi}^{C_MR} (P_1U_1 \otimes
 P_2U_2 \otimes \ldots \otimes P_mU_m \otimes I_R)^{\dag}. \\
 \end{split}
\end{equation}
Then, the right hand side of eq.~(\ref{eq:A.6}) can be rewritten
as $\sqrt{L}\biggl \|\tilde{\omega}^{C^1_MR} -
\frac{L}{d_{C_M}}\tau^{C^1_M} \otimes \tilde{\psi}^R \biggr \|_2$.
Using eq.~(\ref{eq:decouple2}) in the proof of Lemma
\ref{Lemma:rdmisometry}, we have

\begin{equation}\label{eq:tilde}
 \begin{split}
\mathbb{E} \biggl [ \biggl \| \tilde{\omega}^{C^1_MR} -
\frac{L}{d_{C_M}}\tau^{C^1_M} \otimes \tilde{\psi}^R \biggr \|_2^2
\biggr ] & \leq \sum_{\substack{ \cK \subseteq \{1,2,\ldots,m\} \\
\cK \neq \emptyset}} \prod_{i \notin \cK} \frac{L_i}{d^2_{C_i}}
\prod_{i \in \cK} \frac{L_i^2}{d^2_{C_i}}
\Tr \biggl [ (\tilde{\psi}^{\cK R})^2 \biggr ] \\
 & \leq \frac{L}{d^2_{C_M}} \sum_{\substack{ \cK \subseteq
\{1,2,\ldots,m\} \\ \cK \neq \emptyset}}L_{\cK} \Tr
\biggl [ (\tilde{\psi}^{\cK R})^2 \biggr ]. \\
 \end{split}
\end{equation}
The quantity $\Tr[(\tilde{\psi}^{\cK R})^2]$ can be rewritten
as:
\begin{equation}\label{eq:trace}
 \begin{split}
   \Tr [(\tilde{\psi}^{\cK R})^2] & = \Tr \biggl [ \biggl (
   \Tr_{\overline{\cK}}(\tilde{\psi}^{C_MR}) \biggr)^2 \biggr ] \\
    &= \Tr \biggl [ \biggl ( (I^{\cK} \otimes
    (\sigma^R)^{-\frac{1}{4}})(\psi^{\cK R})(I^{\cK} \otimes
    (\sigma^R)^{-\frac{1}{4}}) \biggr )^2 \biggr ] \\
   &= 2^{-H_2(\psi^{\cK R} | \sigma^R)}, \\
 \end{split}\
\end{equation}
where $H_2(\psi^{\cK R}|\sigma^R)$ is the conditional collision
entropy of $\psi^{\cK R}$ relative to $\sigma^R$. Combining
eqs.~(\ref{eq:A.6}), (\ref{eq:tilde}) and (\ref{eq:trace})
together, and using the fact that $H_{\mathrm{min}}(\psi^{\cK
R}|\sigma^{R}) \leq H_2(\psi^{\cK R}|\sigma^R)$, we get

\begin{equation}
 \begin{split}
   \mathbb{E} \biggl [ \biggl \| \omega^{C^1_MR} -
   \frac{L}{d_{C_M}}\tau^{C^1_M} \otimes \psi^{R} \biggr \|_1 \biggr ] & \leq \frac{L}{d_{C_M}}
   \sqrt{ \sum_{\substack{ \cK \subseteq
\{1,2,\ldots,m\} \\ \cK \neq \emptyset}}L_{\cK}
2^{-H_2(\psi^{\cK R}|\sigma^R)}} \\
 & \leq \frac{L}{d_{C_M}} \sqrt{\sum_{\substack{\cK \subseteq \{1,2,\ldots,m\} \\ \cK
   \neq \emptyset}}
   2^{-(H_{\mathrm{min}}(\psi^{\cK R}|\sigma^R) -
   \log L_{\cK})}}. \\
 \end{split}
\end{equation}
\end{proof}
With this result in hand, we are now ready to give an adaptation of Lemma 4.6 in \cite{Berta} for our general multiparty setting.

\begin{Theorem}[Compare to Proposition 4.7 of \cite{Berta}]\label{thm:cost1}
Let $\initstate$ be any $(m+2)$-partite pure state and fix
$\epsilon > 0$. Then, for any entanglement cost
$\overrightarrow{E} = (\log K_1 - \log L_1,\log K_2 - \log
L_2,\ldots, \log K_m - \log L_m)$ satisfying
 \begin{equation}\label{eq:cost1}
  \begin{split}
   \log K_{\cK} - \log L_{\cK} := \sum_{i \in \cK} (\log K_i - \log L_i) &\geq -H_{\min}(\psi^{\cK R}|\psi^R) + 4\log \left(\frac{1}{\epsilon} \right )+ 2m + 8  \\
  \end{split}
 \end{equation}
for all non-empty subsets $\cK \subseteq \{1,2,\ldots, m\}$, there
exists a state merging protocol acting on $\initstate$ with error
$\epsilon$. The set $\cKbar$ is defined as
the complement of $\cK$.
\end{Theorem}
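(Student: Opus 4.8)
The plan is to reformulate the random-measurement error bound of Proposition~\ref{prop:isometry} in terms of min-entropies, using Lemma~\ref{Lemma:oneshotdecouple} in place of Lemma~\ref{Lemma:rdmisometry}, and then to choose the ranks $L_i$ (equivalently, the entanglement cost $\overrightarrow{E}$) so that the expected quantum error falls below $\epsilon^2/4$. Proposition~\ref{prop:mergeCond} then converts this into an $m$-party merging protocol with error $\epsilon$.

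First I would fix the random measurement exactly as in Proposition~\ref{prop:isometry}: each sender $C_i$ measures his joint system $C_iC^0_i$ (of dimension $d_{C_i}K_i$) with $N_i=\lfloor d_{C_i}K_i/L_i\rfloor$ rank-$L_i$ partial isometries and one remainder of rank $L_i'<L_i$, each preceded by a Haar-random unitary $U_i$. The observation that drives the proof is that, after tracing out Bob's systems $B$ and $B^0_M$, the reduced state of the senders and the reference is $\psi^{C_MR}\otimes\tau^{C^0_M}$. Hence the decoupling of the senders' shares from $R$ is controlled by Lemma~\ref{Lemma:oneshotdecouple} applied to this state, with the choice $\sigma^R=\psi^R$ and with the composite systems $C_iC^0_i$ playing the role of the $C_i$ there; the decoupling target $\tfrac{L}{d_{C_M}K}\tau^{C^1_M}\otimes\psi^R$ is precisely the normalized form of the $\omega_J^{C^1_MR}$.

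The key computation is to evaluate the min-entropy appearing in that lemma. For a nonempty $\cK$, the relevant reduced state is $\psi^{\cK R}\otimes\tau^{C^0_{\cK}}$, and by the additivity lemma together with $H_{\min}(\tau^{C^0_{\cK}})=\log K_{\cK}$ one obtains $H_{\min}(\psi^{\cK R}\otimes\tau^{C^0_{\cK}}\,|\,\psi^R)=H_{\min}(\psi^{\cK R}\,|\,\psi^R)+\log K_{\cK}$. Feeding this per-outcome estimate into the summation over measurement outcomes, and then adding the normalization correction and the contribution of the remainder outcomes precisely as in the proof of Proposition~\ref{prop:isometry}, I would arrive at
\begin{equation}
\mathbb{E}\big[Q_{\cal I}(\initstate\otimes\Phi^K)\big] \leq 2\sum_{\cK\neq\emptyset}\prod_{i\in\cK}\frac{L_i}{d_{C_i}K_i} + 2\sqrt{\sum_{\cK\neq\emptyset}2^{-(H_{\min}(\psi^{\cK R}|\psi^R)+\log K_{\cK}-\log L_{\cK})}},
\end{equation}
with the sums running over nonempty $\cK\subseteq\{1,2,\ldots,m\}$.

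Finally I would impose eq.~(\ref{eq:cost1}). It forces $H_{\min}(\psi^{\cK R}|\psi^R)+\log K_{\cK}-\log L_{\cK}\geq 4\log(1/\epsilon)+2m+8$ for every nonempty $\cK$, so each summand of the second term is at most $\epsilon^4 2^{-2m-8}$; combined with the trivial bound $H_{\min}(\psi^{\cK R}|\psi^R)\leq\log d_{\cK}$, the same condition bounds each product in the first term by $\epsilon^4 2^{-2m-8}$ as well. Since there are fewer than $2^m$ nonempty subsets, both terms are dominated by $\epsilon^2/4$, and an averaging argument yields a specific choice of the $U_i$ with $Q_{\cal I}\leq\epsilon^2/4$; Proposition~\ref{prop:mergeCond} then delivers a protocol of error $\epsilon$. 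The step I expect to require the most care is the bookkeeping around the pre-shared entanglement: one must run the decoupling on $\psi^{C_MR}\otimes\tau^{C^0_M}$ rather than on $\psi$ alone, verify that the maximally mixed ancillas contribute exactly the $\log K_{\cK}$ shift through additivity, and control both the rank-$L_i'$ remainder outcomes and the renormalization of the post-measurement states $\psi_J^{C^1_MR}=\omega_J^{C^1_MR}/p_J$ — this is where the constants $2m+8$ and $4\log(1/\epsilon)$ are consumed.
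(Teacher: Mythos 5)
Your proposal is correct and follows essentially the same route as the paper's own proof: fix the random measurements of Proposition~\ref{prop:isometry}, apply Lemma~\ref{Lemma:oneshotdecouple} to $\psi^{C_MR}\otimes\tau^{C^0_M}$ with $\sigma^R=\psi^R$, use additivity to absorb the ancillas as the $\log K_{\cK}$ shift, handle normalization and remainder outcomes as in eq.~(\ref{eq:final}), and finish with Proposition~\ref{prop:mergeCond}. The only cosmetic difference is that you bound the first term via the trace bound $H_{\min}(\psi^{\cK R}|\psi^R)\leq\log d_{C_{\cK}}$ directly, where the paper routes through strong subadditivity and $H_{\min}(\psi^{\cK})\leq \log d_{C_{\cK}}$; both are valid, and your constants indeed make the total expected error at most $\epsilon^2/4$.
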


\begin{proof}
We proceed by fixing a random measurement for each sender $C_i$ as in Proposition $\ref{prop:isometry}$. 
We can describe $C_i$'s random measurement using $N_i:=\lfloor
\frac{d_{C_i}K_i}{L_i} \rfloor$ partial isometries $P^j_i =
Q^j_iU_i$, where $U_i$ is a Haar distributed random unitary acting
on the system $C_iC^0_i$ and $Q^j_i$ is as defined in Proposition
\ref{prop:isometry}. If $d_{C_i}K_i > N_i L_i$, there is an
additional partial isometry $P^0_i$ of rank $L'_i:=d_{C_i}K_i -
N_iL_i < L_i$. Applying the previous lemma to the
state $\psi^{C_MR} \otimes \tau^{K_1} \otimes \tau^{K_2} \otimes
\ldots \otimes \tau^{K_m}$, with $\sigma^R = \psi^R$, we get
 \begin{equation}\label{eq:deriv}
   \begin{split}
  \mathbb{E} \bigg [\sum_{j_{1}=1}^{N_1} \sum_{j_2=1}^{N_2} \cdots \sum_{j_m=1}^{N_m} \bigg \| &\omeg_{J}
- \decouple \bigg \|_1 \bigg ] \\ &\leq \frac{\prod_{i=1}^m N_i
L_i}{d_{C_M}K} \sqrt{\sum_{\substack{\cK \subseteq
\{1,2,\ldots,m\} \\ \cK \neq \emptyset}}
2^{-(H_{\mathrm{min}}(\psi^{\cK R}|\psi^R) + \log K_{\cK} -
   \log L_{\cK})}} \\
  &\leq \sqrt{\sum_{\substack{\cK \subseteq \{1,2,\ldots,m\} \\ \cK
   \neq \emptyset}}
   2^{-(H_{\mathrm{min}}(\psi^{\cK R}|\psi^R) + \log K_{\cK} -
   \log L_{\cK})}}, \\
\end{split}
\end{equation}
where $K_{\cK} = \prod_{i \in \cK} K_i$. Note that to get the first inequality, we have used the fact that $H_{\min}(\psi^{\cK R} \otimes \tau^{K_{\cK}}|\psi^{R}) = H_{\min}(\psi^{\cK R}|\psi^{R}) + \log K_{\cK}$.

Using eq.~(\ref{eq:cost1}), we can simplify the previous
inequality and obtain
 \begin{equation}
   \begin{split}
  \mathbb{E} \bigg [\sum_{j_{1}=1}^{N_1} \sum_{j_2=1}^{N_2} \cdots \sum_{j_m=1}^{N_m} \bigg \| &\omeg_{J}
- \decouple \bigg \|_1 \bigg ] \\ &\leq \sqrt{\sum_{\substack{\cK
\subseteq \{1,2,\ldots,m\} \\ \cK
   \neq \emptyset}}
   2^{-(H_{\mathrm{min}}(\psi^{\cK R}|\psi^R) + \log K_{\cK} -
   \log L_{\cK})}} \\
   &\leq \frac{\epsilon^2}{2^{\frac{m+8}{2}}} \leq \frac{\epsilon^2}{16}.\\
\end{split}
\end{equation}
Taking normalisation into account, with $p_J =
\Tr(\omega_J^{C^1_MR})$ and $\psi^{C^1_MR}_J =
\frac{1}{p_J}\omega_J^{C^1_MR}$, we can trace out the left hand
side of the previous set of inequalities, and get
\begin{equation}
\mathbb{E} \bigg [\sum_{j_{1}=1}^{N_1} \sum_{j_2=1}^{N_2} \cdots \sum_{j_m=1}^{N_m} \bigg | p_J - \frac{L}{d_{C_M}} \bigg |_1 \bigg ] \leq \frac{\epsilon^2}{16}. \\
\end{equation}
By applying the triangle inequality, we obtain
\begin{equation}
  \mathbb{E} \bigg [\sum_{j_{1}=1}^{N_1} \sum_{j_2=1}^{N_2} \cdots \sum_{j_m=1}^{N_m} p_J \bigg \| \psi^{C^1_MR}_{J}
- \tau^{C^1_M} \otimes \psi^R \bigg \|_1 \bigg ] \leq \frac{\epsilon^2}{8}.
\end{equation}
Using this, and eq.~(\ref{eq:final}) in the proof of Proposition
\ref{prop:isometry}, we can get an upper bound to the quantum
error $Q_{\cal I}(\psi^{C_1C_2\ldots C_mBR} \otimes \Phi^K)$:
\begin{equation}
\begin{split}
 \mathbb{E} \bigg [\sum_{j_{1}=0}^{N_1} & \sum_{j_2=0}^{N_2} \cdots \sum_{j_m=0}^{N_m} p_J \bigg \| \psi^{C^1_MR}_{J} - \tau^{C^1_M} \otimes \psi^R \bigg \|_1 \bigg ] \\
&\leq 2 \sum_{\substack{{\cal T} \subseteq \{1,2,...,m\} \\ {\cal
T} \neq \emptyset}}\prod_{i \in {\cal T}} \frac{L_i}{d_{C_i}K_i} +
\mathbb{E} \sum_{j_{1}=1}^{N_1} \sum_{j_2=1}^{N_2} \cdots
\sum_{j_m=1}^{N_m} p_J \bigg \| \psi^{C^1_MR}_{J}
- \tau^{C^1_M} \otimes \psi^R \bigg \|_1  \\
 &\leq  \sum_{\substack{{\cal T} \subseteq \{1,2,...,m\}}} \frac{2\epsilon^4 2^{H_{\min}(\psi^{\cK R}|\psi^R)}}{2^{2m+8}d_{C_{\cK}}} + \frac{\epsilon^2}{8} \\
 &\leq  \sum_{\substack{{\cal T} \subseteq \{1,2,...,m\}}} \frac{2\epsilon^4  2^{H_{\min}(\psi^{\cK})}}{2^{2m+8}d_{C_{\cK}}} + \frac{\epsilon^2}{8}\\
&\leq  \frac{\epsilon^4}{2^{m+7}} + \frac{\epsilon^2}{8} \leq \frac{\epsilon^2}{4}\\
   \end{split}
 \end{equation}
To get the third inequality, we have used the strong subadditivity
of the min-entropy \cite{Renner01}:
\[ H_{\min}(\psi^{\cK R}|\psi^R) \leq H_{\min}(\psi^{\cK}). \]
The last line follows from the fact that $H_{\min}(\psi^{\cK}) =
-\log \lambda_{\max}(\psi^{\cK}) \leq \log d_{C_{\cK}}$. From
Proposition \ref{prop:mergeCond}, we can conclude there exists a
state merging protocol acting on $\initstate$ with error
$\epsilon$.
\end{proof}

When $m=1$, the previous result yields a merging protocol of error $\epsilon$
and entanglement cost $\log K-\log L =
-H_{\min}(\psi^{C_1R}|\psi^R) + 4\log \left(\frac{1}{\epsilon} \right ) + 10$. Berta \cite{Berta}
showed however that the min-entropy $H_{\min}(\psi^{C_1R}|\psi^R)$
of the state $\psi^{C_1R}$ relative to $\psi^R$ can be replaced by
the min-entropy $H_{\min}(\psi^{C_1R}|R)$ of $\psi^{C_1R}$
relative to $R$. This yields a smaller entanglement cost, and we
can ask whether the right hand side of eq.~(\ref{eq:cost1}) can be
replaced by a formula involving min-entropies of this form when $m > 1$. To
allow this to work, we would need a more general version of Lemma
\ref{Lemma:oneshotdecouple}, where eq.~(\ref{eq:weak}) is replaced
by
\begin{equation}\label{eq:x1}
\mathbb{E} \biggl [ \biggl \|  \omega^{C^1_MR} -
   \frac{L}{d_{C_M}} \tau^{C^1_M} \otimes \psi^{R} \biggr \|_1 \biggr ] \leq
   \frac{L}{d_{C_M}}\sqrt{\sum_{\substack{\cK \subseteq \{1,2,\ldots,m\} \\ \cK
   \neq \emptyset}}
   2^{-(H_{\mathrm{min}}(\psi^{\cK R}|\sigma_{\cK}^R) -
   \log L_{\cK})}},
\end{equation}
and this inequality holds for $2^m - 1$ possibly different states
$\sigma_{\cK}^R$. Using this stronger form, we could set
$\sigma_{\cK}^R = \bar{\sigma}^R_{\cK}$, with $H_{\min}(\psi^{\cK
R}|\bar{\sigma}^R_{\cK})=H_{\min}(\psi^{\cK R}|R)$ and bound the
left hand side of eq.~(\ref{eq:x1}) by setting $\log K_{\cK} -
\log L_{\cK} \geq -H_{\min}(\psi^{\cK R}|R) +
4\log \left(\frac{1}{\epsilon} \right ) + 2m +8$. However, it is unclear if such a
stronger form of Lemma $\ref{Lemma:oneshotdecouple}$ can be
obtained.

Berta \cite{Berta} also showed that the previous result can be
further improved when $m=1$ by smoothing the min entropy
$H_{\min}(\psi^{C_1R}|R)$ around sub-normalized operators
$\bar{\psi}^{C_1R}$ which are $\epsilon$-close in the trace distance to the state
$\psi^{C_1R}$. It is also unclear if eq.~(\ref{eq:weak}) in
Lemma \ref{Lemma:oneshotdecouple} can be strengthened to
\begin{equation}\label{eq:x}
\mathbb{E} \biggl [ \biggl \|  \omega^{C^1_MR} -
   \frac{L}{d_{C_M}} \tau^{C^1_M} \otimes \psi^{R} \biggr \|_1 \biggr ] \leq
   \frac{L}{d_{C_M}}\sqrt{\sum_{\substack{\cK \subseteq \{1,2,\ldots,m\} \\ \cK
   \neq \emptyset}}
   2^{-(H^{\epsilon}_{\mathrm{min}}(\psi^{\cK R}|\sigma^R) -
   \log L_{\cK})}},
\end{equation}
for any fixed $\epsilon > 0$.

\begin{conjecture}\label{conj:cost1}
Let $\psi^{C_1C_2\ldots C_mBR}$ be an $(m+2)$-partite state. For
any $\epsilon > 0$, there exists a multiparty state merging of
error $\epsilon$ whenever the entanglement cost
$\overrightarrow{E}:=(\log K_1 - \log L_1, \log K_2 - \log L_2,
\ldots , \log K_m - \log L_m)$ satisfies
\begin{equation}\label{eq:cost2}
   \log K_{\cK} - \log L_{\cK} := \sum_{i \in \cK} (\log K_i - \log L_i) \geq   -H^{\epsilon}_{\min}(\psi^{\cK R}|R) +
   O(\log1 / \epsilon)+O(m)
 \end{equation}
for all non-empty subsets $\cK \subseteq \{1,2,\ldots, m\}$.
\end{conjecture}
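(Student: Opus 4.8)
The plan is to follow the architecture of the proof of Theorem~\ref{thm:cost1} verbatim: each sender $C_i$ performs the random von Neumann measurement of Proposition~\ref{prop:isometry} with projectors of rank $L_i$ (and one remainder of rank $L'_i \leq L_i$), the outcomes are broadcast, and the decoder applies the Uhlmann isometry supplied by Proposition~\ref{prop:mergeCond}. Under this template the whole problem reduces to controlling the averaged decoupling error $\mathbb{E}[\sum_J \|\omeg_J - \decoupled\|_1]$ and feeding the resulting bound on $Q_{\cal I}$ into Proposition~\ref{prop:mergeCond}. The two strengthenings needed to reach~(\ref{eq:cost2}) are exactly the inequalities flagged before the conjecture: the subset-dependent reference version~(\ref{eq:x1}), which would convert $H_{\min}(\psi^{\cK R}|\psi^R)$ into the optimized $H_{\min}(\psi^{\cK R}|R)$, and the smoothed version~(\ref{eq:x}), which would replace the optimized min-entropy by its $\epsilon$-smoothed counterpart. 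If both hold, the arithmetic collapses as in Theorem~\ref{thm:cost1}: choosing $\log K_{\cK} - \log L_{\cK} \geq -H^{\epsilon}_{\min}(\psi^{\cK R}|R) + O(\log 1/\epsilon) + O(m)$ forces each of the $2^m-1$ terms in the square root below the required threshold, giving quantum error $O(\epsilon^2)$ and hence, via Proposition~\ref{prop:mergeCond}, a merging protocol of error $O(\epsilon)$.

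The first substantive step is to prove~(\ref{eq:x1}). The difficulty is structural: in Lemma~\ref{Lemma:oneshotdecouple} the single application of Lemma~\ref{Lemma:TrSchmidt} commits the entire calculation to one reference $\sigma^R$, after which expanding the Hilbert--Schmidt norm produces the sum over $\cK$ with every term $\Tr[(\tilde{\psi}^{\cK R})^2] = 2^{-H_2(\psi^{\cK R}|\sigma^R)}$ tied to that same $\sigma^R$. To allow a different $\sigma_{\cK}^R$ per subset I would try to break the single trace-norm estimate into a telescoping sum indexed by the subsets---introducing intermediate decoupled states that remove the correlations of one $C_i$ at a time---and bound each increment separately by Lemma~\ref{Lemma:TrSchmidt} with its own optimally chosen reference. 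An alternative is to import a genuinely multipartite decoupling estimate (in the spirit of the modern decoupling theorem of Dupuis, Berta, Wullschleger and Renner), in which the randomizing unitaries on the different $C_i$ are analyzed jointly and the bound is expressed through the individual conditional min-entropies $H_{\min}(\psi^{\cK R}|R)$ rather than through a common reference. Either route must reproduce the clean factor $L/d_{C_M}$ on the right of~(\ref{eq:x1}).

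For the smoothing step~(\ref{eq:x}) I would, for each nonempty $\cK$, let $\bar{\psi}_{\cK}^{\cK R}$ be a sub-normalized operator within purified distance $\epsilon$ of $\psi^{\cK R}$ attaining $H^{\epsilon}_{\min}(\psi^{\cK R}|R)$, and set the ranks $L_i$ so that the decoupling bound is governed by these smoothed entropies. Running the measurement designed for the smoothed data on the true state changes the averaged decoupling distance by only $O(\epsilon)$, by continuity of the trace norm together with the relation~(\ref{eq:purified}) between trace and purified distance---the analogue of the gentle-measurement passage from $\Psi$ to $\psi^{\otimes n}$ in the proof of Theorem~\ref{thm:statemerging}. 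The genuine obstruction, and the reason the statement remains a conjecture, is that these $2^m-1$ optimizers $\bar{\psi}_{\cK}$ are mutually incompatible: there need be no single global sub-normalized state on $C_1 \cdots C_m R$ whose marginal on each $\cK R$ equals, or even approximates, the corresponding $\bar{\psi}_{\cK}$. One therefore cannot smooth the state once and read off all marginal entropies; the smoothing must be performed inside the decoupling inequality itself, which is precisely what~(\ref{eq:x}) asserts and what is not known to hold.

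I expect the main obstacle to be establishing~(\ref{eq:x1}) with subset-dependent references while retaining the additive-over-subsets form of the bound. The telescoping idea risks reintroducing the sequential, one-party-at-a-time structure that the paper's whole approach is designed to avoid, and it is not clear that the increments can be charged to the \emph{optimized} entropies $H_{\min}(\psi^{\cK R}|R)$ rather than to entropies relative to a fixed $\sigma^R$. Absent such a lemma, the best unconditional result remains Theorem~\ref{thm:cost1}, with the unoptimized, unsmoothed $H_{\min}(\psi^{\cK R}|\psi^R)$; the conjecture should be read as the assertion that the single-party improvements of~\cite{Berta} survive the passage to $m \geq 2$, contingent on one of these decoupling strengthenings.
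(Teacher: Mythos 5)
You were asked to prove something that the paper itself does not prove: the statement is Conjecture~\ref{conj:cost1}, and the authors explicitly leave it open. Immediately before stating it, they explain that their proof of Theorem~\ref{thm:cost1} would yield the conjecture if Lemma~\ref{Lemma:oneshotdecouple} could be strengthened in two ways --- to inequality~(\ref{eq:x1}), allowing a different reference state $\sigma^R_{\cK}$ for each of the $2^m-1$ subsets (which converts $H_{\min}(\psi^{\cK R}|\psi^R)$ into the optimized $H_{\min}(\psi^{\cK R}|R)$), and to inequality~(\ref{eq:x}), allowing smoothing inside the decoupling bound --- and that it is unclear whether either strengthening holds. Your proposal arrives at exactly this position: the conditional part of your argument (given~(\ref{eq:x1}) and~(\ref{eq:x}), run the random-measurement protocol of Proposition~\ref{prop:isometry}, feed the decoupling bound into Proposition~\ref{prop:mergeCond}, and the arithmetic of Theorem~\ref{thm:cost1} goes through with the improved entropies) is correct, and your diagnosis of the obstruction --- that the $2^m-1$ smoothing optimizers $\bar{\psi}_{\cK}$ need not be marginals of any single sub-normalized state, so the smoothing cannot be performed once globally but must happen inside the decoupling inequality, which the monolithic single-measurement protocol does not permit --- matches the paper's own stated difficulty almost verbatim.

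The genuine gap, then, is that your proposal establishes nothing unconditional beyond what Theorem~\ref{thm:cost1} already gives: neither of your two suggested routes to~(\ref{eq:x1}) is carried out, and both face the problem you yourself flag. The telescoping decomposition, in which intermediate decoupled states remove one sender's correlations at a time, naturally charges each increment to an entropy conditioned on a reference that includes the not-yet-decoupled senders' systems, i.e.\ it reproduces the sequential entropies $H_{\min}(\psi^{C_i R_{\pi^{-1}(i)}}|R_{\pi^{-1}(i)})$ of Proposition~\ref{prop:epsilonCost} rather than the simultaneous subset entropies $H_{\min}(\psi^{\cK R}|R)$ demanded by~(\ref{eq:cost2}); one would also pay an $m$-dependent union-bound penalty in the error, exactly the degradation the intrinsically multiparty protocol is designed to avoid. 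The appeal to a joint multipartite decoupling theorem is plausible but is an invocation, not a proof, and the key issue --- whether the cross terms generated by expanding the Hilbert--Schmidt norm can be bounded by $2^m-1$ \emph{independently optimized} conditional entropies while retaining the prefactor $L/d_{C_M}$ --- is precisely the open question. In short: your write-up is an accurate and honest account of why the statement is a conjecture, but as a proof attempt it reduces the conjecture to unproven lemmas, which is the same gap the paper leaves open.
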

The main difficulty in proving the conjecture is that it allows independent smoothing of each of the min-entropies. It is straightforward to modify our proof to allow smoothing using a common state for all the min-entropies, but the
monolithic nature of the protocol does not naturally permit tailoring
the smoothing state term-by-term. We can, however, give a partial characterization of the entanglement cost in terms of smooth min-entropies if we apply the single-shot state merging protocol of \cite{Berta} on one sender at a time.

\begin{Proposition}\label{prop:epsilonCost}
For a $(m+2)$-partite pure state $\psi^{C_1C_2\ldots C_mBR}$, fix an $\epsilon > 0$ and let $\pi:\{1,2,\ldots,m\}\rightarrow \{1,2,\ldots,m\}$ be any
ordering of the $m$-senders $C_1,C_2,\ldots,C_m$. Then, for any entanglement cost $\overrightarrow{E}:=(\log
K_1 - \log L_1, \log K_2 - \log L_2, \ldots , \log K_m - \log
L_m)$ satisfying
 \begin{equation} \label{eq:entanglementcost}
     \log K_{i} - \log L_{i} \geq  -H^{\frac{\epsilon^2}{64m^2}}_{\min}(\psi^{C_i R_{\pi^{-1}(i)}}|R_{\pi^{-1}(i)}) + 4 \log \left(\frac{m}{\epsilon} \right ) + 12 \quad \text{for all $1 \leq i \leq m$}, \\
 \end{equation}
where $R_i:=R \bigotimes^m_{j=i+1}C_{\pi(j)}$, there exists a multiparty state merging
protocol acting on the state $\psi^{C_1C_2\ldots C_mBR}$ with error $\epsilon$.
\end{Proposition}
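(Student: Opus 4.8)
The plan is to realize the multiparty merge as a sequence of $m$ single-sender merges performed in the order dictated by $\pi$, invoking the one-shot protocol of \cite{Berta} (eq.~(\ref{eq:berta})) at each step and allocating an error budget of $\epsilon/m$ to each. First I would merge $C_{\pi(1)}$, then $C_{\pi(2)}$, and so on, so that after the $p$-th step the shares $C_{\pi(1)},\ldots,C_{\pi(p)}$ have been transferred to Bob. For the $p$-th step I treat $C_{\pi(p)}$ as ``Alice,'' the receiver together with the already-merged shares $BC_{\pi(1)}\cdots C_{\pi(p-1)}$ as ``Bob,'' and crucially the purifying system together with the \emph{not-yet-merged} senders, $R_{p}=R\bigotimes_{j=p+1}^{m}C_{\pi(j)}$, as the effective reference. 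Since only $C_{\pi(p)}$ and the receiver act during this step while the remaining senders stay idle, the composite operation is LOCC, and the relevant marginal controlling the cost is exactly $\psi^{C_{\pi(p)}R_p}$, which merging earlier senders leaves intact because merging preserves the reference.

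The cost bookkeeping then falls out of Berta's formula. Applying eq.~(\ref{eq:berta}) with error parameter $\epsilon/m$ shows the $p$-th merge succeeds provided $\log K_{\pi(p)}-\log L_{\pi(p)}\geq -\bar H^{\epsilon^2/(64m^2)}_{\min}(\psi^{C_{\pi(p)}R_p}|R_p)+4\log(m/\epsilon)+12$, since $(\epsilon/m)^2/64=\epsilon^2/(64m^2)$ and $4\log(m/\epsilon)=4\log(1/(\epsilon/m))$; writing $p=\pi^{-1}(i)$ recovers the indexing of eq.~(\ref{eq:entanglementcost}). To pass from the trace-distance smoothing $\bar H^{\cdot}_{\min}$ used by Berta to the purified-distance smoothing $H^{\cdot}_{\min}$ appearing in the statement, I would invoke the left inequality of eq.~(\ref{eq:minIneqs}), which gives $-\bar H^{\delta}_{\min}(\psi^{C_iR_{\pi^{-1}(i)}}|R_{\pi^{-1}(i)})\leq -H^{\delta}_{\min}(\psi^{C_iR_{\pi^{-1}(i)}}|R_{\pi^{-1}(i)})$ for $\delta=\epsilon^2/(64m^2)$; hence hypothesis~(\ref{eq:entanglementcost}), stated with $H^{\cdot}_{\min}$, is a sufficient (indeed slightly stronger) condition for Berta's requirement.

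It remains to control how the per-step errors accumulate. Let $\Lambda_p$ denote the fixed LOCC channel implementing the $p$-th merge, let $\rho_p$ be the ideal state after $p$ steps, and let $\sigma_p=\Lambda_p\circ\cdots\circ\Lambda_1(\initstate\otimes\Phi^K)$ be the actual state. Each $\Lambda_p$ is a trace-preserving completely positive map and is therefore contractive in trace norm, while by construction $\|\Lambda_p(\rho_{p-1})-\rho_p\|_1\leq \epsilon/m$. Writing $\|\sigma_p-\rho_p\|_1\leq\|\Lambda_p(\sigma_{p-1})-\Lambda_p(\rho_{p-1})\|_1+\|\Lambda_p(\rho_{p-1})-\rho_p\|_1\leq\|\sigma_{p-1}-\rho_{p-1}\|_1+\epsilon/m$ and iterating gives $\|\sigma_m-\rho_m\|_1\leq m\cdot(\epsilon/m)=\epsilon$, i.e.\ a multiparty merging protocol of error $\epsilon$. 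The main thing to get right is this telescoping argument: because each single-sender protocol is designed for the \emph{ideal} intermediate state but is actually run on an approximation of it, one must separate the two sources of discrepancy and lean on contractivity to absorb the mismatch, and one must check that the reference marginal $\psi^{C_{\pi(p)}R_p}$ governing the $p$-th cost is genuinely the original marginal so that the entropies in~(\ref{eq:entanglementcost}) are computed from $\initstate$.
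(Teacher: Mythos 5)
Your proposal is correct and follows essentially the same route as the paper's own proof: merge the senders one at a time in the order $\pi$, invoke Berta's one-shot result with error budget $\epsilon/m$ per step (treating the not-yet-merged senders together with $R$ as the effective reference $R_p$), convert between the trace-distance and purified-distance smoothings via eq.~(\ref{eq:minIneqs}), and control error accumulation by the triangle inequality plus contractivity of the trace norm under the LOCC channels. Your explicit telescoping induction is just a cleaner packaging of the argument the paper carries out for the first two steps and then declares "similar" for the rest.
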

\begin{proof}
Our multiparty state merging protocol for the state
$\psi^{C_1C_2\ldots C_mBR}$ will consists of sending each sender's
share to the receiver one at a time according to the
ordering $\pi$: The sender $C_{\pi(1)}$ will merge his part of the
state first, followed by $C_{\pi(2)}$, $C_{\pi(3)}$, etc. We can
view the input state $\psi^{C_1C_2\ldots C_m BR}$ as a tripartite
pure state $\psi^{C_{\pi(1)}R_1B}$, with the reference system
$R_1$ being composed of the systems $C_{\pi(2)}C_{\pi(3)}\ldots
C_{\pi(m)}R$. According to Berta \cite{Berta}, there exists a
state merging protocol of error $\epsilon/m$ and entanglement cost
\begin{equation}
 \begin{split} \log K'_1 - \log L'_1 &:=
-\bar{H}^{\frac{\epsilon^2}{64m^2}}_{\min}(\psi^{C_{\pi(1)} R_1}|R_1) +
4 \log \left(\frac{m}{\epsilon} \right ) + 12 \\
 &\leq -H^{\frac{\epsilon^2}{64m^2}}_{\min}(\psi^{C_{\pi(1)} R_1}|R_1) +
4 \log \left(\frac{m}{\epsilon} \right ) + 12 \\
 &\leq \log K_1 - \log L_1, \\
  \end{split}
\end{equation} which will produce an output
state $\rho^{C^1_{\pi(1)}B^1_{\pi(1)}BR_1}$ satisfying
\begin{equation}
  \bigg \| \rho_1^{C^1_{\pi(1)}B^1_{\pi(1)}B_{\pi(1)}B R_1} - \psi^{B_{\pi(1)}BR_1} \otimes \Phi^{L_1} \bigg \|_1 \leq \frac{\epsilon}{m},
\end{equation}
where the system $B_{\pi(1)}$ is substituted for the system
$C_{\pi(1)}$. After $C_{\pi(1)}$ has merged his share, the next
sender $C_{\pi(2)}$ will perform a random measurement on his share
of the state and send the measurement outcome to the receiver.
Suppose, for the moment, that the parties share the state
$\psi^{B_{\pi(1)}BR_1} \otimes \Phi^{L_1}$ instead of the output
state $\rho_1$. The state $\psi^{B_{\pi(1)}BR_1}$ can be viewed as
a tripartite state $\psi^{C_{\pi(2)}B_2R_2}$, with
$B_2:=B_{\pi(1)}B$ and $R_2:=C_{\pi(3)}C_{\pi(4)}\ldots
C_{\pi(m)}R$. Using Berta's result once more, we know there exists
a state merging protocol of error $\epsilon/m$ and entanglement
cost \begin{equation}
\begin{split}
 \log K'_2 - \log L'_2 &=
-\bar{H}^{\frac{\epsilon^2}{64m^2}}_{\min}(\psi^{C_{\pi(2)}R_2}|R_2) +
4\log \left(\frac{m}{\epsilon} \right )+12  \\
&\leq -H^{\frac{\epsilon^2}{64m^2}}_{\min}(\psi^{C_{\pi(2)}R_2}|R_2) +
4\log \left(\frac{m}{\epsilon} \right )+12  \\
& \leq \log K_2 - \log L_2, \\
\end{split}
\end{equation}
which produces an output
state $\rho_2^{C^1_{\pi(1)}C^1_{\pi(2)}B^1_{\pi(1)}B^1_{\pi(2)}B_{\pi(2)}B_2 R_2}$ satisfying
\begin{equation}
  \bigg \| \rho_2^{C^1_{\pi(1)}C^1_{\pi(2)}B^1_{\pi(1)}B^1_{\pi(2)}B_{\pi(2)}B_2 R_2} - \psi^{B_{\pi(2)}B_2R_2} \otimes \Phi^{L_1} \otimes \Phi^{L_2} \bigg \|_1 \leq \frac{\epsilon}{m},
\end{equation}
where the system $B_{\pi(2)}$ is substituted for the system
$C_{\pi(2)}$. If we apply the same protocol on the state
$\rho_1^{C^1_{\pi(1)}B^1_{\pi(1)}B_{\pi(1)}B R_1}$ instead, we get
an output state $\rho_3$ which satisfies
\begin{equation}
 \begin{split}
 \bigg \|& \rho_3^{C^1_{\pi(1)}C^1_{\pi(2)}B^1_{\pi(1)}B^1_{\pi(2)}B_{\pi(2)}B^2 R_2} - \psi^{B_{\pi(1)}B_{\pi(2)}B R_2} \otimes \Phi^{L_1} \otimes \Phi^{L_2} \bigg \|_1 \\ &\leq \|\rho_3 - \rho_2 \|_1 + \|\rho_2 - \psi^{B_{\pi(1)}B_{\pi(2)}B R_2} \otimes \Phi^{L_1} \otimes \Phi^{L_2} \|_1 \\
   &\leq \| \rho_1 - \psi^{B_{\pi(1)}BR_1} \otimes \Phi^{L_1} \|_1 + \|\rho_2 - \psi^{B_{\pi(1)}B_{\pi(2)}B R_2} \otimes \Phi^{L_1} \otimes \Phi^{L_2} \|_1 \\
  &\leq \frac{2\epsilon}{m}, \\
  \end{split}
\end{equation}
where we have used the triangle inequality and monotonicity of the
trace distance under quantum operations. The analysis for the
other senders $C_{\pi(3)},C_{\pi(4)},\ldots,C_{\pi(m)}$ can be
performed in a similar way, which leads to a multiparty state
merging protocol of error $\epsilon$ and entanglement cost $\overrightarrow{E}:=(\log K_1 - \log L_1, \log K_2 - \log L_2, \ldots, \log K_m - \log L_m)$ satisfying
eq.~(\ref{eq:entanglementcost}). The final output state $\rho_{m}$
satisfies
  \begin{equation}
  \bigg \| \rho_m - \psi^{B_{\pi(1)}B_{\pi(2)}\ldots B_{\pi(m)} B R} \otimes \Phi^{L_1} \otimes \Phi^{L_2} \otimes \ldots \otimes \Phi^{L_m}\bigg \|_1 \leq \epsilon.
  \end{equation}
\end{proof}

Figure \ref{fig:rateregion1} depicts the boundaries of the regions described by Theorem~\ref{thm:cost1}, Conjecture~\ref{conj:cost1} and Proposition \ref{prop:epsilonCost}. Note that the hatched area is not part of the cost region described
by Proposition~\ref{prop:epsilonCost}.

\begin{figure}[t]
\centering
    \includegraphics[width=0.7\textwidth]{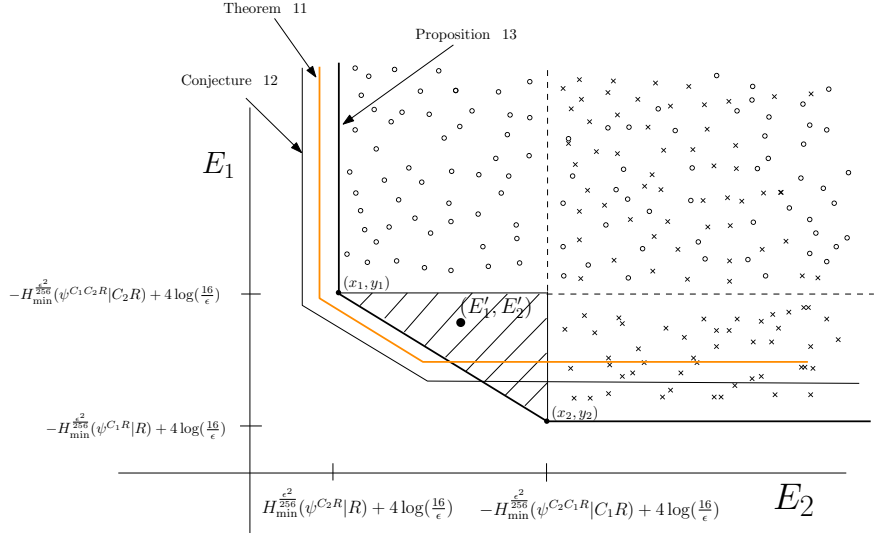}
\caption{Entanglement cost for multiparty merging in the one-shot
regime when $m=2$. The axes correspond to the entanglement cost
$E_1:=\log K_1-\log L_1$ and $E_2:=\log K_2 - \log L_2$. For
$m=2$, we have two permutations of the set $\{1,2\}$, and
according Proposition~\ref{prop:epsilonCost}, two intersecting
regions (circles and crosses) where existence of a $2$-party state
merging of error $\epsilon$ can be shown. }\label{fig:rateregion1}
\end{figure}

\renewcommand{\min}{{\operatorname{min}}}
\renewcommand{\max}{{\operatorname{max}}}
\newcommand{\tr}{{\operatorname{Tr}}}

\section{A Worked Example} \label{sec:example}

The proof of Theorem \ref{thm:cost1} is significantly more complicated than that of Proposition \ref{prop:epsilonCost}. To illustrate the benefits accruing from the additional effort, we will compare the two results' estimates of the costs achievable for merging $C_1$ and $C_2$ to $R$ for states of the form
\begin{equation}
|\psi\rangle^{C_1 C_2 R} =
\frac{1}{\sqrt{H_d}} \sum_{j=1}^d \frac{1}{\sqrt{j}} |j\rangle^{C_1} |\psi_j\rangle^{C_2} |j\rangle^{R},
\end{equation}
where $H_d = \sum_{j=1}^d 1/j$ is the $d$th harmonic number. These are close relatives of the embezzling states introduced in~\cite{Hayden04}, which are useful resources for channel simulation and other tasks~\cite{Harrow,Berta02,Reverse}. They make interesting examples because they have sufficient variation in their Schmidt coefficients that the i.i.d. state merging rates of Theorem \ref{thm:statemerging} are not achievable in the one-shot regime. Nonetheless, our results yield nontrivial one-shot rates that are significantly better than simple teleportation.  We will assume that  $| \langle \psi_i | \psi_j \rangle | \leq \alpha$ for $i \neq j$ and try to and express the rates in terms of $\alpha$. We will assume for convenience that $\alpha > 0$ since, when $\alpha = 0$, the costs are essentially the same as when $\alpha = \Omega(1/d)$.

\subsection*{Protocols from Theorem \ref{thm:cost1}}
Let $(E_1,E_2)$ be a pair of entanglement costs achievable according to Theorem \ref{thm:cost1}. The only constraints on the costs (aside from needing to be the logs of integers) are
\begin{eqnarray}
E_1 &\geq& -H_\min(\psi^{C_1 R} | \psi^R ) + 4 \log ( 1 / \epsilon ) + 12 \label{eqn:ach-e1} \\
E_2 &\geq&  -H_\min(\psi^{C_2 R} | \psi^R ) + 4 \log ( 1 / \epsilon ) + 12 \label{eqn:ach-e2} \\
E_1 + E_2  &\geq&  -H_\min(\psi^{C_1 C_2 R} | \psi^R ) + 4 \log ( 1 / \epsilon ) + 12. \label{eqn:ach-e1+e2}
\end{eqnarray}
To begin, we will find a sufficient condition for the $E_1$ constraint to be satisfied,
so we need to evaluate $H_\min(\psi^{C_1R} | \psi^R )$. Let $\lambda$ be the smallest real number such that $\lambda( I^{C_1} \otimes \psi^R ) - \psi^{C_1 R} \geq 0$. Expanding the operators, that condition is the same as
\begin{equation} \label{eqn:cond-ent-operator}
\sum_{ij} \frac{\lambda - \delta_{ij}}{j} |ij\rangle\langle ij|^{C_1 R}
	-  \sum_i \sum_{ j \neq i} \frac{1}{\sqrt{ij}} |ii\rangle\langle jj|^{C_1 R}
			\langle \psi_j | \psi_i \rangle
	\geq 0,
\end{equation}
where $\delta_{ij}$ is the Kronecker delta function. By the Gershgorin Circle Theorem \cite{Horn,wiki},  the operator will be positive semidefinite if each diagonal entry dominates the sum of the absolute values of the off-diagonal entries in the corresponding row. That condition reduces to
\begin{equation}
\frac{\lambda - 1}{i}
	\geq
	\sum_{j \neq i} \frac{1}{\sqrt{ij}} | \langle \psi_j | \psi_i \rangle |
\end{equation}
holding for all $i$, which is true provided
$\lambda - 1 \geq \alpha \sum_{j=1}^d \sqrt{d/j}$.
But
\begin{equation}
\sum_{j=1}^d \frac{1}{\sqrt{j}}
	\leq
	\int_0^{d} \frac{1}{\sqrt{x}} \, dx
	= 2 \sqrt{d}.
\end{equation}
Therefore, the operator of eq.~(\ref{eqn:cond-ent-operator}) will be positive semidefinite if $\lambda \geq 2 \alpha d + 1$. This in turn implies that
\begin{equation}
-H_\min(\psi^{C_1 R} | \psi^R )
	\leq \log (2\alpha d + 1 )
	\leq \log(\alpha d) + 2.
\end{equation}
The lower bound of eq.~(\ref{eqn:ach-e1}) will therefore be satisfied provided $E_1 \geq \log( \alpha d ) + 4\log(1/\epsilon) + 14$. The interpretation is that if the states $\{ |\psi_j\rangle \}$ are indistinguishable, then $C_1$ holds the whole purification of $R$ and must therefore be responsible for the full cost of merging. As the states $\{ | \psi_j \rangle \}$ become more distinguishable, the purification of $R$ becomes shared between $C_1$ and $C_2$, so the merging cost can be shared. Indeed, if $\alpha = O(1/d)$, then the lower bound on $E_1$ becomes a constant, independent of the size of the input state $|\psi\rangle^{C_1C_2R}$.

Moving on to the $E_2$ constraint, eq.~(\ref{eqn:ach-e2}), a similar but easier calculation shows that $H_\min(\psi^{C_2 R} | \psi^R ) = 0$.
For the sum rate $E_1 + E_2$, it is necessary to evaluate $H_\min(\psi^{C_1 C_2 R} | \psi^R )$. Since the rank of $\psi^{C_1 C_2}$ is $d$, this entropy is at least $-\log d$~\cite{Renner01}.

So any pair of costs $(E_1,E_2)$ satisfying
\begin{eqnarray}
E_1 &\geq&  \log( \alpha d ) + 4\log(1/\epsilon) + 14 \\
E_2 &\geq& 4\log(1/\epsilon) + 12 \\
E_1 + E_2 &\geq& \log d + 4 \log(1/\epsilon) + 12 \label{eqn:cor12e1e2}
\end{eqnarray}
will be achievable by Theorem \ref{thm:cost1}. The total cost $E_1 + E_2$ must be at least $\log d$ plus terms independent of the size of $|\psi\rangle^{C_1 C_2 R}$ and that cost can be shared between $E_1$ and $E_2$. The lower bound on $E_2$ alone is independent of $d$ and should be regarded as a small ``overhead'' for the protocol. There \emph{is} a minimal $d$-dependent cost for $E_1$, however, which encodes the fact that if $C_2$ does not carry enough of the purification of $R$ by virtue of the nonorthogonality of the $\{|\psi_j\rangle\}$, then more of the burden will fall to $C_1$.

\subsection{Protocols from Proposition \ref{prop:epsilonCost}}
Now let us consider the costs achievable according to Proposition \ref{prop:epsilonCost}. For fixed $\epsilon$, the proposition provides two cost pairs, plus others that are simply degraded versions of those two arising from the wasteful consumption of unnecessary entanglement. Proposition \ref{prop:epsilonCost} does not permit interpolation between the two points, as compared to Theorem \ref{thm:cost1}.
It might be the case, however, that Proposition \ref{prop:epsilonCost}'s freedom to smooth the entropy and vary the operator being conditioned upon could result in those two cost pairs being much better than any of those provided by Theorem \ref{thm:cost1}. On the contrary, for the states of the example, the improvement achieved with the extra freedom is minimal.

Let $(E_1',E_2')$ be a cost pair achievable by Proposition \ref{prop:epsilonCost}. For the purposes of illustration, consider the point with the smallest possible value of $E_2'$. Letting $\delta = \epsilon^2/256$, that point will satisfy
\begin{eqnarray}
E_1' &\geq&  -H_\min^\delta (\psi^{C_1 C_2 R} | C_2 R)
		+ 4 \log\left( 2 / \epsilon \right) + 12  \\
E_2' &\geq&  -H_\min^\delta (\psi^{C_2 R} | R)
		+   4 \log\left( 2 / \epsilon \right) + 12.
\end{eqnarray}
Since the state $\psi^{C_2 R}$ is separable, the cost $E_2'$ cannot be negative, at least for sufficiently small $\epsilon$, so the key number is the $E_1'$ cost. Before introducing the extra complication of smoothing, consider first $H_\min(\psi^{C_1 C_2 R} | C_1 R )$. By \cite{Renner03}, this is related to the largest overlap that can be achieved with a maximally entangled state on $C_2 / C_1 R$ by acting with a quantum channel on the $C_1 R$ part of $\psi^{C_1 C_2 R}$. This  \emph{maximum singlet fraction} is at least what is achieved by just aligning the Schmidt bases, which is
\begin{eqnarray}
\left| \sum_{j=1}^d \frac{1}{\sqrt{j \cdot H_d}} \cdot \frac{1}{\sqrt{d}} \right|^2
	&=& \frac{1}{d \cdot H_d} \left| \sum_{j=1}^d \frac{1}{\sqrt{j}} \right|^2 \\
	&\geq&  \frac{1}{d \cdot H_d} \left|  \int_{1}^{d} \frac{1}{\sqrt{x}} dx \right|^2 \\
	&=& \frac{4}{H_d} \left( 1 - O\left( \frac{1}{\sqrt{d}} \right) \right) \\
	&\geq&  \frac{5}{\log d},
\end{eqnarray}
where the last line holds for sufficiently large $d$.
Above and in what follows, we use the inequality  $\ln ( d + 1 ) \leq H_d \leq (\ln d)+1$, which was supplemented above by the fact that $4/(\ln d + 1) \geq 5.7 / \log d$ for sufficiently large $d$.
According to Theorem 2 of \cite{Renner03}, the resulting bound on  $H_\min$ is
\begin{equation}
- H_\min( \psi^{C_1 C_2 R} | R C_2 )
	\geq \log d - \log \log d  + 2.
\end{equation}
Therefore, ignoring smoothing, the sum cost for Proposition \ref{prop:epsilonCost} will always satisfy
\begin{equation}
E_1' + E_2'
	\geq   \log d - \log \log d  + 26 + 8 \log\left( \frac{2}{\epsilon} \right),
\end{equation}
for sufficiently large $d$, which has worse constants and even asymptotically only differs from the sum cost (\ref{eqn:cor12e1e2}) for Theorem \ref{thm:cost1} by $O(\log \log d)$.

Now let us introduce some smoothing. By duality of the min- and max- entropies,
\begin{equation}
-H^\delta_\min(\psi^{C_1 C_2 R} | R C_2 ) = H^\delta_\max(\psi^{C_1}).
\end{equation}
Lemma \ref{lem:hmax-smoothing} of Appendix~\ref{app:hmax-smoothing} gives that
\begin{equation}
H^\delta_\max(\psi^{C_1})
	\geq 2 \log \min \left\{ \sum_{j=1}^{k-1} \frac{1}{\sqrt{j \cdot H_d}}: k \mbox{ such that }
		\sum_{j=k+1}^d \frac{1}{j \cdot H_d} \leq \frac{\delta^2}{2} \right\}.
\end{equation}
Getting a lower bound on this expression requires finding large $k$ that nonetheless fail to satisfy the tail condition. That restriction on $k$ is equivalent to $1-H_k/H_d \leq \delta^2/2$, which will not be met by any $k$ small enough to obey
\begin{equation} \label{eqn:kbound}
k \leq (d+1)^{1-\delta^2/2} / e
\end{equation}
for sufficiently large $d$.
Using a similar estimate as for the maximum singlet fraction calculation, we get
\begin{eqnarray}
2 \log \sum_{j=1}^{k-1} \frac{1}{\sqrt{j\cdot H_d}}
	&\geq& \log \left( \frac{1}{\sqrt{H_d}} \int_1^k \frac{1}{\sqrt{x}} \, dx \right)^2 \\
	&\geq& \log \frac{4k}{H_d} \left( 1 - O\left(\frac{1}{\sqrt{k}} \right) \right)  \\
	&\geq& \log k - \log \log d+ \log 5
\end{eqnarray}
for sufficiently large $k$.
Substituting in the largest possible $k$ consistent with eq.~(\ref{eqn:kbound}) and $\delta = \epsilon^2/256$ gives
\begin{equation}
E_1' + E_2' \geq \left( 1 - \frac{\epsilon^4}{512} \right) \log(d+1) -\log \log d + 24 + 8 \log\left(\frac{2}{\epsilon}\right),
\end{equation}
for sufficiently large $d$.
The sum costs achievable using Theorem \ref{thm:cost1} compare favorably with this bound. The additional savings from smoothing are only about $\epsilon^4 \log(d+1) / 512$ ebits, which is insignificant for small $\epsilon$. These tiny savings also come at the expense of being able to interpolate between achievable costs. To be fair, these states were chosen specifically because they are known to maintain their essential character even after smoothing, as was observed in~\cite{Hayden05}. The freedom to smooth is certainly more beneficial for some other classes of states, most notably i.i.d. states. Indeed, since $S(\psi^{C_1 C_2}) = (\log d)/2 + O( \log \log d)$, merging many copies of $|\psi\rangle^{C_1 C_2 R}$ can be done at a rate roughly half the cost required for one-shot merging.

\section{A variant of Merging: Split Transfer} \label{sec:split-transfer}

In the previous sections, we've analyzed and characterized the
entanglement cost for merging the state $\initstate$ to a single
receiver Bob in the asymptotic setting and in the one-shot regime.
Here, we modify our initial setup by introducing a second decoder
$A$ (Alice), who is spatially separated from Bob and also has side
information about the input state. That is, the helpers $C_1,
C_2,\ldots, C_m$ and the two receivers Alice and Bob share a
global state $\psi^{C_1C_2\ldots C_mABR}$ and the objective is
then to redistribute the state $\initstateA$ to Alice, Bob, and
the reference $R$. The motivation for this problem comes from the
multipartite entanglement of assistance problem \cite{SVW,Merge},
where the task is to distill entanglement in the form of EPR pairs
from a $(m+2)$-partite pure state $\inputstate$ shared between two
recipients (Alice and Bob) and $m$ other helpers $C_1,C_2,\ldots,C_m$. If many copies of the input state are
available, the optimal EPR rate was shown in \cite{Merge} to be
equal to
\begin{equation}\label{eq:mincut1}
  E^{\infty}_A(\inputstate) := \min_{\cal T} S(A{\cal T})_{\psi},
\end{equation}
where ${\cal T} \subseteq \{C_1,C_2,\ldots,C_m\}$ is a subset (i.e a bipartite cut) of the helpers. We
denote the complement by $\cT := \{C_1C_2\ldots C_m\} \setminus {\cal T}$. We call $\min_{\cal T}\{S(A{\cal T})_{\psi}\}$ the minimum cut (min-cut) entanglement of the state $\inputstate$.

The proof that the rate given by eq.~(\ref{eq:mincut1}) is
achievable using LOCC operations consists of showing that the
min-cut entanglement of the state $\inputstate$ is preserved, up
to an arbitrarily small variation, after each sender has finished
performing a random measurement on his system. The procedure
described in the proof of \cite{Merge} makes use of a
multiple-blocking strategy. That is, given $n$ copies of the input
state $\inputstate$, the first helper will perform $d=n/m$ random
measurements, each acting on $m$ copies of $\inputstate$ and
generating $J$ possible outcomes. Then, if each measurement can
yield outcomes $j_1,j_2,\ldots,j_d$, we need to group together the
residual states corresponding to outcome $1$, then group the ones
corresponding to outcome $2$, etc... When this is done, the next
helper will perform random measurements for each of these groups in
the same way the first sender proceeded. That is, for each group,
you need to divide into blocks, and so on. Needless to say, this approach fails in the one-shot setting.

It was conjectured in \cite{Merge} that these layers of blocking
could be removed by letting all the helpers perform simultaneous
measurements on their respective typical subspaces. Such a
strategy would still produce states which preserve the min-cut
entanglement, thereby providing a way to prove
eq.~(\ref{eq:mincut1}) without the need for a recursive argument.
We will show in the remainder of this section that for a cut
${\cal T}_{\min}$ which minimizes $S(A\cK)_{\psi}$, there exists
an LOCC protocol acting on the state $\initstate$ which will send
${\cal T}_{\min}$ to Alice and its complement to Bob. The protocol
will consist of two parts. First, all the helpers will perform
measurements on their typical subspaces and broadcast their
outcomes to both decoders. Then, Alice will use the classical
information coming from the helpers which are part of the cut
${\cal T}_{\min}$ and apply an isometry $U$, while Bob will apply
an isometry $V$ depending on the outcomes of the helpers belonging
to $\cT_{\min}$. This will redistribute the initial state to
Alice, Bob, and the reference $R$. Standard distillation protocols
\cite{Concentrate,Bennett} on the recovered state will yield
EPR pairs at a rate given by eq.~($\ref{eq:mincut1}$).

\begin{definition}
Let $\psi^{{\cal T}{\ov{\cal T}}ABR}$ be an $(m+2)$-partite state,
where ${\cal T}$ and $\cT$ are a partition of the helpers $C_1,
C_2,\ldots, C_m$. Furthermore, assume that the helpers and the
decoders share maximally entangled states $\Phi^K:=\bigotimes_{i
\in \cal T} \Phi^{K_i}$ and $\Gamma^M:=\bigotimes_{i \in \cT}
\Gamma^{M_i}$ on the tensor product spaces $\cK^0A^0_{\cK}$ and
$\cTo B^0_{\cT}$.

We call the LOCC operation ${\cal M}: {\cal T}\cK^0 \ov{\cal
T}\cTo \otimes AA^0_{{\cal T}} \otimes BB^0_{\ov{{\cal T}}}
\rightarrow {\cal T}^1A^1_{{\cal T}}AA_{{\cal T}} \otimes
{\ov{\cal T}}^1B^1_{\ov{{\cal T}}}BB_{\ov{{\cal T}}} $ a
\emph{split transfer} of the state $\inputGroupstate$ with error
$\epsilon$ and associated entanglement costs
$\overrightarrow{E_{\cK}}(\psi):=\bigoplus_{i \in \cK}(\log K_i
-\log L_i)$ and $\overrightarrow{E_{\cKbar}}(\psi):=\bigoplus_{i
\in \cKbar}(\log M_i - \log N_i)$ if
 \begin{equation}\label{eq:splittransfer}
 \bigg \| (\mathrm{id}_R \otimes {\cal M})(\inputGroupstate \otimes \Phi^K  \otimes \Gamma^M) -  \psi_{AA_{{\cal T}}BB_{\ov{\cal T}}R} \otimes \Phi^L \otimes \Gamma^{N} \bigg \|_1 \leq \epsilon,
\end{equation}
where $\Phi^L:=\bigotimes_{i \in \cK}\Phi^{L_i},\Gamma^N:=\bigotimes_{j \in \cKbar}\Gamma^{N_j}$, with the states $\Phi^{L_i}$ and $\Gamma^{N_j}$ being maximally entangled states of Schmidt ranks $L_i$ and $N_i$ on $C^1_iA^1_i$ and $C_j^1B^1_j$ respectively. Also, the systems $A_{\cal T}$ and $B_{\cT}$ are ancillary systems of the same size as ${\cal T}$ and $\cKbar$ and are held by Alice and Bob respectively. For the state $\Psi:=(\inputGroupstate)^{\otimes n}$, the entanglement rates $\overrightarrow{R_{\cK}}(\psi)$ and $\overrightarrow{R_{\cKbar}}(\psi)$ are defined as $\frac{1}{n}\overrightarrow{E_{\cKbar}}(\Psi)$ and
$\frac{1}{n}\overrightarrow{E_{\cKbar}}(\Psi)$.
\end{definition}

In the above definition, we have denoted by $\bigoplus_{i \in \cK} (\log K_i - \log L_i)$ a vector of length $|\cK|$ whose components are given by $\log K_i -\log L_i$ for $i \in \cK$ in the lexicographical order.

The rate region where a split-transfer can be accomplished by LOCC
can be defined in a manner analogous to definition
\ref{def:rateregion}. We omit the details here, but whenever we
will say that a rate is achievable for a split-transfer of the
state $\inputGroupstate$, it will mean that it is contained in the
rate region.

Now, we'd like to specify conditions, as in Proposition
\ref{prop:mergeCond}, that the initial state should satisfy in
order to allow the group ${\cal T}$ (resp. ${\ov{\cal T}}$) to
transfer their share of the state to Alice (resp. Bob). For a pure
state $\psi^{\cK\cKbar ABR}$, suppose all the helpers perform
incomplete measurements (as in Section~\ref{sec:merging-many}) on their respective
shares of the state. For measurement outcomes $J :=
(j_1,j_2,\ldots,j_m)$, define the state
\begin{equation}
 \begin{split}
 \ket{\psi_J^{\cK^1\cTu ABR}} &:= \frac{1}{\sqrt{p_J}}(P^1_{j_1} \otimes P^2_{j_2} \otimes \ldots \otimes P^m_{j_m} \otimes I^{ABR}) \ket{\psi^{\cK\cKbar ABR}} \\
   & =: \frac{1}{\sqrt{p_J}}(P^{\cK}_{j_{\cK}} \otimes P^{\cKbar}_{j_{\cKbar}} \otimes I^{ABR}) \ket{\psi^{\cK\cKbar ABR}}, \\
 \end{split}
\end{equation}
where $p_J$ is the probability of getting outcome $J$. In the above definition, $j_{\cK}$ is a vector of length $|\cK|$ whose components correspond to outcomes of measurements performed by the helpers belonging to the cut $\cK$. The vector $j_{\cKbar}$ is defined similarly. Finally, the Kraus operators $P^{\cK}_{j_{\cK}} = \bigotimes_{i \in \cK} P^i_{j_i}$ and $P^{\cKbar}_{j_{\cKbar}} = \bigotimes_{i \in \cKbar} P^i_{j_i}$ map the spaces $\cK$ and $\cKbar$ to the subspaces $\cK^1$ and $\cTu$ respectively.

Define another state $\ket{\varphi_{j_{\cK}}^{\cK^1\cKbar ABR}} :=
\frac{1}{\sqrt{p_{j_{\cK}}}}(P^{\cK}_{j_{\cK}} \otimes I^{\cKbar
ABR})\ket{\psi^{\cK\cKbar ABR}}$, where $p_{j_{\cK}}$ is the
probability of getting the outcome $j_{\cK}$, and suppose that we
have
\begin{equation}
 \varphi_{j_{\cK}}^{\cK^1\cKbar BR} = \tau_L^{\cK^1} \otimes \psi^{\cKbar BR},
\end{equation}
where $\tau_L^{\cK^1} = \bigotimes_{i \in \cK} \tau^{C^1_i}$ is
the maximally mixed state of dimension $L$ on the system $\cK^1$.
From the Schmidt decomposition, we know there exists an isometry
$U^A_{j_{\cK}}: A \rightarrow A^1_{\cK} A_{\cK} A$ which Alice can
perform such that
\begin{equation}
(I^{\cK^1 \cKbar BR} \otimes U^A_{j_{\cK}}) \ket{\varphi_{j_{\cK}}^{\cK^1\cKbar ABR}} = \ket{\Phi^L} \otimes \ket{\psi^{ A_{\cK} \cKbar ABR}},
\end{equation}
where the state $\ket{\psi^{A_{\cK} \cKbar ABR}}$ is the same as
the original state $\ket{\psi^{\cK \cKbar ABR}}$ with the
ancillary system $A_{\cK}$ substituted for $\cK$. The state
$\ket{\Phi^L}$ is a maximally entangled state on $\cK^1A^1_{\cK}$.

Finally, define the state $\ket{\upsilon_{j_{\cKbar}}^{A_{\cK}
\cTu ABR}}
:=\frac{1}{\sqrt{p_{j_{\cKbar}}}}(P^{\cKbar}_{j_{\cKbar}} \otimes
I_{A_{\cK}ABR})\ket{\psi^{A_{\cK}\cKbar A BR}}$ and suppose again
that we have
\begin{equation}
 \upsilon_{j_{\cKbar}}^{A_{\cK} \cTu AR} = \tau_N^{\cTu} \otimes \psi^{A_{\cK} AR},
\end{equation}
where $\tau_N^{\cTu} = \bigotimes_{i \in \cKbar} \tau^{C^1_i}$ is
the maximally mixed state of dimension $N$ on the system $\cTu$.
Applying the Schmidt decomposition once more, Bob can perform an
isometry $V^B_{j_{\cKbar}}: B \rightarrow B^1_{\cKbar} B_{\cKbar}
B$ such that
\begin{equation}
(I^{A_{\cK} \cTu AR} \otimes V^B_{j_{\cKbar}}) \ket{\upsilon_{j_{\cKbar}}^{A_{\cK}\cTu ABR}} = \ket{\Gamma^N} \otimes \ket{\psi^{A_{\cK} B_{\cKbar} ABR}},
\end{equation}
where the state $\ket{\psi^{A_{\cK} B_{\cKbar} ABR}}$ is the same
as the original state $\inputGroupstate$ with the ancillary
systems $A_{\cK}$ and $B_{\cKbar}$ substituted for $\cK$ and
$\cKbar$.

If we apply the isometries $U^A_{j_{\cK}}$ and $V^B_{j_{\cKbar}}$ to the outcome state $\ket{\psi_J^{\cK^1 \cTu ABR}}$, the resulting state is given by
\begin{equation}
 \begin{split}
   &(I^{\cK^1 \cTu R} \otimes U^A_{j_{\cK}} \otimes V^B_{j_{\cKbar}}) \ket{\psi_J^{\cK^1 \cTu ABR}}\\
   &=\frac{1}{\sqrt{p_J}}(I^{\cK^1 \cTu R} \otimes U^A_{j_{\cK}} \otimes V^B_{j_{\cKbar}})( P^{\cKbar}_{j_{\cKbar}} \otimes I^{\cK^1 ABR})(P^{\cK}_{j_{\cK}} \otimes I^{\cKbar ABR})\ket{\psi^{\cK \cKbar ABR}} \\
   &=\frac{1}{\sqrt{p_J}}(I^{\cK^1 A^1_{\cK}A_{\cK} \cTu AR} \otimes V^B_{j_{\cKbar}})(P^{\cKbar}_{j_{\cKbar}} \otimes I^{\cK^1 A^1_{\cK}A_{\cK}ABR} )(I^{\cK^1 BR} \otimes U^A_{j_{\cK}})(P^{\cK}_{j_{\cK}} \otimes I^{\cKbar ABR})\ket{\psi^{\cK \cKbar ABR}} \\
   &=\frac{1}{\sqrt{p_J}}(I^{\cK^1 A^1_{\cK}A_{\cK} \cTu AR} \otimes V^B_{j_{\cKbar}})(P^{\cKbar}_{j_{\cKbar}} \otimes I^{\cK^1 A^1_{\cK}A_{\cK}ABR} )(I^{\cK^1 BR} \otimes U^A_{j_{\cK}}) \sqrt{p_{j_{\cK}}}\ket{\varphi_{j_{\cK}}^{\cK^1\cKbar ABR}}\\
   &=\sqrt{\frac{p_{j_{\cK}}}{p_J}}(I^{\cK^1 A^1_{\cK}A_{\cK} \cTu AR} \otimes V^B_{j_{\cKbar}})(P^{\cKbar}_{j_{\cKbar}} \otimes I^{\cK^1 A^1_{\cK}A_{\cK}ABR}) \ket{\Phi^L} \otimes \ket{\psi^{A A_{\cK} \cKbar BR}} \\
   &=\sqrt{\frac{p_{j_{\cK}}}{p_J}}(I^{\cK^1 A^1_{\cK}A_{\cK} \cTu AR} \otimes V^B_{j_{\cKbar}}) \ket{\Phi^L} \otimes \sqrt{p_{j_{\cKbar}}}\ket{\upsilon_{j_{\cKbar}}^{A_{\cK}\cTu ABR}} \\
   &=\sqrt{\frac{p_{j_{\cK}}p_{j_{\cKbar}}}{p_J}}\ket{\Phi^L} \otimes \ket{\Gamma^N} \otimes \ket{\psi^{A A_{\cK} B_{\cKbar} BR}}. \\
 \end{split}
\end{equation}
Since the states $(I^{\cK^1 \cTu R} \otimes U^A_{j_{\cK}} \otimes
V^B_{j_{\cKbar}}) \ket{\psi_J^{\cK^1 \cTu ABR}}$ and $\ket{\Phi^L}
\otimes \ket{\Gamma^N} \otimes \ket{\psi^{A A_{\cK} B_{\cKbar}
BR}}$ are both normalized, we must have $p_J =
p_{j_{\cK}}p_{j_{\cKbar}}$. Hence, in this ideal case, we can
achieve a split transfer of the state $\initstate$ by letting all
the helpers measure their share simultaneously. The decoding by
Alice and Bob will follow once they receive the measurement
outcomes.

\begin{Proposition}[Conditions for a Split-Transfer] \label{prop:mergeConds}
Denote the state shared between $m$ helpers and two receivers (Alice and Bob) by $\inputGroupstate$, with purifying system $R$. Suppose all the helpers perform incomplete measurements on their share of the state $\inputGroupstate$ as in the previous paragraphs, yielding a state $\ket{\psi_J^{\cK^1\cTu ABR}}:=\frac{1}{\sqrt{p_J}}(P^{\cK}_{j_{\cK}} \otimes P^{\cKbar}_{j_{\cKbar}} \otimes I^{ABR})(\ket{\psi^{\cK\cKbar ABR}})$ for an outcome $J:=(j_1,j_2,\ldots,j_m)$ with probability $p_J$, where the Kraus operators $P^{\cK}_{j_{\cK}}$ and $P^{\cKbar}_{j_{\cKbar}}$ map the spaces $\cK$ and $\cKbar$ to the subspaces $\cK^1$ and $\cTu$.

If, for the quantum errors $Q^1_{\cal I}(\inputGroupstate)$ and $Q^2_{\cal I}(\inputGroupstate)$, we have
 \begin{equation}
    \begin{split}
        Q^1_{\cal I}(\inputGroupstate) &:= \sum_{j_{\cK}} p_{j_{\cK}}\|\varphi_{j_{\cK}}^{\cK^1 \cKbar BR} - \tau^{\cK^1}_L \otimes \psi^{\cKbar BR}\|_1 \leq \epsilon \\
        Q^2_{\cal I}(\inputGroupstate) &:= \sum_{j_{\cKbar}} p_{j_{\cKbar}}\|\upsilon_{j_{\cKbar}}^{A_{\cK} \cTu AR} - \tau^{\cTu}_N \otimes \psi^{A_{\cK} AR}\|_1 \leq \epsilon', \\
    \end{split}
 \end{equation} then there exists a split-transfer of the state $\inputGroupstate$ with error $2\sqrt{\epsilon}+2\sqrt{\epsilon'}$ and entanglement costs $\overrightarrow{E_{\cK}}=\bigoplus_{i \in \cK}(-\log L_i)$ and $\overrightarrow{E_{\cKbar}}=\bigoplus_{i \in \cKbar}(-\log N_i)$. The states $\varphi_{j_{\cK}}^{\cK^1 \cKbar BR}$ and $\upsilon_{j_{\cKbar}}^{A_{\cK} \cTu AR}$ are reduced density operators for the states $\ket{\varphi_{j_{\cK}}^{\cK^1 \cKbar ABR}}:=\frac{1}{\sqrt{p_{j_{\cK}}}}(P^{\cK}_{j_{\cK}} \otimes I^{\cKbar ABR})\ket{\psi^{\cK\cKbar ABR}}$ and $\ket{\upsilon_{j_{\cKbar}}^{A_{\cK} \cTu ABR}}:=\frac{1}{\sqrt{p_{j_{\cKbar}}}}(P^{\cKbar}_{j_{\cKbar}} \otimes I^{A_{\cK}ABR})\ket{\psi^{A_{\cK}\cKbar A BR}}$.
\end{Proposition}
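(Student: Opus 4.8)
The plan is to mirror the proof of Proposition~\ref{prop:mergeCond}, but to decode in two stages---first Alice, then Bob---and to control how the first-stage error propagates through the second. The central observation is that the four operations involved (the measurements of the helpers in $\cK$ and in $\cKbar$, Alice's decoding $\UA$ acting on $A$, and Bob's decoding $\VB$ acting on $B$) all act on disjoint systems and hence commute, so I may analyze the simultaneous protocol as the sequential composition: the helpers in $\cK$ measure, Alice decodes, the helpers in $\cKbar$ measure, Bob decodes. The operation ${\cal M}$ built this way is manifestly LOCC (local measurements, a broadcast of the outcome $J$, and local isometries) and consumes no pre-shared entanglement, which accounts for the stated costs $\bigoplus_{i\in\cK}(-\log L_i)$ and $\bigoplus_{i\in\cKbar}(-\log N_i)$.

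For the first stage, I would use $Q^1_{\cal I}(\inputGroupstate)\le\epsilon$ together with Lemma~\ref{Lemma:relation} to obtain $\sum_{\jK} p_{\jK}\, F(\varphi_{\jK}^{\cK^1\cKbar BR},\tau^{\cK^1}_L\otimes\psi^{\cKbar BR})\ge 1-\epsilon/2$. Since $\ket{\varphi_{\jK}^{\cK^1\cKbar ABR}}$ purifies $\varphi_{\jK}^{\cK^1\cKbar BR}$ with purifying system $A$, while $\ket{\Phi^L}\otimes\ket{\psi^{A_{\cK}\cKbar ABR}}$ purifies $\tau^{\cK^1}_L\otimes\psi^{\cKbar BR}$ on Alice's systems, Uhlmann's theorem supplies precisely the isometries $\UA:A\to A^1_{\cK}A_{\cK}A$ introduced before the proposition. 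Setting $\sigma_1 := \sum_{\jK} p_{\jK}\,(I\otimes\UA)\braket{\varphi_{\jK}^{\cK^1\cKbar ABR}}(I\otimes\UA)^{\dag}$ for the Alice-decoded state and invoking concavity of $F$ in its first argument gives $F(\sigma_1,\Phi^L\otimes\psi^{A_{\cK}\cKbar ABR})\ge 1-\epsilon/2$, hence $\|\sigma_1-\Phi^L\otimes\psi^{A_{\cK}\cKbar ABR}\|_1\le 2\sqrt{\epsilon}$.

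For the second stage, let ${\cal N}$ be the quantum operation consisting of the measurement by the helpers in $\cKbar$ (with $\jKbar$ broadcast) followed by Bob's decoding $\VB$. Running the same Lemma~\ref{Lemma:relation}/Uhlmann/concavity argument on $Q^2_{\cal I}(\inputGroupstate)\le\epsilon'$, now applied to the state $\Phi^L\otimes\psi^{A_{\cK}\cKbar ABR}$, yields $\|{\cal N}(\Phi^L\otimes\psi^{A_{\cK}\cKbar ABR})-\Phi^L\otimes\Gamma^N\otimes\psi^{AA_{\cK}B_{\cKbar}BR}\|_1\le 2\sqrt{\epsilon'}$. Because the operations commute, the actual protocol output equals ${\cal N}(\sigma_1)$, so monotonicity of the trace distance under ${\cal N}$ gives $\|{\cal N}(\sigma_1)-{\cal N}(\Phi^L\otimes\psi^{A_{\cK}\cKbar ABR})\|_1\le 2\sqrt{\epsilon}$. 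A final triangle inequality then combines the two estimates into $\|{\cal N}(\sigma_1)-\psi^{AA_{\cK}B_{\cKbar}BR}\otimes\Phi^L\otimes\Gamma^N\|_1\le 2\sqrt{\epsilon}+2\sqrt{\epsilon'}$, which is exactly eq.~(\ref{eq:splittransfer}).

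I expect the one genuinely nonroutine step to be the compounding of errors across the two decodings: the hypothesis $Q^2_{\cal I}$ is formulated for the helpers in $\cKbar$ measuring the \emph{ideal} post-Alice state $\psi^{A_{\cK}\cKbar ABR}$ rather than the actual imperfect state $\sigma_1$ produced by the first stage. The resolution is to exploit the commutativity of the disjoint operations to write the true output as ${\cal N}(\sigma_1)$ and then invoke monotonicity of the trace distance to transport the stage-one error unchanged through ${\cal N}$; this is what keeps the overall error additive, $2\sqrt{\epsilon}+2\sqrt{\epsilon'}$, rather than degrading. Everything else is a direct two-sided transcription of the single-receiver argument of Proposition~\ref{prop:mergeCond}.
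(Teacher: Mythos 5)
Your proposal is correct and follows essentially the same route as the paper's proof: both apply the single-receiver argument of Proposition~\ref{prop:mergeCond} twice (your inlined Lemma~\ref{Lemma:relation}/Uhlmann/concavity steps are just its proof written out), both exploit the commutativity of the Kraus operators and decodings on disjoint systems to identify the simultaneous protocol's output with the sequential composition ${\cal N}(\sigma_1)$ (the paper's ${\cal M}(\zeta)$), and both conclude via monotonicity of the trace distance under ${\cal N}$ plus the triangle inequality through the intermediate state ${\cal N}(\Phi^L\otimes\psi^{A_{\cK}\cKbar ABR})$. Your identification of the error-compounding issue and its resolution is precisely the point the paper itself singles out as the essential ingredient.
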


\begin{proof}
Since the quantum errors $Q^1_{\cal I}$ and $Q^2_{\cal I}$ are
bounded from above by $\epsilon$ and $\epsilon'$ respectively, Proposition \ref{prop:mergeCond}
can be applied, which tells us of the existence of isometries
$U^A_{j_{\cK}}$ and $V^B_{j_{\cKbar}}$ such that
 \begin{equation}\label{eq:qerror}
   \begin{split}
     \bigg \| \sum_{j_{\cK}} p_{j_{\cK}} (I^{\cKbar BR} \otimes U^A_{j_{\cK}}) \braket{\varphi^{\cK^1 \cKbar ABR}_{j_{\cK}}}(I^{\cKbar BR} \otimes U^A_{j_{\cK}})^{\dag} - \psi^{A_{\cK} \cKbar ABR} \otimes \Phi^L \bigg \|_1 &\leq 2\sqrt{\epsilon} \\
     \bigg \| \sum_{j_{\cKbar}} p_{j_{\cKbar}} (I^{A_{\cK} AR} \otimes V^B_{j_{\cKbar}}) \braket{\upsilon^{A_{\cK} \cTu ABR}_{j_{\cKbar}}}(I^{A_{\cK} AR} \otimes V^B_{j_{\cKbar}})^{\dag} - \psi^{A_{\cK} B_{\cKbar} ABR} \otimes \Gamma^N \bigg \|_1 &\leq 2\sqrt{\epsilon'}.\\
   \end{split}
 \end{equation}
 If we apply the isometries $U^A_{j_{\cK}}$ and $V^B_{j_{\cKbar}}$ to the state $\ket{\psi^{\cK^1 \cTu ABR}_J}$ after obtaining outcome $J$, the output state $\rho$ of the protocol will be of the form
 \begin{equation}
  \begin{split}
    &\rho := \sum_J p_J \bigg ( (I^{\cK^1 \cTu R} \otimes \UA \otimes \VB) \psi^{\cK^1 \cTu ABR}_J (I^{\cK^1 \cTu R} \otimes \UA \otimes \VB)^{\dag} \bigg )\\
     &= \sum_J p_J \bigg ( \frac{p_{j_{\cK}}}{p_J}(I \otimes \VB)(P^{\cKbar}_{\jKbar}\otimes I)(I \otimes \UA)\braket{\varphi_{j_{\cK}}^{\cK^1 \cKbar ABR}}(I \otimes \UA)^{\dag}(P^{\cKbar}_{\jKbar} \otimes I)^{\dag}(I \otimes \VB)^{\dag} \bigg )\\
     &=\sum_{\jKbar} (I \otimes \VB)(P^{\cKbar}_{\jKbar}\otimes I) \zeta (P^{\cKbar}_{\jKbar} \otimes I)^{\dag}(I \otimes \VB)^{\dag}  \\
     &=: {\cal M}(\zeta)\\
  \end{split}
 \end{equation}
where $\zeta := \sum_{\jK} p_{j_{\cK}} (I \otimes \UA)\braket{\varphi_{j_{\cK}}^{\cK^1 \cKbar ABR}}(I \otimes \UA)^{\dag}$. It can be seen as the output state we would get if only the helpers in $\cK$ wanted to transfer their share of the state to the decoder $A$. The map ${\cal M}$, as defined above, corresponds to an LOCC quantum operation acting on $\zeta$ which consists of measurements by the helpers in ${\cKbar}$ followed by an isometry on $B$. 
Note that we have remove some of the superscript notation for the sake of clarity.

We would like to bound the trace distance between the output state
$\rho$ and the state $\psi^{A_{\cK}B_{\cKbar}ABR} \otimes \Phi^L
\otimes \Gamma^N$. To achieve this, we introduce the following
intermediate state
 \begin{equation}
  \begin{split}
   \sigma &:= \sum_{\jKbar}(I \otimes \VB)(P^{\cKbar}_{\jKbar} \otimes I)(\psi^{A_{\cK}\cKbar ABR} \otimes \Phi^L)(P^{\cKbar}_{\jKbar} \otimes I)^{\dag}(I \otimes \VB)^{\dag} \\
   &= {\cal M}(\psi^{A_{\cK}\cKbar ABR} \otimes \Phi^L) \\
  \end{split}
 \end{equation}
and apply the triangle inequality
\begin{equation}
 \begin{split}
   \bigg \| \rho - &\psi^{A_{\cK}B_{\cKbar}ABR} \otimes \Phi^L \otimes \Gamma^N \bigg \|_1 \leq \bigg \|\rho - \sigma \bigg \|_1 +\bigg \| \sigma - \psi^{A_{\cK}B_{\cKbar}ABR} \otimes \Phi^L \otimes \Gamma^N\bigg\|_1. \\
 \end{split}
\end{equation}
The trace norm $\bigg \| \sigma - \psi^{A_{\cK}B_{\cKbar}ABR}
\otimes \Phi^L \otimes \Gamma^N \bigg \|_1$ is equal to the trace
norm appearing in the second line of eq.~(\ref{eq:qerror}), and so
is bounded from above by $2\sqrt{\epsilon'}$. To bound $\bigg
\|\rho - \sigma \bigg \|_1$, we have
 \begin{equation}
  \begin{split}
  \bigg \| \rho - \sigma \bigg \|_1 &= \bigg \| {\cal M}(\zeta) - {\cal M}(\psi^{A_{\cK}\cKbar ABR} \otimes \Phi^L) \bigg \|_1 \\
    &\leq  \bigg \| \zeta - \psi^{A_{\cK}\cKbar ABR} \otimes \Phi^L \bigg \|_1 \\
    &\leq 2\sqrt{\epsilon}.
  \end{split}
 \end{equation}
 The first inequality holds since the trace distance is non-increasing under quantum operations, and the second inequality is just the first part of eq.~(\ref{eq:qerror}).
Thus, we have a split-transfer of the state $\initstateA$ with error $2\sqrt{\epsilon}+2\sqrt{\epsilon'}$.
\end{proof}

With this result in hand, a one-shot split-transfer protocol of the state $\inputGroupstate$ where all the helpers perform simultaneous random measurements on their share can be obtained by two independent applications of Proposition \ref{prop:isometry} followed by an application of Proposition \ref{prop:mergeConds}. We state the result here.

\begin{Proposition}[One-Shot Split-Transfer] \label{prop:rdnSplit}
Let $\inputGroupstate$ be an $(m+2)$-partite pure state, with purifying system $R$ and local dimensions $d_A,d_{B},d_{R}$. Furthermore, let $d_{\cK}:=\Pi_{i \in \cK}d_{C_i}$ and $d_{\cKbar}:=\Pi_{i \in \cKbar}d_{C_i}$ be the dimensions of the systems $\cK$ and $\cKbar$. Finally, allow the helpers to share additional maximally entangled states $\Phi^K$ and $\Gamma^M$ with the decoders.

For each party $C_i$ in the cut $\cK$, there exists an instrument ${\cal I}_i=\{\calE^i_j\}_{j=0}^{F_i}$ consisting of $F_i
= \lfloor \frac{d_{C_i}K_i}{L_i} \rfloor$ partial isometries of rank $L_i$ and one of rank $L_i'= d_{C_i}K_i - F_i
L_i < L_i$ such that the overall quantum error $Q^1_{{\cal I}}(\inputGroupstate \otimes \Phi^K)$ is bounded by
\begin{equation} \label{eq:upperbound1}
  \begin{split}
  Q^1_{\cal I}(\inputGroupstate \otimes \Phi^K) &\leq 2 \sum_{\substack{{\cal S} \subseteq \cK \\ {\cal S} \neq \emptyset}} \prod_{i \in {\cal S}}\frac{L_i}{d_{C_i}K_i} + 2\sqrt{d_{\cKbar}d_Bd_R \sum_{\substack{{\cal S} \subseteq \cK \\ {\cal S} \neq \emptyset}}  \prod_{i \in {\cal S}} \frac{L_i}{K_i} \Tr \bigg [ (\psi^{{\cal S}\cKbar BR})^2 \bigg ]} =: \Delta^1_{\cal I}. \\
  \end{split}
\end{equation}

Similarly, for each helper $C_i$ in the cut $\cKbar$, there exists an instrument ${\cal I}_i=\{\calE^i_j\}_{j=0}^{G_i}$ consisting of $G_i
= \lfloor \frac{d_{C_i}M_i}{N_i} \rfloor$ partial isometries of rank $N_i$ and one of rank $N_i'= d_{C_i}M_i - G_i
N_i < N_i$ such that the overall quantum error $Q^2_{{\cal I}}(\inputGroupstate \otimes \Gamma^M)$ is bounded by
\begin{equation} \label{eq:upperbound2}
  \begin{split}
  Q^2_{\cal I}(\inputGroupstate \otimes \Gamma^M) &\leq 2 \sum_{\substack{{\cal S} \subseteq \cKbar \\ {\cal S} \neq \emptyset}} \prod_{i \in {\cal S}}\frac{N_i}{d_{C_i}M_i} + 2\sqrt{d_{A_{\cK}}d_Ad_R \sum_{\substack{{\cal S} \subseteq \cKbar \\ {\cal S} \neq \emptyset}}  \prod_{i \in {\cal S}} \frac{N_i}{M_i} \Tr \bigg [ (\psi^{A_{\cK}{\cal S}AR})^2 \bigg ]} =: \Delta^2_{\cal I}. \\
  \end{split}
\end{equation}
Then, there exists a split-transfer of the state $\inputGroupstate$
with error $ 2\sqrt{\Delta^1_{\cal I}}+2\sqrt{\Delta^2_{\cal I}}$.
The left hand sides of eqs.~(\ref{eq:upperbound1}) and (\ref{eq:upperbound2}) are bounded
from above on average by their right hand sides if we perform random
measurements on all the helpers according to the Haar measure.
\end{Proposition}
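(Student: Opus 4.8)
The plan is to produce the two quantum-error bounds (\ref{eq:upperbound1}) and (\ref{eq:upperbound2}) by applying Proposition \ref{prop:isometry} twice, with appropriate relabellings of the systems, and then to hand the results to Proposition \ref{prop:mergeConds}. The only real content is recognizing that each of the two decoupling requirements appearing in Proposition \ref{prop:mergeConds} is literally an instance of the single-receiver multiparty merging problem already analyzed in Proposition \ref{prop:isometry}; everything else is bookkeeping about which systems play the role of ``sender'', ``receiver'' and ``reference''.

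For the bound (\ref{eq:upperbound1}) I would treat the helpers $\{C_i : i \in \cK\}$ as the $|\cK|$ senders, Alice's system $A$ (which holds the entanglement $\Phi^K$) as the receiver, and the composite system $\cKbar BR$ as the purifying reference. With this dictionary, the error $Q^1_{\cal I}$ is exactly the quantity $Q_{\cal I}$ of Proposition \ref{prop:isometry}: it is the average trace distance between $\varphi^{\cK^1 \cKbar BR}_{j_{\cK}}$ and $\tau^{\cK^1}_L \otimes \psi^{\cKbar BR}$, i.e. the decoupling of the $\cK$-shares from the reference $\cKbar BR$, with the receiver $A$ traced out. Applying Proposition \ref{prop:isometry} with $F_i = \lfloor d_{C_i}K_i/L_i\rfloor$ partial isometries and reference dimension $d_{\cKbar}d_B d_R$ then yields precisely (\ref{eq:upperbound1}), the purities $\Tr[(\psi^{{\cal S}\cKbar BR})^2]$ over subsets ${\cal S}\subseteq\cK$ playing the role of the $\Tr[(\psi^{R{\cal T}})^2]$ terms.

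For the bound (\ref{eq:upperbound2}) I would apply the same proposition, but now to the state $\psi^{A_{\cK}\cKbar ABR}$, i.e. the original state with $\cK$ relabelled as Alice's ancilla $A_{\cK}$, which is again a pure state of the required form. Here the senders are $\{C_i : i \in \cKbar\}$, the receiver is Bob's system $B$ (holding $\Gamma^M$), and the reference is $A_{\cK} AR$, of dimension $d_{A_{\cK}}d_A d_R$. The error $Q^2_{\cal I}$ coincides with $Q_{\cal I}$ for this relabelled problem, and with $G_i = \lfloor d_{C_i}M_i/N_i\rfloor$ partial isometries Proposition \ref{prop:isometry} gives (\ref{eq:upperbound2}), with purities $\Tr[(\psi^{A_{\cK}{\cal S}AR})^2]$ over ${\cal S}\subseteq\cKbar$.

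Finally, having established $Q^1_{\cal I} \leq \Delta^1_{\cal I}$ and $Q^2_{\cal I} \leq \Delta^2_{\cal I}$, I would invoke Proposition \ref{prop:mergeConds} with $\epsilon = \Delta^1_{\cal I}$ and $\epsilon' = \Delta^2_{\cal I}$ to obtain a split-transfer of error $2\sqrt{\Delta^1_{\cal I}} + 2\sqrt{\Delta^2_{\cal I}}$; the error accumulation and the fact that the second merge is really performed on the imperfect output of the first are already absorbed into the intermediate-state argument of that proposition. The ``on average'' claim is immediate from the corresponding statement of Proposition \ref{prop:isometry}. The one point that deserves a word of care is that both existence claims must hold for a single choice of measurements: since $Q^1_{\cal I}$ depends only on the Haar-distributed unitaries of the $\cK$-helpers while $Q^2_{\cal I}$ depends only on those of the $\cKbar$-helpers, the two expectations factor over independent randomness, so a good joint choice of unitaries realizing both bounds at once exists. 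This independence, together with the relabelling of $\cK$ to $A_{\cK}$ for the second application, is the only subtlety; the remainder is substitution into formulas already proved.
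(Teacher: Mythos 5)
Your proposal is correct and follows essentially the same route as the paper, whose own proof is just a sketch: two independent applications of Proposition \ref{prop:isometry} (with the senders' systems enlarged by their entanglement ancillas, and with the relabellings you describe) to bound $Q^1_{\cal I}$ and $Q^2_{\cal I}$, followed by Proposition \ref{prop:mergeConds}. Your explicit observation that the two error quantities depend on disjoint, independent sets of Haar unitaries—so that a single joint choice of measurements realizes both bounds simultaneously—is a detail the paper leaves to the reader, and you have filled it in correctly.
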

\begin{proof}
The bound on the quantum errors $Q_{\cal I}^1$ and $Q_{\cal I}^2$
given by eqs.~(\ref{eq:upperbound1}) and (\ref{eq:upperbound2})
can be obtained by two independent applications of Proposition
\ref{prop:isometry} to our setting. We leave the details to the
reader. The existence of a split-transfer with error
$2\sqrt{\Delta^1_{\cal I}}+2\sqrt{\Delta^2_{\cal I}}$ will then
follow from Proposition \ref{prop:mergeConds}. Note here that
since the helpers have additional entanglement at their disposal,
the partial isometries $P^{\cK}_{\jK}$ and $P^{\cKbar}_{\jKbar}$
in Proposition \ref{prop:mergeConds} are replaced by
$P^{\cK\cK^0}_{\jK}$ and $P^{\cKbar\cTo}_{\jKbar}$. These will act
on the spaces $\cK\cK^0$ and $\cKbar\cTo$ respectively, with
output spaces corresponding to $\cK^1$ and $\cTu$.
\end{proof}

Similarly, for the i. i. d. version, we can treat
each quantum error independently and follow a  line of reasoning similar to that in Section~\ref{sec:one-shot}. We arrive at a
variation on Theorem \ref{thm:statemerging}:

\begin{Theorem}[$m$-Party Split-Transfer] \label{thm:splittransfer}
Let $\inputGroupstate$ be a purified state which is shared between $m$ helpers and two receivers (Alice and Bob), with purifying system $R$. For all non-empty subsets ${\cal X} \subseteq {\cal T}$ and ${\cal Y} \subseteq {\cKbar}$, define ${\cal X}$ and ${\cal Y}$ as the tensor products $\bigotimes_{i \in {\cal X}}C_i$ and $\bigotimes_{i \in {\cal Y}}C_i$. Then, the rates $\overrightarrow{R_{\cK}}(\psi):=\bigoplus_{i \in \cK}(R_i)$ and $\overrightarrow{R_{\cKbar}}(\psi):=\bigoplus_{i \in \cKbar}(R_i)$ are achievable for a split-transfer of $\inputGroupstate$ iff the following inequalities
\begin{align}
   \label{eq:splitcond1}\sum_{i \in {\cal X}} R_i &\geq S({\cal X}|{\overline{\cal X}}A)_{\psi}\\
   \label{eq:splitcond2}\sum_{i \in {\cal Y}} R_i &\geq S({\cal Y}|{\overline{\cal Y}}B)_{\psi}
\end{align}
hold for all non-empty subsets ${\cal X} \subseteq {\cK}$ and ${\cal Y} \subseteq {\cKbar}$. The systems ${\overline{\cal X}}$
and ${\overline{\cal Y}}$ are defined as the complements of ${\cal X}$ and ${\cal Y}$ with respect to the systems ${\cK}$ and ${\cKbar}$ respectively.\end{Theorem}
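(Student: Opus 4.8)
The plan is to prove both directions by reducing the split-transfer to two essentially independent instances of ordinary $m$-party merging: one sending the group $\cK = {\cal T}$ to Alice and one sending $\cKbar = {\cT}$ to Bob. For the direct part I would let all helpers perform simultaneous random measurements as in Proposition~\ref{prop:rdnSplit}, so that by Proposition~\ref{prop:mergeConds} it suffices to drive the two quantum errors $Q^1_{\cal I}$ and $Q^2_{\cal I}$ to zero independently. The first error measures the decoupling of $\cK$ from the composite reference $\cKbar BR$, while the second measures the decoupling of $\cKbar$ from $A_{\cK}AR$ after $\cK$ has already been handed to Alice. Crucially, these two errors involve disjoint sets of measured systems (the $\varphi$ states in Proposition~\ref{prop:mergeConds} depend only on the $\cK$-measurements, the $\upsilon$ states only on the $\cKbar$-measurements) and distinct references, so they can be analyzed separately, exactly as the text anticipates.

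For each error I would follow the i.i.d.\ argument of Section~\ref{sec:iid}: Schumacher-compress $n$ copies onto typical subspaces and apply the random-measurement bound. The one point requiring care is that the relevant reference for the first transfer is the \emph{composite} system $\cKbar BR$; to obtain the sharp rate I would compress this reference jointly, treating it as the single purifying system of the pure state $\psi^{\cK A\,\cKbar BR}$, so that the dimension entering the bound is $2^{nS(\cKbar BR)}$ rather than the larger product of individual typical dimensions. With this choice, the dimension and purity estimates of Section~\ref{sec:iid} show that $Q^1_{\cal I}\to0$ provided the projector ranks $L_i$ obey, for every nonempty ${\cal X}\subseteq\cK$, the analogue of eq.~(\ref{eq:upper2}) with reference $\cKbar BR$. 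Taking logarithms exactly as in eqs.~(\ref{eq:upper2})--(\ref{eq:upper3}) turns this into $\sum_{i\in{\cal X}}R_i\ge S(\cKbar BR)_\psi-S({\cal X}\cKbar BR)_\psi+O(\delta)=S({\cal X}|{\overline{\cal X}}A)_\psi+O(\delta)$, where the last equality is the pure-state identity $S({\cal X}|{\overline{\cal X}}A)_\psi=S(\cKbar BR)_\psi-S({\cal X}\cKbar BR)_\psi$. The identical argument with $\cK\leftrightarrow\cKbar$, $A\leftrightarrow B$ and reference $A_{\cK}AR$ drives $Q^2_{\cal I}\to0$ under eq.~(\ref{eq:splitcond2}). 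Proposition~\ref{prop:mergeConds} then yields a split-transfer of vanishing error on the compressed state, and a gentle-measurement plus triangle-inequality step transfers the conclusion to $\psi^{\otimes n}$, precisely as at the end of the proof of Theorem~\ref{thm:statemerging}.

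For the converse I would use monotonicity of the entropy of entanglement under LOCC across two carefully chosen cuts, generalizing the converse of Theorem~\ref{thm:statemerging} proved in \cite{Merge}. To obtain eq.~(\ref{eq:splitcond1}) for a fixed nonempty ${\cal X}\subseteq\cK$, group on one side the systems ${\cal X}\cKbar BR$ together with the helper halves of the $\Phi^{K_i}$, $i\in{\cal X}$, and on the other side $\overline{\cal X}^{\cK}A$ with Alice. Every party acts locally within its own side and only classical information crosses, so the protocol is LOCC across this cut and the entanglement entropy cannot increase: initially it is $nS({\cal X}\cKbar BR)_\psi+\sum_{i\in{\cal X}}\log K_i$, and after an $\epsilon$-good split-transfer it is within $o(n)$ of $nS(\cKbar BR)_\psi+\sum_{i\in{\cal X}}\log L_i$, since ${\cal X}$'s content has moved to Alice while the $\Phi^{L_i}$ pairs still straddle the cut and the $\Gamma^{M_j}$ pairs lie entirely on one side. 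Rearranging and dividing by $n$ gives $\sum_{i\in{\cal X}}R_i\ge S({\cal X}|{\overline{\cal X}}A)_\psi$; the symmetric cut isolating ${\cal Y}\cK AR$ from $\overline{\cal Y}^{\cKbar}B$ yields eq.~(\ref{eq:splitcond2}) via $S(\cK AR)_\psi-S({\cal Y}\cK AR)_\psi = S({\cal Y}|{\overline{\cal Y}}B)_\psi$. The $o(n)$ slack is controlled by Fannes-type continuity from the $\epsilon$-closeness of the output, and I would defer those routine estimates to \cite{Merge}.

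The main obstacle is the bookkeeping in the direct part: one must verify that the two simultaneous random measurements genuinely decouple against their respective composite references without interfering, and in particular that compressing $\cKbar BR$ (resp.\ $A_{\cK}AR$) as a single system produces the joint conditional entropy $S({\cal X}|{\overline{\cal X}}A)$ rather than a subadditive overestimate involving $\sum_i S(C_i)+S(B)+S(R)$. The probability factorization $p_J=p_{j_{\cK}}p_{j_{\cKbar}}$ established before Proposition~\ref{prop:mergeConds} is what makes the two errors genuinely separable, and checking that this separation survives the approximations introduced by typical-subspace projection is the step most likely to require care.
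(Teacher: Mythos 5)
Your proposal is correct and follows essentially the same route as the paper's own proof: the achievability part runs Propositions~\ref{prop:rdnSplit} and~\ref{prop:mergeConds} on a Schumacher-compressed state with rank bounds equivalent to eqs.~(\ref{eq:bound1})--(\ref{eq:bound2}) and closes with the gentle measurement lemma plus the triangle inequality, while your converse is the paper's argument verbatim --- monotonicity of the entropy of entanglement under LOCC across the cut $A\cX$ versus $\X B \cKbar R$ (your cut, just described from the purifying side, with the same accounting of which $\Phi^{K_i}$ and $\Gamma^{M_j}$ straddle it) followed by a Fannes-type continuity estimate. If anything, you are more explicit than the paper, whose direct part simply says to ``proceed exactly as in the proof of Theorem~\ref{thm:statemerging}'': your insistence that the composite reference $\cKbar B R$ be treated jointly, so that the dimension entering the bound of Proposition~\ref{prop:rdnSplit} behaves like $2^{nS(\cKbar BR)_{\psi}}$ rather than the generally larger product of local typical dimensions, flags a genuine subtlety that the paper's one-line reduction leaves implicit.
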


\begin{proof}
To prove achievability, we can proceed exactly as in the proof of Theorem \ref{thm:statemerging}. That is, we Schumacher compress the state $(\inputGroupstate)^{\otimes n}$, and then perform random measurements on the helpers with the following bounds on the ranks of the projectors and of the pre-shared entanglement:
    \begin{align}
         \label{eq:bound1} 0 \leq \prod_{i \in {\cal X}} \frac{L_i}{K_i} &\leq 2^{n(S({\cal X}\cKbar B R)_{\psi} - S(\cKbar B R)_{\psi} - 3\delta|{\cal X}|)} \\
         \label{eq:bound2} 0 \leq \prod_{i \in {\cal Y}} \frac{M_i}{N_i} &\leq 2^{n(S({\cal Y}\cK A R)_{\psi} - S(\cK A R)_{\psi} - 3\delta|{\cal Y}|)}
    \end{align}
for all non empty subsets ${\cal X} \subseteq \cK$ and ${\cal Y} \subseteq \cKbar$. The bounds on the quantum errors $Q^1_{\cal I}$ and $Q^2_{\cal I}$ given in Proposition \ref{prop:rdnSplit} can then be made arbitrarily small. That is, we will have $Q^1_{\cal I}$ and $Q^2_{\cal I}$ bounded from above by $O(2^{-n\delta/2})$ for some typicality parameter $\delta$. By applying Proposition \ref{prop:mergeConds}, we get a split-transfer of the state $\inputGroupstate$ with error $O(2^{-n\delta/4})$ and entanglement costs $\overrightarrow{E_{\cK}}(\psi)=\bigoplus_{i \in \cK}(\log K_i - \log L_i)$ and $\overrightarrow{E_{\cKbar}}(\psi)=\bigoplus_{i \in \cKbar}(\log M_i - \log N_i)$. These will satisfy
  \begin{equation}
    \begin{split}
     \sum_{i \in {\cal X}}\frac{1}{n}(\log K_i - \log L_i) &\geq S({\cal X}|{\overline{\cal X}}A) + 3\delta|{\cal X}| \\
     \sum_{i \in {\cal Y}}\frac{1}{n}(\log M_i - \log N_i) &\geq S({\cal Y}|{\overline{\cal Y}}B) + 3\delta|{\cal Y}| \\
    \end{split}
  \end{equation}
for all non-empty subsets ${\cal X} \subseteq \cK$ and ${\cal Y}
\subseteq \cKbar$. An application of the gentle measurement lemma
and the triangle inequality then tell us that we can apply the
same protocol on the state $(\inputGroupstate)^{\otimes n}$ and
obtain a split-transfer with error
$O(2^{-n\delta/4})+O(2^{-cn\delta^2/2})$. Since this error goes to
zero as $n$ tends to infinity and $\delta$ was arbitrarily chosen,
we get back the direct part of the statement of the theorem.

To get the converse, we can consider any cut ${\cal X}$ of the
helpers in $\cK$ and look at the preservation of the entanglement
across the cut $A\cX$ vs $\X B\cKbar R$.  We assume, for technical reasons, that $L_i \leq 2^{O(n)}$ for all $i \in \cK$. The initial entropy of entanglement across the cut $A\cX$ vs $\X B\cKbar R$ is
 \begin{equation}\label{eq:Ein}
    E_{in} := n S(A\cX)_{\psi} + \sum_{i \in \X} \log K_i.
 \end{equation}
At the end of any LOCC operation on the state $(\inputGroupstate)^{\otimes n}$, the output state can be seen as an ensemble $\{q_k,\psi^k_{\cK^1A^1_{\cK}A^nA^n_{\cK}\cKbar^1B^1_{\cKbar}B^nB^n_{\cKbar}R^n}\}$ of pure states. Using monotonicity of the entropy of entanglement under LOCC, we have
\begin{equation}
n S(A\cX)_{\psi} + \sum_{i \in \X} \log K_i \geq \sum_k q_k S(\cX^1A^1_{\cK} A^n A^n_{\cK})_{\psi^k},
\end{equation}
where $\cX^1:=\bigotimes_{i \in \cX} C^1_i$. For any LOCC operation performing a split-transfer of the state $(\inputGroupstate)^{\otimes n}$ with error $\epsilon$, we have
\begin{equation}
 \sum_k q_k F^2(\psi^k_{\cK^1A^1_{\cK}A^nA^n_{\cK}\cKbar^1B^1_{\cKbar}B^nB^n_{\cKbar}R^n}, \psi^{\otimes n}_{AA_{{\cal T}}BB_{\ov{\cal T}}R} \otimes \Phi^L \otimes \Gamma^{N}) \geq (1-\epsilon/2)^2.
\end{equation}
This follows from the definition of a split-transfer (eq.~(\ref{eq:splittransfer})) and the fact that $F^2$ is linear when one argument is pure. Using Lemma \ref{Lemma:relation}, we can rewrite this as
\begin{equation}
 \sum_k q_k \bigg \|\psi^k_{\cK^1A^1_{\cK}A^nA^n_{\cK}\cKbar^1B^1_{\cKbar}B^nB^n_{\cKbar}R^n} - \psi^{\otimes n}_{AA_{{\cal T}}BB_{\ov{\cal T}}R} \otimes \Phi^L \otimes \Gamma^{N} \bigg \| \leq 2\sqrt{\epsilon(1-\epsilon/2)}.
\end{equation}
By monotonicity of the trace norm under partial tracing, we get
\begin{equation}
 \sum_k q_k \bigg \|\psi^k_{\cX^1A^1_{\cK}A^nA^n_{\cK}} - \psi^{\otimes n}_{AA_{{\cal T}}} \otimes \tau_{A_{\X}^1} \otimes \bigotimes_{i \in \cX} \Phi^{L_i} \bigg \| \leq 2\sqrt{\epsilon(1-\epsilon/2)}.
\end{equation}
Using the Fannes inequality (Lemma \ref{lem:Fannes}) and the concavity of the $\eta$-function, we have
\begin{equation}
 \begin{split}
 \sum_k q_k \bigg |S(\cX^1A^1_{\cK}A^nA^n_{\cK})_{\psi^k} - \sum_{i \in \X} \log L_i - n S(A\X\cX)_{\psi} \bigg | & \leq (2\sum_{i \in \cK}\log L_i + n\log d_A + n\log d_{A_{\cK}}) \eta(2\sqrt{\epsilon(1-\epsilon/2)})\\  &\leq O(n)\eta(2\sqrt{\epsilon(1-\epsilon/2)}).\\
 \end{split}
\end{equation}
Finally, using eq.~(\ref{eq:Ein}), we have
\begin{equation}
\sum_{i \in \X} \frac{1}{n}(\log K_i - \log L_i) \geq S(\X|\cX A)_{\psi} - O(1)\eta(2\sqrt{\epsilon(1-\epsilon/2)})
\end{equation}
for any non empty subset $\X \subseteq \cK$. Using a similar argumentation, we can show that
\begin{equation}
\sum_{i \in \Y} \frac{1}{n}(\log M_i - \log N_i) \geq S(\Y|\cY B)_{\psi} - O(1)\eta(2\sqrt{\epsilon(1-\epsilon/2)})
\end{equation}
holds for any non empty subset $\Y \subseteq \cKbar$. By letting $n \rightarrow \infty$ and $\epsilon \rightarrow 0$, we get the converse.
\end{proof}
If only a single copy of $\inputGroupstate$ is available to the involved parties, we can adapt the argument of Theorem \ref{thm:cost1} and prove the following result concerning the existence of split-transfer protocols with error $\epsilon$:
\begin{Proposition}
\label{prop:1ShotSplit}
Given a partition $\cK \subseteq \{1,2,\ldots,m\}$ of the helpers $C_1, C_2, \ldots, C_m$, let $\inputGroupstate$ be a $(m+3)$-partite pure state and fix
$\epsilon_1, \epsilon_2> 0$. Then, for any entanglement cost
$\overrightarrow{E_{\cK}} = \bigoplus_{i \in \cK}(\log K_i - \log L_i)$ and $\overrightarrow{E_{\cKbar}} = \bigoplus_{i \in \cKbar} (\log M_i - \log N_i)$ satisfying
 \begin{equation}\label{eq:cost}
   \begin{split}
   \log K_{{\cal S}} - \log L_{{\cal S}} := \sum_{i \in {\cal S}} (\log K_i - \log L_i) &\geq  -H_{\min}(\psi^{{\cal S} \cKbar B R}|\psi^{\cKbar B R}) + 4\log \left (\frac{1}{\epsilon_1} \right) + 2|\cK| + 8  \\
    \log M_{{\cal S}'} - \log L_{{\cal S}'} := \sum_{i \in {\cal S}'} (\log M_i - \log N_i) &\geq  -H_{\min}(\psi^{{\cal S}' A_{\cK} A R}|\psi^{A_{\cK} A R}) + 4\log \left(\frac{1}{\epsilon_2} \right) + 2|\cKbar| + 8 \\
     \end{split}
 \end{equation}
for all non-empty subsets ${\cal S} \subseteq \cK$ and ${\cal S}' \subseteq \cKbar$, there
exists a split-transfer protocol acting on $\inputGroupstate$ with error
$\epsilon_1+\epsilon_2$.
\end{Proposition}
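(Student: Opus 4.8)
The plan is to reduce this single-copy split-transfer to two essentially independent replays of the argument underlying Theorem~\ref{thm:cost1}, one for each side of the cut, and then to invoke Proposition~\ref{prop:mergeConds} to glue them together. Proposition~\ref{prop:mergeConds} already tells us that it suffices to control the two quantum errors separately: if $Q^1_{\cal I}(\inputGroupstate) \leq \epsilon_1^2/4$ and $Q^2_{\cal I}(\inputGroupstate) \leq \epsilon_2^2/4$, then a split-transfer with error $2\sqrt{\epsilon_1^2/4} + 2\sqrt{\epsilon_2^2/4} = \epsilon_1 + \epsilon_2$ exists, with the stated entanglement costs. So the entire task is to produce these two min-entropy bounds on the quantum errors, exactly as the proof of Proposition~\ref{prop:rdnSplit} produced purity bounds by two applications of Proposition~\ref{prop:isometry}.

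First I would fix, as in Proposition~\ref{prop:rdnSplit}, independent Haar-random measurements for the helpers in $\cK$ and in $\cKbar$: each helper $C_i$ in $\cK$ uses $\lfloor d_{C_i}K_i/L_i \rfloor$ partial isometries of rank $L_i$ plus one remainder of rank $L_i' < L_i$, and symmetrically for $\cKbar$ with $M_i, N_i$. For the first error $Q^1_{\cal I}$ the relevant decoupling is that of the $\cK$-systems from the composite reference $\cKbar BR$, so I would apply Lemma~\ref{Lemma:oneshotdecouple} with the single reference label $R$ replaced throughout by $\cKbar BR$ and with conditioning state $\psi^{\cKbar BR}$, and then carry out the normalization and zero-outcome bookkeeping verbatim from the proofs of Proposition~\ref{prop:isometry} and Theorem~\ref{thm:cost1}. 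The first cost hypothesis, which carries the matching term $-H_{\min}(\psi^{{\cal S}\cKbar BR}|\psi^{\cKbar BR}) + 4\log(1/\epsilon_1) + 2|\cK| + 8$, is precisely the condition eq.~(\ref{eq:cost1}) under this relabeling with $|\cK|$ playing the role of $m$, so the same chain of inequalities drives $Q^1_{\cal I} \leq \epsilon_1^2/4$. Here I would reuse the additivity step $H_{\min}(\psi^{{\cal S}\cKbar BR} \otimes \tau^{K_{\cal S}}|\psi^{\cKbar BR}) = H_{\min}(\psi^{{\cal S}\cKbar BR}|\psi^{\cKbar BR}) + \log K_{\cal S}$ to absorb the pre-shared entanglement $\Phi^K$, and the strong-subadditivity bound $H_{\min}(\psi^{{\cal S}\cKbar BR}|\psi^{\cKbar BR}) \leq H_{\min}(\psi^{\cal S}) \leq \log d_{C_{\cal S}}$ to control the zero-outcome remainder terms.

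The second error $Q^2_{\cal I}$ is handled symmetrically, with the $\cKbar$-systems decoupled from the reference $A_{\cK}AR$, where $A_{\cK}$ is the ancilla that has taken the place of $\cK$ after Alice's decoding and is isomorphic to $\cK$. Applying Lemma~\ref{Lemma:oneshotdecouple} with reference $A_{\cK}AR$ and conditioning state $\psi^{A_{\cK}AR}$, the second cost hypothesis involving $-H_{\min}(\psi^{{\cal S}'A_{\cK}AR}|\psi^{A_{\cK}AR})$ yields $Q^2_{\cal I} \leq \epsilon_2^2/4$ by the identical computation with $|\cKbar|$ in place of $m$ and $\epsilon_2$ in place of $\epsilon$. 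With both quantum-error bounds in hand, Proposition~\ref{prop:mergeConds} immediately delivers a split-transfer of error $\epsilon_1 + \epsilon_2$.

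The step I expect to require the most care is justifying that the second decoupling can legitimately be analyzed on the original state with $\cK$ relabeled to $A_{\cK}$, independently of the first measurement and of Alice's isometry. This is exactly the content that Proposition~\ref{prop:mergeConds} was designed to encapsulate: its intermediate-state ($\sigma$) construction, combined with the non-increase of the trace distance under the LOCC map $\cal M$, shows that the two errors combine additively and that bounding them separately is sound. Once that reduction is granted, no genuinely new estimate is needed---the remaining work is a relabeled replay of Theorem~\ref{thm:cost1}---and the only things to verify are that the composite systems $\cKbar BR$ and $A_{\cK}AR$ are admissible reference systems for Lemma~\ref{Lemma:oneshotdecouple} (they are, since that lemma is stated for an arbitrary reference) and that the additive constants $2|\cK|+8$ and $2|\cKbar|+8$ track correctly through the two sums over nonempty subsets ${\cal S} \subseteq \cK$ and ${\cal S}' \subseteq \cKbar$.
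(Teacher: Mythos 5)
Your proposal is correct and follows essentially the same route as the paper's own proof: fix independent Haar-random measurements on the two sides of the cut, apply Lemma~\ref{Lemma:oneshotdecouple} with the composite references $\cKbar BR$ and $A_{\cK}AR$ (conditioning on $\psi^{\cKbar BR}$ and $\psi^{A_{\cK}AR}$), replay the Theorem~\ref{thm:cost1} bookkeeping with $|\cK|$ and $|\cKbar|$ in place of $m$ to get $Q^1_{\cal I} \leq \epsilon_1^2/4$ and $Q^2_{\cal I} \leq \epsilon_2^2/4$, and conclude via Proposition~\ref{prop:mergeConds}. Your closing observation---that the soundness of analyzing the second decoupling on the relabeled state is exactly what Proposition~\ref{prop:mergeConds} encapsulates---matches how the paper structures the argument.
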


\begin{proof}
The proof is very similar to the proof of Theorem \ref{thm:cost1}. First, we fix random measurements for each helper $C_i$ in a manner analogous to Proposition \ref{prop:isometry}. For each helper $C_i$ in $\cK$, we have $F_i = \lfloor \frac{d_{C_i}K_i}{L_i}\rfloor$ random partial isometries $Q^i_j U_i$ of rank $L_i$, where $Q^i_j$ is defined as in Proposition \ref{prop:isometry} and $U_i$ is a random Haar unitary on $C_iC^0_i$. If $F_iL_i < d_{C_i}K_i$, we also have a partial isometry of rank $L'_i < L_i$. Similarly, for each helper $C_i$ in $\cKbar$, we have $G_i = \lfloor \frac{d_{C_i}M_i}{N_i}\rfloor$ random partial isometries $Q^i_j U_i$ of rank $N_i$, and one of rank $N'_i$ if $G_iN_i < d_{C_i} M_i$. For a measurement outcome $J:=(j_1,j_2,\ldots, j_m)$, let $J_{\cK} = \bigoplus_{i \in \cK} j_i$ be the vector of length $t=|\cK|$ whose components correspond to the measurement outcomes for the helpers belonging to the cut $\cK$. The $i$-th element of $J_{\cK}$ will be denoted by $j_{\cK,i}$. Define
 \begin{equation}
   \omega_{J_{\cK}}^{\cK^1 \cKbar B R} := (Q^{\cK}_{J}U_{\cK} \otimes I_{\cKbar BR}) \psi^{\cK \cKbar BR} (Q_J^{\cK}U_{\cK} \otimes I_{\cKbar BR})^{\dag},
 \end{equation}
where $Q_{J}^{\cK} := \bigotimes_{i \in \cK} Q^i_{j_i}$ and the shorthand $U_{\cK}$ denotes the tensor product $\bigotimes_{i \in \cK}U_i$. If we apply Lemma \ref{Lemma:oneshotdecouple} to the state $\psi^{\cK R'} \otimes \tau^K$, where $\tau^K := \bigotimes_{i \in \cK} \tau^{K_i}$ and $R' := \cKbar \otimes B \otimes R$, we get
 \begin{equation}\label{eq:deriv2}
   \begin{split}
  \mathbb{E} \bigg [\sum_{j_{\cK,1}=1}^{F_1} \sum_{j_{\cK,2}=1}^{F_2} \cdots \sum_{j_{\cK,t}}^{F_t} \bigg \| &\omega^{\cK^1 R'}_{J}
- \frac{L}{d_{C_{\cK}}} \tau_L^{\cK^1} \otimes \psi^{R'} \bigg \|_1 \bigg ] \\ &\leq \frac{\prod_{i \in \cK} F_i
L_i}{d_{C_{\cK}}K} \sqrt{\sum_{\substack{{\cal S} \subseteq
\cK \\ {\cal S} \neq \emptyset}}
2^{-(H_{\mathrm{min}}(\psi^{{\cal S} R'}|\psi^{R'}) + \log K_{{\cal S}} -
   \log L_{{\cal S}})}} \\
  &\leq \sqrt{\sum_{\substack{{\cal S} \subseteq \cK \\ {\cal S}
   \neq \emptyset}}
   2^{-(H_{\mathrm{min}}(\psi^{{\cal S} R'}|\psi^{R'}) + \log K_{{\cal S}} -
   \log L_{{\cal S}})}}, \\
\end{split}
\end{equation}
where $K_{{\cal S}} = \prod_{i \in {\cal S}} K_i$.

Using the hypothesis that $\log K_{{\cal S}} - \log L_{{\cal S}} \geq -H_{\min}(\psi^{{\cal S} R'}|\psi^{R'}) + 4\log \left(\frac{1}{\epsilon_1} \right ) + 2|\cK| + 8$, we can proceed in a manner analogous to the proof of Theorem \ref{thm:cost1} and get the following bound on the expectation of the quantum error $Q^1_{\cal I}(\inputGroupstate \otimes \Phi^K)$:
\begin{equation}
\begin{split}
 \mathbb{E} \bigg [\sum_{j_{\cK,1}=0}^{F_1} &\sum_{j_{\cK,2}=0}^{F_2} \cdots \sum_{j_{\cK,t}=0}^{F_t} p_{J_{\cK}} \bigg \| \psi^{\cK^1 \cKbar B R}_{J_{\cK}}
- \tau^{\cK^1}_L \otimes \psi^{\cKbar B R} \bigg \|_1 \bigg ] \\
&\leq 2 \sum_{\substack{{\cal S} \subseteq \cK \\ {\cal
S} \neq \emptyset}}\prod_{i \in {\cal S}} \frac{L_i}{d_{C_i}K_i} +
\mathbb{E} \bigg [{\sum_{j_{\cK,1}=1}^{F_1} \sum_{j_{\cK,2}=1}^{F_2} \cdots
\sum_{j_{\cK,t}=1}^{F_t} p_{J_{\cK}} \bigg \| \psi^{\cK^1 \cKbar B R}_{J_{\cK}}
- \tau^{\cK^1}_L \otimes \psi^{\cKbar B R} \bigg \|_1 } \bigg ]\\
 &\leq  \sum_{\substack{{\cal S} \subseteq \cK \\ {\cal
S} \neq \emptyset}}\frac{2\epsilon_1^4 2^{H_{\min}(\psi^{\cal S})}}{2^{2t+8}d_{C_{{\cal S}}}} + \frac{\epsilon^2_1}{8}\\
& \leq  \frac{\epsilon_1^4}{2^{t+7}} + \frac{\epsilon^2_1}{8} \leq \frac{\epsilon_1^2}{4}, \\
 \end{split}
 \end{equation}
 where $t=|\cK|, p_{J_{\cK}} = \Tr(\omega^{\cK^1 \cKbar B R}_{J_{\cK}})$ and $\psi^{\cK^1 \cKbar B R}_{J_{\cK}} = \frac{1}{p_{J_{\cK}}}\omega^{\cK^1 \cKbar B R}_{J_{\cK}}$. In a similar way, we can bound the expected value of the quantum error $Q^2_{\cal I}$ as follows:
\begin{equation}
\begin{split}
 \mathbb{E} \bigg [\sum_{j_{\cKbar,1}=0}^{G_1} \sum_{j_{\cKbar,2}=0}^{G_2} \cdots \sum_{j_{\cKbar,m-t}=0}^{G_{m-t}} p_{J_{\cKbar}} \bigg \| \psi^{A_{\cK} \cKbar^1 A R}_{J_{\cKbar}}
- \tau^{\cKbar^1}_N \otimes \psi^{A_{\cK} A R} \bigg \|_1 \bigg ]
 &\leq  \sum_{\substack{{\cal S}' \subseteq \cKbar \\ {\cal
S}' \neq \emptyset}}\frac{2\epsilon_2^4 2^{H_{\min}(\psi^{{\cal S}'})}}{2^{2(m-t)+8}d_{C_{{\cal S}'}}} + \frac{\epsilon^2_2}{8}\\
& \leq  \frac{\epsilon^4_2}{2^{m-t+7}} + \frac{\epsilon^2_2}{8} \leq \frac{\epsilon^2_2}{4}. \\
 \end{split}
 \end{equation}
From Proposition \ref{prop:mergeConds}, we can conclude that there exists a split-transfer protocol of error $\epsilon_1+\epsilon_2$. \end{proof}
With these results in hand, we can now return to our initial motivation, which was that of proving that the min-cut entanglement of the state $\inputstate$ can be preserved by letting all the helpers $C_1,C_2,\ldots,C_m$ perform simultaneous random measurements on their typical subspaces. To prove this fact, we will need the following corollary to Theorem \ref{thm:splittransfer}.

\begin{Corollary}\label{cor:LOCC}
For a pure state $\inputstate$, we denote by $\cK_{\min}$ a cut of
the smallest possible size with the following property: \be
\forall \cK \subseteq \{C_1,C_2,\ldots,C_m\}:
S(A\cK_{\min})_{\psi} \leq S(A\cK)_{\psi}.\ee Then, for the state
$\psi^{\cK_{\min}\cKbar_{\min} AB}$, the right hand side of
eq.~(\ref{eq:splitcond1}) will be negative for all nonempty sets ${\cal X}
\subseteq \cK_{\min}$, while the right hand side of
eq.~(\ref{eq:splitcond2}) will be non-positive for all nonempty sets ${\cal Y}
\subseteq \cKbar_{\min}$.

Furthermore, if we have arbitrarily many copies of the state $\inputstate$ at our disposal, we can perform a split-transfer of the state $\inputstate$ using only local operations and classical communication.
\end{Corollary}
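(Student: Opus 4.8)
The plan is to prove the two entropic sign conditions by direct manipulation of von Neumann entropies, and then hand the resulting signs to the achievability half of Theorem \ref{thm:splittransfer}.

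First I would verify the claim about the right hand side of eq.~(\ref{eq:splitcond1}). Fixing a nonempty ${\cal X} \subseteq \cK_{\min}$ and writing $\overline{\cal X} = \cK_{\min} \setminus {\cal X}$ for its complement inside $\cK_{\min}$, the chain rule gives
\begin{equation}
S({\cal X}|\overline{\cal X}A)_{\psi} = S(\cK_{\min}A)_{\psi} - S((\cK_{\min}\setminus{\cal X})A)_{\psi} = S(A\cK_{\min})_{\psi} - S(A(\cK_{\min}\setminus{\cal X}))_{\psi}.
\end{equation}
Since $\cK_{\min}\setminus{\cal X}$ is itself a cut of the helpers, the defining min-cut property of $\cK_{\min}$ makes this difference non-positive. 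The strictness I want comes precisely from the minimality of the \emph{size} of $\cK_{\min}$: because ${\cal X}$ is nonempty, $\cK_{\min}\setminus{\cal X}$ is a strictly smaller cut, so it cannot also minimize $S(A\,\cdot\,)_{\psi}$ without contradicting the choice of $\cK_{\min}$; hence $S(A\cK_{\min})_{\psi} < S(A(\cK_{\min}\setminus{\cal X}))_{\psi}$ and the quantity is strictly negative.

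For the right hand side of eq.~(\ref{eq:splitcond2}), I would exploit that $\inputstate$ is pure, replacing each marginal entropy by that of its complementary subsystem. Fixing a nonempty ${\cal Y}\subseteq\cKbar_{\min}$ with $\overline{\cal Y} = \cKbar_{\min}\setminus{\cal Y}$, the complement of $\cKbar_{\min}B$ is $\cK_{\min}A$ and the complement of $(\cKbar_{\min}\setminus{\cal Y})B$ is $\cK_{\min}{\cal Y}A$, so
\begin{equation}
S({\cal Y}|\overline{\cal Y}B)_{\psi} = S(\cKbar_{\min}B)_{\psi} - S((\cKbar_{\min}\setminus{\cal Y})B)_{\psi} = S(A\cK_{\min})_{\psi} - S(A(\cK_{\min}\cup{\cal Y}))_{\psi}.
\end{equation}
Now $\cK_{\min}\cup{\cal Y}$ is also a cut, so the min-cut property gives $S(A\cK_{\min})_{\psi}\leq S(A(\cK_{\min}\cup{\cal Y}))_{\psi}$ and the expression is non-positive. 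Only non-positivity is available here, because enlarging a minimizing cut can leave its value unchanged without violating the minimal-size clause.

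Finally I would apply the achievability direction of Theorem \ref{thm:splittransfer} to the partition $(\cK_{\min},\cKbar_{\min})$. Since every right hand side in eqs.~(\ref{eq:splitcond1}) and (\ref{eq:splitcond2}) is $\leq 0$, the rate region contains tuples with all $R_i \leq 0$; concretely one takes the pre-shared entanglement trivial ($K_i = M_i = 1$) and still finds ranks $L_i, N_i \geq 1$ satisfying the achievability bounds, with the strict negativity on the $\cK_{\min}$ side even allowing $L_i > 1$ so that entanglement is genuinely distilled. A split-transfer of non-positive cost consumes no net entanglement, so the protocol underlying Theorem \ref{thm:splittransfer} --- Schumacher compression, simultaneous random measurements by the helpers, and isometric decodings by Alice and Bob --- performs the split-transfer on arbitrarily many copies of $\inputstate$ using local operations and classical communication alone. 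I expect the main obstacle to be securing the strict inequality in the first step, which is exactly what the ``smallest possible size'' stipulation in the definition of $\cK_{\min}$ is engineered to deliver.
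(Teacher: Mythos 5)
Your entropy manipulations in the first part are correct and essentially identical to the paper's: the same chain rule giving $S(\X|\cX A)_{\psi} = S(A\cK_{\min})_{\psi} - S(A(\cK_{\min}\setminus\X))_{\psi}$, with strict negativity extracted from the minimal-size clause exactly as the paper does, and the same purity-based rewriting giving $S(\Y|\cY B)_{\psi} = S(A\cK_{\min})_{\psi} - S(A(\cK_{\min}\cup\Y))_{\psi} \leq 0$.

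The gap is in your final paragraph. You correctly note that on the $\cKbar_{\min}$ side only non-positivity is available, but you then claim that with $K_i = M_i = 1$ one can ``still find ranks $L_i, N_i \geq 1$ satisfying the achievability bounds.'' This fails precisely when $S(\Y|\cY B)_{\psi} = 0$ for some nonempty $\Y \subseteq \cKbar_{\min}$ (which happens whenever $\cK_{\min}\cup\Y$ is another minimizing cut; the minimal-size clause does not exclude this). The achievability bounds of Theorem \ref{thm:splittransfer} carry a typicality slack: one needs roughly
\begin{equation}
\prod_{i\in\Y} \frac{N_i}{M_i} \leq 2^{\,n(-S(\Y|\cY B)_{\psi} - 3\delta|\Y|)}
\end{equation}
for some $\delta > 0$, and with $M_i = 1$ and $S(\Y|\cY B)_{\psi} = 0$ the right-hand side is $2^{-3n\delta|\Y|} < 1$, so no choice of $N_i \geq 1$ works. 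More fundamentally, ``rate $0$ lies in the rate region'' only means there exist protocols whose entanglement \emph{rate} tends to zero; such protocols may still consume a positive, sublinear number of ebits, and likewise a protocol of zero \emph{net} cost may require catalytic pre-shared entanglement that is returned at the end --- neither is the unassisted-LOCC claim the corollary actually makes. The paper closes this hole explicitly: when some $S(\Y|\cY B)_{\psi} = 0$, it injects an arbitrarily small (sublinear) number of singlets across the relevant cut to make the conditional entropy strictly negative, and then invokes \cite{Smolin}, which shows that for pure states LOCC supplemented by a sublinear amount of entanglement is no more powerful than plain LOCC, since the required EPR pairs can themselves be generated by LOCC from $o(n)$ extra copies of the state unless the state is product across that cut. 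Your argument needs this (or an equivalent) additional step to justify the ``using only local operations and classical communication'' conclusion.
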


\begin{proof}
For any non-empty subset ${\X} \subseteq {\cal T}_{min}$, where $\cK_{\min}$ is not the empty set, we have
 \begin{equation}
\begin{split}
S(\X|\cX A)_{\psi} &= S(\cK_{\min}A)_{\psi} - S(\X \cKbar_{\min}B)_{\psi} \\
& < S(\cX A)_{\psi} - S(\X \cKbar_{\min} B)_{\psi}\\
&= S(\X \cKbar_{\min} B)_{\psi} - S(\X \cKbar_{\min} B)_{\psi} \\
&= 0,\\
\end{split}
\end{equation}
where in the second line we have used the fact that $S(A \cK_{\min})_{\psi} < S(A \cK)$ when $\cK$ is a cut of size smaller than $|\cK_{\min}|$.

Similarly, for any non-empty subset $\Y \subseteq \cKbar_{\min}$, where $\cK_{\min}$ is not the whole set $\{C_1,C_2,\ldots,C_m\}$, we have
\begin{equation}
  \begin{split}
S(\Y|\cY B)_{\psi} &= S(\cKbar_{\min}B)_{\psi} - S(\Y \cK_{\min} A)_{\psi} \\
&= S(\cK_{\min}A)_{\psi} - S(\Y \cK_{\min} A)_{\psi} \\
&\leq 0 \\
  \end{split}
\end{equation}
This proves the first part of the corollary.

To get the second part, apply Theorem \ref{thm:splittransfer} by
setting the Schmidt ranks of the pre-shared maximally entangled
states to be $K_i = 1$ for all $i \in \cK$ and $N_j=1$ for all $j
\in \cKbar$. Then, for these particular values,
eqs.~(\ref{eq:bound1}) and (\ref{eq:bound2}) give us bounds on the
ranks $L_i$ and $M_j$ of projectors corresponding to measurements
performed by $C_i \in \cK$ and $C_j \in \cKbar$ respectively.
Since $L_i \geq 1$ and $M_j \geq 1$ must be satisfied for all $i
\in \cK$ and $j \in \cKbar$, we need the conditional entropies
$S(\X|\cX A)_{\psi}$ and $S(\Y|\cY B)_{\psi}$ appearing in the
upper bounds to $\prod_{i \in \cK} L_i$ and $\prod_{j \in \cKbar}
M_j$ to be negative. Otherwise, the helpers will not be able to
perform measurements with vanishing quantum errors $Q^1_{\cal I}$
and $Q^2_{\cal I}$ and they will need to consume additional
entanglement.

If some of the conditional entropies $S(\Y|\cY B)_{\psi}$ are equal to zero, we will need to inject an arbitrarily small amount of singlets between the cut $\Y$ vs $A \cK_{\min}$ or the cut $\Y$ vs $B \cY$ in order to make $S(\Y|\cY B)_{\psi}$ negative (i.e an EPR pair contributes -1 to the conditional entropy). However, it is shown in \cite{Smolin} that for pure states, the LOCC class of transformations is not more powerful if we allow an additional sublinear amount of entanglement. This is due to the fact that we can always generate EPR pairs between a given cut, using an $o(n)$ amount of copies of the initial state, unless across that cut the state happens to be in a product state.
\end{proof}

\begin{Theorem}[Multipartite Entanglement of Assistance \cite{Merge}]
Let $\inputstate$ be a state shared between $m$ helpers and two recipients: Alice and Bob. Given many copies of $\inputstate$, if we allow LOCC operations between the helpers and the recipients, the optimal "assisted" EPR rate is given by  \begin{equation}\label{eq:mincut}
  E^{\infty}_A(\psi,A:B) = \min_{\cK} \{S(A\cK)_{\psi}  \}
\end{equation}
\end{Theorem}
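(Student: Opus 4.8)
The plan is to establish both directions—achievability and optimality—entirely from the split-transfer machinery, thereby replacing the recursive argument of \cite{Merge} with a single simultaneous round of measurements. For the direct part I would regard the pure state $\inputstate$ as a split-transfer instance $\inputGroupstate$ with trivial reference $R$, and let $\cK_{\min}$ be a min-cut, i.e.\ a subset of smallest size minimizing $S(A\cK)_{\psi}$. By Corollary \ref{cor:LOCC}, the conditional entropies $S(\X|\cX A)_{\psi}$ and $S(\Y|\cY B)_{\psi}$ appearing in the achievability conditions \eqref{eq:splitcond1} and \eqref{eq:splitcond2} of Theorem \ref{thm:splittransfer} are respectively negative and non-positive for all nonempty $\X\subseteq\cK_{\min}$ and $\Y\subseteq\cKbar_{\min}$. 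Hence all pre-shared Schmidt ranks may be taken trivial ($K_i=N_j=1$), and, after injecting at most a sublinear amount of pure entanglement as permitted by \cite{Smolin} to handle any vanishing $S(\Y|\cY B)_{\psi}$, the split-transfer is accomplished by LOCC alone. The output is arbitrarily close to the pure bipartite state $\psi^{AA_{\cK_{\min}}BB_{\cKbar_{\min}}}$, with Alice holding $AA_{\cK_{\min}}$ and Bob holding $BB_{\cKbar_{\min}}$.

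Having redistributed the helpers' shares, I would then distill EPR pairs by standard entanglement concentration \cite{Concentrate,Bennett}. Since the redistributed state is pure across the Alice/Bob cut, concentration extracts maximally entangled pairs at a rate equal to the entropy of entanglement $S(AA_{\cK_{\min}})_{\psi}$. Because $A_{\cK_{\min}}$ is merely the relabelled copy of $\cK_{\min}$, this rate equals $S(A\cK_{\min})_{\psi}=\min_{\cK}S(A\cK)_{\psi}$, which establishes that the rate of eq.~(\ref{eq:mincut}) is achievable.

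For the converse I would run the standard cut argument. Fix any cut $\cK$ and view the entire protocol as bipartite LOCC between the composite party $A\cK$ (Alice grouped with the helpers in $\cK$) and the composite party $B\cKbar$; the helpers' broadcasts are then just classical communication respecting this bipartition. The entropy of entanglement across $A\cK$ vs $B\cKbar$ cannot increase under such operations, exactly as in the converse to Theorem \ref{thm:statemerging}. Initially, for $n$ copies it equals $nS(A\cK)_{\psi}$ (the state is pure), while any EPR pairs produced between Alice and Bob at the end are also shared across this coarser cut and contribute their full number to that entanglement. Thus the distilled rate is at most $S(A\cK)_{\psi}$ for every $\cK$, and minimizing over cuts gives $E^{\infty}_A(\psi,A:B)\leq\min_{\cK}S(A\cK)_{\psi}$, matching the achievable rate.

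The genuine content is packaged inside Corollary \ref{cor:LOCC}: the assertion that the min-cut entanglement survives one simultaneous round of random measurements by all helpers, so that the split-transfer can be performed without consuming entanglement. That is the step I expect to be the main obstacle, and it rests on the split-transfer existence result (Proposition \ref{prop:rdnSplit} and Theorem \ref{thm:splittransfer}) together with the characterization of $\cK_{\min}$ forcing the relevant conditional entropies to be non-positive. Once that is secured, the concentration step and the cut converse are routine, and a minor remaining subtlety—cuts for which $S(\Y|\cY B)_{\psi}$ is zero rather than strictly negative—is absorbed by the sublinear entanglement injection of \cite{Smolin}, which leaves the asymptotic rate unchanged.
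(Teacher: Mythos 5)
Your proposal is correct and follows essentially the same route as the paper: choose a min-cut $\cK_{\min}$ of smallest size, invoke Corollary \ref{cor:LOCC} to perform the split-transfer of $\psi^{\cK_{\min}\cKbar_{\min}AB}$ by LOCC alone (with the sublinear entanglement injection of \cite{Smolin} handling vanishing conditional entropies), and then apply standard concentration to the redistributed state at rate $S(A\cK_{\min})_{\psi}=\min_{\cK}S(A\cK)_{\psi}$. The only minor differences are that the paper makes the continuity step explicit—applying the Fannes inequality (Lemma \ref{lem:Fannes}) to the actual $\epsilon$-close output state to show its entropy of entanglement is $n(S(A\cK_{\min})_{\psi}\pm\delta)$ before distilling, a step you gloss over—and that the paper does not re-prove the converse (it is inherited from \cite{Merge}), whereas you sketch the standard cut-monotonicity argument, which is valid.
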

\begin{proof}
Let $\cK_{\min}$ be a cut of the smallest size attaining the
minimization in eq.~(\ref{eq:mincut}) and fix some $\epsilon > 0$.
Then, according to Corollary \ref{cor:LOCC}, if $n$ is large
enough, we can perform a split-transfer protocol of the state
$\psi^{\cK_{\min}\cKbar_{\min}AB}$ with error $\epsilon$. This
will produce a state
$\varphi^{A^nA^n_{\cK_{\min}}B^nB^n_{\cKbar_{\min}}}$ such that
\begin{equation}\label{eq:dist}
 \bigg \| \varphi^{A^n A^n_{\cK_{\min}}} - (\psi^{AA_{\cK_{\min}}})^{\otimes n} \bigg \|_1 \leq \bigg \| \varphi^{A^nA^n_{\cK_{\min}}B^nB^n_{\cKbar_{\min}}} - (\psi^{AA_{\cK_{\min}}BB_{\cKbar_{\min}}})^{\otimes n} \bigg \|_1 \leq \epsilon,
\end{equation}
where $\psi^{AA_{\cK_{\min}}BB_{\cKbar_{\min}}}$ is the original
state $\psi^{\cK_{\min}\cKbar_{\min}AB}$ with the systems
$A_{\cK_{\min}}$ and $B_{\cKbar_{\min}}$ substituted for the
systems $\cK_{\min}$ and $\cKbar_{\min}$. Applying the Fannes
inequality to eq.~(\ref{eq:dist}), we get
 \begin{equation}
   \bigg | S(A^nA^n_{\cK_{\min}})_{\varphi} - n(S(AA_{\cK_{\min}})_{\psi} \bigg | \leq n \log (d_A d_{A_{\cK_{\min}}}) \eta(\epsilon)
 \end{equation}
which implies that
 \begin{equation}
   S(A^nA^n_{\cK_{\min}})_{\varphi} = n(S(AA_{\cK_{\min}})_{\psi} \pm \delta) = n(S(A\cK_{\min}) \pm \delta),
 \end{equation}
 where $\delta$ can be made arbitrarily small by letting $\epsilon \rightarrow 0$.  Thus, the min-cut entanglement $E^{\infty}_A(\psi)$ is arbitrarily well preserved after the split-transfer is performed, and so Alice and Bob can distill at this rate by applying a standard purification protocol on $\varphi^{A^nA^n_{\cK_{\min}}B^nB^n_{\cKbar_{\min}}}$.
\end{proof}

\section{Discussion} \label{sec:discussion}

We have studied the problem of multiparty state merging with an emphasis on
how to accomplish merging when the participants have access only to a single copy of a quantum state.
In the easier asymptotic i.i.d. setting, the rate region was
characterized by a set of  ``entropic'' inequalities which any
rate-tuple $(R_1,R_2,\ldots,R_m)$ must satisfy in order to be
achievable for merging. These inequalities define a convex region
$S$ in an $m$-dimensional space, whose axes are the individual
rates $R_i$, and where merging can be achieved if the parties have
access to many copies of $\psi^{C_1C_2\ldots C_m BR}$. Our
protocol for multiparty state merging distinguishes itself in that
any point in the rate region can be achieved without the need for time-sharing. The
main technical challenge for showing this was to adapt the
decoupling lemma of \cite{Merge} and the upper bound to the
quantum merging error (Proposition 4 in \cite{Merge}) to the
multiparty setting.

The one-shot analysis of the entanglement cost necessary to perform
merging presented more difficulties
than in the asymptotic setting but as compensation yielded greater rewards.
Most notably, because time-sharing is impossible with only a single copy of a quantum state,
our intrinsically multiparty protocol provides the first method to interpolate between achievable
costs in the multiparty setting. The technical challenge was to derive an upper bound on
the quantum error $Q_{\cal I}(\psi)$ for a random coding
strategy in terms of the min-entropies. We suspect
that it might possible to further improve our bound by replacing the
min-entropies with their smooth variations, but it is unclear how to proceed in order to show this. We leave it as an open
problem. To illustrate the advantages of intrinsic multiparty merging over iterated two-party merging, we also performed a detailed analysis of the costs incurred by the two strategies for variants of the embezzling states.

Lastly, we have introduced the split-transfer problem, a variation
on the state merging task, and applied it in the context of
multiparty assisted distillation. The main technical difficulty
here was to prove that the helpers in the cut $\cKbar$ do not have
to wait for the helpers in $\cK$ to complete their merging with
the decoder $A$ before they can proceed with the transfer of their
shares to the $B$ decoder. The essential ingredients for showing
this were the commutativity of the Kraus operators $P^{\cK}_{\jK}$
and $P^{\cKbar}_{\jKbar}$, and the triangle inequality.  The rate
region for a split-transfer is composed of two sub-regions, each
corresponding to rates which would be achievable for a merging
operation from $\cK$(resp. $\cKbar$) to $A$(resp. $B$) with
reference $\cKbar BR$ (resp. $\cK AR$).

In the context of assisted distillation, the existence of a
split-transfer protocol which redistributes the initial pure state
$\psi^{C_1C_2 \ldots C_m AB}$ to the decoders $A$ and $B$ was used
to give a non-recursive proof that the optimal achievable EPR rate
under assistance is given by the min-cut entanglement
$\min_{\cK}\{S(A\cK)\}$. It would be interesting to come up with
other potential applications for the split-transfer protocol.
State merging was used as a building block for solving various
communication tasks, and we believe split-transfer
could be useful in other multipartite scenarios than the assisted
distillation context. Alternatively, it could also simplify some
of the existing protocols which rely on multiple applications of
the state merging primitive.

\acknowledgments
The authors would like to thank J\"urg Wullschleger for an interesting discussion on the subject of time-sharing and Andreas Winter for discussions on multiparty state transfer.
This research was supported by the Canada Research Chairs
program, CIFAR, FQRNT, INTRIQ, MITACS, NSERC, ONR grant
No.~N000140811249 and QuantumWorks.

\appendix

\section{Miscellaneous Facts} \label{app:facts}
    For an operator $X$, the trace norm is defined as: \[ \|X\|_1 := \Tr \sqrt{X^{\dag}X}, \]
    and the trace distance of two states $\rho$ and $\sigma$ is given by $D(\rho,\sigma) = \frac{1}{2}\|\rho-\sigma\|_1$.
     An alternative measure of closeness of two states is given by the fidelity:
     \[ F(\rho,\sigma) := \biggl ( \Tr\sqrt{\sqrt{\rho}\sigma\sqrt{\rho}}\biggr ). \]
    If the state $\rho := \braket{\psi}$ is pure, the fidelity between $\rho$ and $\sigma$ becomes equal to:
    \[ F(\braket{\psi},\sigma) = \sqrt{ \langle \psi | \sigma | \psi \rangle} = \sqrt{ \Tr(\rho \braket{\psi})}  \]
   These two measures of closeness are related as follows:
   \begin{Lemma}\cite{FuchsVandegraaf:fidelity}\label{Lemma:relation}
   For states $\rho$ and $\sigma$, the trace distance is bounded by
         \[ 1- F(\rho, \sigma) \leq D(\rho,\sigma) \leq \sqrt{1-F^2(\rho,\sigma)}. \]
   \end{Lemma}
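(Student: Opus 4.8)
The plan is to prove the two inequalities separately. The upper bound rests on Uhlmann's theorem together with an exact evaluation of the trace distance between two \emph{pure} states, while the lower bound rests on the dual (measurement) characterizations of both quantities, which reduce the claim to an elementary classical estimate.

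For the upper bound $D(\rho,\sigma) \leq \sqrt{1-F^2(\rho,\sigma)}$, I would first record the pure-state computation. Given pure states $\ket{\psi}$ and $\ket{\phi}$, work in the (at most) two-dimensional subspace they span and write $\ket{\phi} = \cos\theta\,\ket{\psi} + \sin\theta\,\ket{\psi^{\perp}}$. Then $\braket{\psi} - \braket{\phi}$ is a traceless Hermitian operator with eigenvalues $\pm\sin\theta$, so $\|\braket{\psi} - \braket{\phi}\|_1 = 2\sin\theta$ and hence $D(\braket{\psi},\braket{\phi}) = \sqrt{1 - |\langle\psi|\phi\rangle|^2}$. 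Next I would invoke Uhlmann's theorem, $F(\rho,\sigma) = \max |\langle\psi|\phi\rangle|$ over all purifications $\ket{\psi}$ of $\rho$ and $\ket{\phi}$ of $\sigma$ on a common enlarged space. Choosing purifications attaining this maximum and using that the trace distance is non-increasing under the partial trace (a CPTP map), I get $D(\rho,\sigma) \leq D(\braket{\psi},\braket{\phi}) = \sqrt{1-|\langle\psi|\phi\rangle|^2} = \sqrt{1-F^2(\rho,\sigma)}$.

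For the lower bound $1 - F(\rho,\sigma) \leq D(\rho,\sigma)$, I would use the variational descriptions of both quantities over measurements. For a POVM $\{E_m\}$ put $p_m = \Tr(E_m\rho)$ and $q_m = \Tr(E_m\sigma)$. The trace distance is the maximum over POVMs of the statistical distance, $D(\rho,\sigma) = \max_{\{E_m\}} \tfrac12\sum_m |p_m - q_m|$, whereas the fidelity is the minimum over POVMs of the Bhattacharyya overlap, $F(\rho,\sigma) = \min_{\{E_m\}} \sum_m \sqrt{p_m q_m}$. Fixing a POVM attaining the fidelity and using $\sum_m p_m = \sum_m q_m = 1$, the classical estimate
\[
\tfrac12\sum_m |p_m - q_m|
 = \tfrac12 \sum_m |\sqrt{p_m}-\sqrt{q_m}|\,|\sqrt{p_m}+\sqrt{q_m}|
 \geq \tfrac12\sum_m (\sqrt{p_m}-\sqrt{q_m})^2
 = 1 - \sum_m \sqrt{p_m q_m}
\]
yields $D(\rho,\sigma) \geq \tfrac12\sum_m|p_m-q_m| \geq 1 - \sum_m\sqrt{p_m q_m} = 1 - F(\rho,\sigma)$.

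The main obstacle is not the algebra but the two variational theorems powering the argument: Uhlmann's theorem for the upper bound, and the characterization of the fidelity as the minimal Bhattacharyya overlap over measurements for the lower bound. Both are standard and the result is quoted from \cite{FuchsVandegraaf:fidelity}, so I would simply cite them; were a self-contained treatment wanted, I would prove Uhlmann's theorem using the freedom in the choice of purification and derive the fidelity's measurement formula from the operator form $F(\rho,\sigma)=\Tr\sqrt{\rho^{1/2}\sigma\rho^{1/2}}$.
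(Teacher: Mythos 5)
Your proof is correct. The paper never proves Lemma \ref{Lemma:relation} at all --- it is stated in the appendix as folklore with a citation to \cite{FuchsVandegraaf:fidelity} --- and your argument (the pure-state computation combined with Uhlmann's theorem and monotonicity of the trace distance under partial trace for the upper bound; the measurement characterizations $D(\rho,\sigma)=\max_{\{E_m\}}\tfrac12\sum_m|p_m-q_m|$ and $F(\rho,\sigma)=\min_{\{E_m\}}\sum_m\sqrt{p_mq_m}$ together with the elementary estimate $\tfrac12\sum_m|p_m-q_m|\geq 1-\sum_m\sqrt{p_mq_m}$ for the lower bound) is precisely the standard Fuchs--van de Graaf proof from the cited reference, so there is nothing internal to compare against beyond confirming its correctness.
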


   \begin{Lemma}[Fannes Inequality \cite{Fannes}]\label{lem:Fannes}
     Let $\rho$ and $\sigma$ be states on a $d$-dimensional Hilbert space, with $\|\rho - \sigma\|_1 \leq \epsilon$. Then $|H(\rho) - H(\sigma)| \leq \eta(\epsilon) \log{d}$, where $\eta(x) = x - x \log x$ for $x \leq \frac{1}{\epsilon}$. When $x > \frac{1}{\epsilon}$, we set $\eta(x)=x + \frac{\log \epsilon}{\epsilon}$.
   \end{Lemma}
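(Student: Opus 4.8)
The plan is to reduce the operator statement to two scalar facts about the eigenvalues of $\rho$ and $\sigma$. Write $r_1 \ge r_2 \ge \cdots \ge r_d$ and $s_1 \ge s_2 \ge \cdots \ge s_d$ for the eigenvalues of $\rho$ and $\sigma$ in decreasing order, and set $g(t) := -t \log t$ so that $H(\rho) = \sum_i g(r_i)$ and $H(\sigma) = \sum_i g(s_i)$. The argument rests on two pillars: a spectral-perturbation bound that controls the sorted eigenvalues by the trace norm, and an elementary modulus-of-continuity estimate for $g$. First I would invoke the standard eigenvalue-perturbation inequality (Mirsky's theorem / Lidskii's inequality), which asserts that for Hermitian operators $\sum_i |r_i - s_i| \le \|\rho - \sigma\|_1$; together with the hypothesis this gives $\sum_i |r_i - s_i| \le \epsilon$. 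This is the step that genuinely uses the noncommutativity of $\rho$ and $\sigma$: although we cannot diagonalize them simultaneously, their sorted spectra are nonetheless close in $\ell^1$.

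Next I would estimate the entropy difference term by term, $|H(\rho) - H(\sigma)| \le \sum_i |g(r_i) - g(s_i)|$, and apply the scalar lemma that $|g(x) - g(y)| \le g(|x-y|)$ whenever $x,y \in [0,1]$ and $|x-y| \le 1/2$. I would prove this lemma by using that $g$ is concave on $[0,1]$ with $g(0)=g(1)=0$, and by checking that the worst case occurs at the boundary (the pair $x=1/2$, $y=1$ already attains equality, which pins down the threshold). Writing $\Delta_i := |r_i - s_i|$, this yields $|H(\rho)-H(\sigma)| \le \sum_i g(\Delta_i)$ subject to $\sum_i \Delta_i \le \epsilon$ and $0 \le \Delta_i \le 1$. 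Since $g$ is concave, Jensen's inequality shows the right-hand side is maximized when all the $\Delta_i$ are equal to $\epsilon/d$, giving
\[
 |H(\rho) - H(\sigma)| \;\le\; d\, g\!\left(\tfrac{\epsilon}{d}\right) \;=\; -\epsilon \log \tfrac{\epsilon}{d} \;=\; \epsilon \log d - \epsilon \log \epsilon .
\]

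Finally I would reconcile this with the stated bound $\eta(\epsilon)\log d$ with $\eta(\epsilon) = \epsilon - \epsilon \log \epsilon$. For $d \ge 2$ one has $\log d \ge 1$, so $\epsilon \log d - \epsilon \log \epsilon \le (\epsilon - \epsilon \log \epsilon)\log d = \eta(\epsilon)\log d$, while for $d = 1$ both entropies vanish; hence the sharper estimate above implies the slightly looser form asserted in the statement. The cutoff appearing in the definition of $\eta$ is precisely the range of validity of the modulus-of-continuity lemma: once $\epsilon$ exceeds it, I would instead use the trivial bound $|H(\rho) - H(\sigma)| \le \log d$ (valid because $0 \le H \le \log d$ always), which is exactly what the monotone extension of $\eta$ beyond the threshold encodes.

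I expect the main obstacle to be the spectral-perturbation step. The scalar modulus-of-continuity estimate and the Jensen optimization are routine calculus, and the matching of constants is bookkeeping; but bounding the \emph{sorted} eigenvalue differences of two noncommuting density operators by $\|\rho - \sigma\|_1$ is where matrix structure, rather than bare scalar inequalities, genuinely enters. It requires either the Lidskii/Mirsky machinery or an equivalent majorization argument, and getting that inequality cleanly (and in the correct direction) is the one place where care is needed.
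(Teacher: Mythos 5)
The paper itself offers no proof of this lemma: it is listed in Appendix~\ref{app:facts} as folklore, with the proof deferred entirely to the citation \cite{Fannes}. Judged on its own merits, your proposal is the standard textbook argument (Mirsky/Lidskii spectral perturbation, scalar modulus of continuity for $g(t)=-t\log t$, concavity optimization), and its three pillars are all correct as stated: $\sum_i |r_i - s_i| \leq \|\rho-\sigma\|_1$ for decreasingly sorted spectra is exactly Mirsky's inequality and is indeed where the matrix structure enters; $|g(x)-g(y)| \leq g(|x-y|)$ does hold for $|x-y| \leq 1/2$, with equality at $(1/2,1)$ as you note; and Jensen gives $\sum_i g(\Delta_i) \leq d\, g(\epsilon/d) = \epsilon \log d - \epsilon \log \epsilon$. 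One sentence worth adding in a full write-up: passing from $\sum_i \Delta_i \leq \epsilon$ to evaluating at $\epsilon/d$ uses that $g$ is nondecreasing on $[0,1/e]$, which is safe here since $\epsilon/d \leq 1/4$ in the regime where you invoke it (and $d=1$ is trivial). The final comparison $\epsilon\log d - \epsilon\log\epsilon \leq \eta(\epsilon)\log d$ for $d \geq 2$ is also correct.

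The one place needing genuine care is the regime split at the end, and here you should be aware that the paper's definition of $\eta$ is garbled as printed: ``$x \leq \frac{1}{\epsilon}$'' and ``$\eta(x) = x + \frac{\log\epsilon}{\epsilon}$'' should read $x \leq 1/e$ and $\eta(x) = x + \frac{\log e}{e}$, i.e., the usual continuous monotone cap of $x - x\log x$ at its critical region. That cutoff sits at $1/e \approx 0.37$, whereas your modulus-of-continuity lemma is valid up to $|x-y| \leq 1/2$, and the two thresholds must not be conflated. If you fell back on the trivial bound $|H(\rho)-H(\sigma)| \leq \log d$ as soon as $\epsilon > 1/e$, the argument would fail on the window $\epsilon \in \bigl(1/e,\; 1 - (\log e)/e\bigr)$, since there $\eta(\epsilon) = \epsilon + (\log e)/e < 1$ and the trivial bound is too weak. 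The fix is implicit in your own phrasing but should be made explicit: run the refined argument for all $\epsilon \leq 1/2$ (each $\Delta_i \leq \sum_j \Delta_j \leq \epsilon \leq 1/2$, so the scalar lemma applies), and for $\epsilon \in (1/e, 1/2]$ use $-\epsilon\log\epsilon \leq (\log e)/e$ to conclude $\epsilon\log d - \epsilon\log\epsilon \leq \bigl(\epsilon + (\log e)/e\bigr)\log d = \eta(\epsilon)\log d$; only for $\epsilon > 1/2$ invoke the trivial bound, where $\eta(\epsilon) \geq 1/2 + (\log e)/e > 1$ makes it sufficient. With that bookkeeping spelled out, your proof is complete and matches the classical proof the paper's citation points to.
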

\begin{Lemma}[Gentle Measurement Lemma \cite{Winter02}]
Let $\rho$ be a subnormalized state (i.e $\rho \geq 0$ and $\Tr[\rho] \leq 1$). For any operator $0 \leq X \leq I$ such that $\Tr[X \rho] \geq 1 -\epsilon$, we have
\[
  \bigg \| \sqrt{X} \rho \sqrt{X} - \rho \bigg \|_1 \leq 2 \sqrt{\epsilon}
\]
\end{Lemma}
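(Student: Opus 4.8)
The plan is to bound the trace norm directly, without invoking fidelities or purifications, by splitting the difference $\rho - \sqrt{X}\rho\sqrt{X}$ into two pieces each controlled by a Hölder inequality together with a single operator inequality relating $X$ to $\sqrt{X}$. First I would add and subtract $\sqrt{X}\rho$ to write
\begin{equation}
\rho - \sqrt{X}\rho\sqrt{X} = (I - \sqrt{X})\rho + \sqrt{X}\rho(I - \sqrt{X}),
\end{equation}
and then apply the triangle inequality for the trace norm, reducing the claim to bounding the two summands by $\sqrt{\epsilon}$ each.

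For the first summand, I would factor $\rho = \sqrt{\rho}\,\sqrt{\rho}$ and use the Cauchy--Schwarz (Hölder) inequality $\|AB\|_1 \leq \|A\|_2\,\|B\|_2$ to get
\begin{equation}
\|(I - \sqrt{X})\rho\|_1 \leq \|(I - \sqrt{X})\sqrt{\rho}\|_2 \, \|\sqrt{\rho}\|_2,
\end{equation}
where $\|\sqrt{\rho}\|_2 = \sqrt{\Tr[\rho]} \leq 1$ by subnormalization. The pivotal observation is that $0 \leq X \leq I$ forces $\sqrt{X} \geq X$, since $\sqrt{t} \geq t$ on $[0,1]$; hence $(I - \sqrt{X})^2 = I - 2\sqrt{X} + X \leq I - X$. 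This yields
\begin{equation}
\|(I - \sqrt{X})\sqrt{\rho}\|_2^2 = \Tr[\rho(I - \sqrt{X})^2] \leq \Tr[\rho(I - X)] = \Tr[\rho] - \Tr[X\rho] \leq \epsilon,
\end{equation}
using $\Tr[\rho] \leq 1$ and $\Tr[X\rho] \geq 1 - \epsilon$, so the first summand is at most $\sqrt{\epsilon}$.

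For the second summand, I would note that $0 \leq X \leq I$ implies $\|\sqrt{X}\|_\infty \leq 1$, so the Hölder bound $\|\sqrt{X} M\|_1 \leq \|\sqrt{X}\|_\infty \|M\|_1$, combined with invariance of the trace norm under taking adjoints (both $\rho$ and $I - \sqrt{X}$ being Hermitian), reduces it to $\|(I - \sqrt{X})\rho\|_1 \leq \sqrt{\epsilon}$, exactly the bound just established. The triangle inequality then gives $\|\rho - \sqrt{X}\rho\sqrt{X}\|_1 \leq 2\sqrt{\epsilon}$. The only real obstacle is the operator inequality $(I - \sqrt{X})^2 \leq I - X$; everything else is the triangle inequality and the two standard Schatten-norm Hölder inequalities. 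No purity or dimensionality assumption is required, and the subnormalization enters solely through $\Tr[\rho] \leq 1$ in bounding $\|\sqrt{\rho}\|_2$.
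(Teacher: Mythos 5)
Your proof is correct: the decomposition $\rho - \sqrt{X}\rho\sqrt{X} = (I-\sqrt{X})\rho + \sqrt{X}\rho(I-\sqrt{X})$ is an identity, the Cauchy--Schwarz step $\|(I-\sqrt{X})\rho\|_1 \leq \|(I-\sqrt{X})\sqrt{\rho}\|_2\|\sqrt{\rho}\|_2$ is valid, the key operator inequality $(I-\sqrt{X})^2 \leq I - X$ follows correctly from $\sqrt{X} \geq X$ (legitimate by the spectral theorem, since both are functions of $X$ and so commute), and the handling of the second summand via $\|\sqrt{X}\|_\infty \leq 1$ and adjoint-invariance of the trace norm is sound. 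Note, however, that the paper itself offers no proof to compare against: the lemma sits in its appendix of ``folklore material'' and is simply attributed to Winter \cite{Winter02}. Winter's original argument runs through fidelity and Uhlmann's theorem --- one purifies $\rho$, lower-bounds the overlap of the purification with $(\sqrt{X}\otimes I)$ applied to it by $\Tr[\sqrt{X}\rho] \geq \Tr[X\rho]$, and converts to trace distance --- which in its original form yields the weaker constant $\sqrt{8\epsilon}$. Your argument is instead the H\"older-based refinement (due to Ogawa and Nagaoka, and standard in later treatments), and it is the right choice here: it delivers exactly the constant $2\sqrt{\epsilon}$ as stated in the lemma, handles the subnormalized case transparently through the single bound $\|\sqrt{\rho}\|_2 \leq 1$, and avoids purifications and dimension considerations entirely. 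The one point worth making explicit if this were written out in full is that $\sqrt{t} \geq t$ on $[0,1]$ transfers to operators here only because $\sqrt{X}$ and $X$ are simultaneously diagonalizable --- which you implicitly use and which is immediate, but a referee would want the word ``spectral'' to appear.
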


\section{Proof of eq.~(\ref{eq:operatorineq})} \label{app:operatorineq}

\begin{Lemma}
For $n$ copies of a state $\initstate$, let $\Pi_{\ti{B}},\Pi_{\ti{C}_1},\Pi_{\ti{C}_2},\ldots,\Pi_{\ti{C}_m},\Pi_{\ti{R}}$ be the projectors onto the typical subspaces $\ti{B},\ti{C}_1,\ti{C}_2,\ldots,\ti{C}_m$ and $\ti{R}$ respectively. Then, we have
\begin{equation}\label{eq:operatorineq1}
    \Pi_{\ti{B}\ti{C}_M\ti{R}} := \Pi_{\ti{B}} \otimes \Pi_{\ti{C_1}} \otimes \ldots \otimes \Pi_{\ti{C_m}} \otimes \Pi_{\ti{R}} \geq \Pi_{\ti{B}} +
   \Pi_{\ti{C_1}} + \ldots + \Pi_{\ti{C_m}} + \Pi_{\ti{R}} - (m+1) I_{\ti{B}\ti{C}_M\ti{R}}, \\
 \end{equation}
 where $\Pi_{\ti{B}}$ is a shorthand for $\Pi_{\ti{B}} \otimes I^{C_1C_2\ldots C_mR}$, and similarly for $\Pi_{\ti{C}_1},\Pi_{\ti{C}_2},\ldots,\Pi_{\ti{C}_m}$ and $\Pi_{\ti{R}}$.
\end{Lemma}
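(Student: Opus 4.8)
The plan is to exploit the fact that every operator appearing in eq.~(\ref{eq:operatorineq1}) is an orthogonal projector and that these projectors pairwise commute. Each summand on the right-hand side, such as $\Pi_{\ti{B}}$, is by the stated convention really $\Pi_{\ti{B}} \otimes I^{C_1 C_2 \ldots C_m R}$, acting as the identity on every tensor factor other than $\ti{B}$. Since distinct projectors in the list act nontrivially on disjoint tensor factors, any two of them commute. Writing $P_0 := \Pi_{\ti{B}}$, $P_i := \Pi_{\ti{C}_i}$ for $1 \le i \le m$, and $P_{m+1} := \Pi_{\ti{R}}$ (each understood as tensored with the identity on the remaining systems), the claim becomes the operator inequality $\prod_{i=0}^{m+1} P_i \ge \sum_{i=0}^{m+1} P_i - (m+1) I$ for the $m+2$ mutually commuting projectors $P_0, \ldots, P_{m+1}$.

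First I would invoke simultaneous diagonalizability: because the $P_i$ are self-adjoint and commute, there is an orthonormal basis in which all of them are diagonal, each diagonal entry being $0$ or $1$. Relative to this basis it suffices to verify the inequality entry by entry, so the operator statement reduces to the scalar statement that $\prod_{i=0}^{m+1} a_i \ge \sum_{i=0}^{m+1} a_i - (m+1)$ whenever $a_0, \ldots, a_{m+1} \in \{0,1\}$. I would then dispatch this scalar inequality by a two-case analysis. If every $a_i = 1$, both sides equal $1$. Otherwise at least one $a_i = 0$, so the left-hand side is $0$ while the right-hand side equals $\#\{i : a_i = 1\} - (m+1) \le (m+1) - (m+1) = 0$, since at most $m+1$ of the $m+2$ coefficients can equal $1$. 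In both cases the left-hand side dominates, which proves the scalar inequality and hence eq.~(\ref{eq:operatorineq1}).

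There is no genuinely hard step here; the one point deserving care is the justification that the operators commute, as this is precisely what licenses the reduction to a diagonal (scalar) computation. Should one prefer to avoid diagonalization, an equivalent route is to rephrase the claim as the ``union bound'' $I - \prod_i P_i \le \sum_i (I - P_i)$: because the $P_i$ commute, $\prod_i P_i$ is the projector onto $\bigcap_i \operatorname{range}(P_i)$, so $I - \prod_i P_i$ is the projector onto the join $\bigvee_i \operatorname{range}(I - P_i)$, which for commuting projectors is bounded above by $\sum_i (I - P_i)$ (in the common eigenbasis the join has entry $\max_i q_i \le \sum_i q_i$ with $q_i := 1 - a_i \ge 0$). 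Rearranging this bound gives eq.~(\ref{eq:operatorineq1}) directly. Either way, the resulting inequality is exactly what is combined with the typicality estimates $\Tr(\psi^{\otimes n}_{\ti{F}}\Pi_{\ti{F}}) \ge 1 - \exp(-c\delta^2 n)$ quoted in Section~\ref{sec:iid} to obtain the lower bound on $\langle \Omega | \Omega \rangle$.
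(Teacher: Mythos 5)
Your proof is correct and follows essentially the same route as the paper's: both exploit that the projectors pairwise commute, pass to a common eigenbasis, and conclude by a case analysis on the $\{0,1\}$ eigenvalues (all equal to $1$ versus at least one equal to $0$, in which case the right-hand side is $\leq 0$). The scalar formulation $\prod_i a_i \geq \sum_i a_i - (m+1)$ and the alternative union-bound phrasing are just cosmetic repackagings of the paper's eigenvector argument.
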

\begin{proof}
 The projection operators involved in the proof statement pairwise commute, and thus, are simultaneously diagonalizable. Let $\{\ket{e_i}\}$ be a common eigenbasis for these projectors. Then any eigenvector $\ket{e_i}$ with $\Pi_{\ti{B}\ti{C}_M\ti{R}}\ket{e_i} = \ket{e_i}$ satisfies
 \[
 \bigg (\Pi_{\ti{B}} + \Pi_{\ti{C_1}} + \ldots + \Pi_{\ti{C_m}} + \Pi_{\ti{R}} - (m+1) I \bigg ) \ket{e_i} = \ket{e_i}.
 \]
 If $\ket{e_i}$ is any eigenvector with $\Pi_{\ti{B}\ti{C}_M\ti{R}}\ket{e_i} = 0$, then it must be in the kernel of at least one of the projection operators $\Pi_{\ti{B}},\Pi_{\ti{C}_1},\Pi_{\ti{C}_2},\ldots,\Pi_{\ti{C}_m}$ and $\Pi_{\ti{R}}$, which implies that
\[
\bigg (\Pi_{\ti{B}} +
   \Pi_{\ti{C_1}} + \ldots + \Pi_{\ti{C_m}} + \Pi_{\ti{R}} - (m+1) I_{\ti{B}\ti{C}_M\ti{R}} \bigg )\ket{e_i} = \lambda_i \ket{e_i},
\]
where $\lambda_i \leq 0$.
Using both of these observations, we have
\begin{equation}
 \begin{split}
   \Pi_{\ti{B}\ti{C}_M\ti{R}} = \sum_{ \Pi_{\ti{B}\ti{C}_M\ti{R}}\ket{e_i}=\ket{e_i}} \braket{e_i} & \geq \sum_{ \Pi_{\ti{B}\ti{C}_M\ti{R}}\ket{e_i}=\ket{e_i}} \braket{e_i} + \sum_{ \Pi_{\ti{B}\ti{C}_M\ti{R}}\ket{e_i}=0} \lambda_i \braket{e_i} \\
   &= \Pi_{\ti{B}} +
   \Pi_{\ti{C_1}} + \ldots + \Pi_{\ti{C_m}} + \Pi_{\ti{R}} - (m+1) I_{\ti{B}\ti{C}_M\ti{R}} \\
 \end{split}
\end{equation}
\end{proof}

\section{Smoothing $H_\max$} \label{app:hmax-smoothing}
\begin{Lemma} \label{lem:hmax-smoothing}
Suppose the density operator $\rho$ has eigenvalues $r = (r_1, \ldots, r_d)$ with $r_j \geq r_{j+1}$. Then
\begin{equation}
H^\epsilon_\max(\rho) \geq 2 \log \min \left\{ \sum_{j=1}^{k-1} \sqrt{r_j}:
	k \mbox{ such that } \sum_{j=k+1}^d r_j \leq \frac{\epsilon^2}{2} \right\}.
\end{equation}
\end{Lemma}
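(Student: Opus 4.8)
The plan is to unfold the definition of the smooth max-entropy and reduce the statement to a question about truncating the spectrum of $\rho$. Combining the variational expression (eq.~(\ref{eq:smoothmax})) with the single-system evaluation $H^\epsilon_\max(\rho)=2\log\sum_x\sqrt{\bar r_x}$ recorded just above, I would start from
\[
H^\epsilon_\max(\rho)=\inf_{\bar\rho}\,2\log\Tr\sqrt{\bar\rho},
\]
where the infimum runs over all sub-normalized $\bar\rho\in{\cal S}_{\leq}$ with $P(\bar\rho,\rho)\leq\epsilon$. Since $\rho$ is normalized, the term $\sqrt{(1-\Tr\rho)(1-\Tr\bar\rho)}$ in the generalized fidelity vanishes, so the purified-distance constraint is exactly $F(\bar\rho,\rho)\geq\sqrt{1-\epsilon^2}$. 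Because $2\log(\cdot)$ is increasing, it suffices to lower bound $\Tr\sqrt{\bar\rho}$ over this feasible set.

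The first real step is to reduce to diagonal $\bar\rho$. The objective $\Tr\sqrt{\bar\rho}$ depends only on the spectrum of $\bar\rho$, whereas for a fixed spectrum the fidelity $F(\bar\rho,\rho)$ is largest when $\bar\rho$ is diagonal in the eigenbasis of $\rho$ with its eigenvalues sorted in the same decreasing order $r_1\geq\cdots\geq r_d$ (Uhlmann's theorem together with the rearrangement inequality). Hence I may restrict the infimum to $\bar\rho$ of this diagonal, sorted form with eigenvalues $\bar r_1\geq\cdots\geq\bar r_d\geq 0$, for which the constraint reads $\sum_j\sqrt{r_j\bar r_j}\geq\sqrt{1-\epsilon^2}$ and the objective is $\sum_j\sqrt{\bar r_j}$, subject to $\sum_j\bar r_j\leq 1$.

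The second step converts the distance constraint into the tail condition. Let $k$ be the smallest index with $\sum_{j>k}r_j\leq\epsilon^2/2$, which is precisely the index attaining the minimum on the right-hand side. The mechanism is that the small eigenvalues $\{r_j\}_{j>k}$ can contribute only $\sqrt{\sum_{j>k}r_j}\,\sqrt{\sum_{j>k}\bar r_j}\leq\epsilon/\sqrt2$ to the fidelity by Cauchy--Schwarz, so essentially all of the required overlap $\sqrt{1-\epsilon^2}$ must be supplied by the top part of the spectrum. Maintaining that much overlap with the dominant eigenvalues forces the objective to satisfy $\sum_j\sqrt{\bar r_j}\geq\sum_{j<k}\sqrt{r_j}$, the value realized by the extremal truncation $\bar r_j=r_j$ for $j<k$ and $\bar r_j=0$ otherwise. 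Taking $2\log$ of both sides then yields the claimed inequality.

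The hard part will be making this last comparison rigorous: one must show that neither spreading the weight of $\bar\rho$ into the tail nor concentrating it on a handful of dominant eigenvalues can push $\Tr\sqrt{\bar\rho}$ below the truncated square-root sum while still meeting the fidelity constraint. Concretely this is the small extremal problem of minimizing $\sum_j\sqrt{\bar r_j}$ subject to the linear fidelity constraint and $\sum_j\bar r_j\leq 1$, whose extremizer is what pins down the precise tail weight $\epsilon^2/2$; the reduction to diagonal, sorted form and the Cauchy--Schwarz estimate on the tail are the routine ingredients feeding into it.
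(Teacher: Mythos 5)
You take a genuinely different route from the paper: you keep the purified\-/distance (fidelity) constraint throughout, whereas the paper first converts the purified-distance ball into a trace-norm ball and then solves the resulting variational problem by a mass-transfer argument, showing that for \emph{trace-distance} smoothing the minimizer is an exact truncation of the spectrum. Your preliminary reductions are sound: since $\rho$ is normalized the generalized fidelity equals the fidelity, and for a fixed spectrum the fidelity is maximized by placing $\bar\rho$ diagonally in the eigenbasis of $\rho$ with eigenvalues aligned in decreasing order (von Neumann's trace inequality), so restricting to diagonal, sorted $\bar\rho$ is a valid relaxation when one wants a lower bound. The genuine gap is the step you defer as ``the hard part'': it is not merely hard, it is false with the constant $\epsilon^2/2$. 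For the fidelity constraint, truncation is \emph{not} the extremal configuration; uniformly damping all eigenvalues does better. Concretely, take $\rho = I_d/d$ and $\bar\rho = (1-\epsilon^2)\rho$. Then $F(\bar\rho,\rho)=\sqrt{1-\epsilon^2}$, hence $P(\bar\rho,\rho)=\epsilon$ and $\bar\rho$ is feasible, yet $\Tr\sqrt{\bar\rho}=\sqrt{(1-\epsilon^2)d}$, while the truncated sum is $\sum_{j=1}^{k_0-1}\sqrt{r_j}\geq \sqrt{d}\left(1-\epsilon^2/2-1/d\right)$ with $k_0=\lceil d(1-\epsilon^2/2)\rceil$. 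Since $1-\epsilon^2/2-\sqrt{1-\epsilon^2}\geq \epsilon^4/8$, your asserted implication $\sum_j\sqrt{\bar r_j}\geq\sum_{j<k_0}\sqrt{r_j}$ fails for every $d>8/\epsilon^4$; no Cauchy--Schwarz bookkeeping on the tail can recover it.

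What your stalled step is actually detecting is that the lemma with tail threshold $\epsilon^2/2$, under purified-distance smoothing, is itself too strong --- the damped state above violates it --- and the paper's own proof slips past this only because its opening inequality is reversed. From $P\leq\sqrt{2}\,\|\rho-\bar\rho\|_1^{1/2}$ one gets that the trace ball of radius $\epsilon^2/2$ is \emph{contained in} the purified ball of radius $\epsilon$, which yields $H^\epsilon_{\max}(\rho)\leq\min\{H_{\max}(\bar\rho):\|\rho-\bar\rho\|_1\leq\epsilon^2/2\}$, not ``$\geq$'' as claimed. The correct containment runs the other way, $P\leq\epsilon\Rightarrow\|\rho-\bar\rho\|_1\leq 2\epsilon$, after which the paper's (otherwise correct) mass-transfer analysis gives the lemma with tail threshold $2\epsilon$ in place of $\epsilon^2/2$; your maximally mixed example shows that any valid threshold must in fact be at least $1-\sqrt{1-\epsilon^2}>\epsilon^2/2$. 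So the constructive advice is: do not try to finish your extremal problem as stated. Either convert the fidelity constraint into the trace-norm constraint $\|\rho-\bar\rho\|_1\leq2\epsilon$ and then run the truncation (mass-transfer) argument --- which is essentially the paper's path, with the corrected constant --- or, if you stay with fidelity, accept the weaker threshold, since your own witness $\bar\rho=(1-\epsilon^2)\rho$ pins down exactly how much must be conceded.
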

\begin{proof}
By Lemma 16 of \cite{Renner01}, $H^\epsilon_\max(\rho)$ is equal to the minimum of $H_\max(\overline\rho)$ over all positive semidefinite operators $\overline\rho$ no more than $\epsilon$ away from $\rho$ as measured by the purified distance. This measure is a bit awkward to work with for our purposes, but it is bounded above by $\sqrt{2} \| \rho - \overline\rho \|_1^{1/2}$ by eq.~(\ref{eq:purified}).
Therefore,
\begin{eqnarray}
H^\epsilon_\max(\rho)
	&\geq& \min \left\{
	H_\max(\overline\rho) : \| \rho - \overline\rho \|_1 \leq \delta
	\right\}
\end{eqnarray}
for $\delta = \epsilon^2/2$
and we will try to estimate the right hand side of the inequality.

 Let $\overline\rho$ be a positive semidefinite operator such that $\| \overline\rho -\rho \|_1 \leq \delta$ and let $\overline{r} = ( \overline{r}_1,\ldots,\overline{r}_d )$ be the eigenvalues of $\overline\rho$, ordered such that $\overline{r}_j \geq \overline{r}_{j+1}$. We will identify $r$ and $\overline{r}$ with their associated diagonal matrices. Then (see \cite{Nielsen})
\begin{equation}
\| \overline{r} - r \|_1 \leq \| \overline\rho - \rho \|_1,
\end{equation}
but $H_\max(r) = H_\max(\rho)$ and $H_\max(\overline{r}) = H_\max(\overline\rho)$
so we may assume without loss of generality that $\overline\rho$ and $\rho$ are simultaneously diagonal with diagonal entries in non-increasing order. We can therefore dispense with $\rho$ and $\overline\rho$, discussing only $r$ and $\overline{r}$ from now on.

By Theorem 3 of \cite{Renner03}, $H_\max(r) = 2 \log \sum_j \sqrt{r_j}$, which is monotonically decreasing in each $r_j$. This implies that a minimizing $\overline{r}$ must satisfy $r_j \geq \overline{r}_j$. If not, redefining $\overline{r}_j = r_j$ decreases $\| \overline\rho - \rho \|_1$ and $H_\max(\overline{r})$ at the same time.

We will now argue that there is a minimizing $\overline{r}$ such that there is a $j_0$ for which $r_j = \overline{r}_j$ for all $j < j_0$ and $\overline{r}_j = 0$ for all $j > j_0$. Let $s = (s_1,\ldots,s_d)$ be any vector such that $s_j \geq s_{j+1} \geq 0$ and $s_j \leq r_j$, that is, a vector that is a possible candidate for a minimizer. Suppose that $s$ does not have the prescribed form, that is, there is a $j_0$ such that $s_{j_0} < r_{j_0}$ but $s_{j_0+1} \neq 0$. Consider the family of vectors $t(\gamma)$ that arise by transferring $\gamma$ from $s_{j_0+1}$ to $s_{j_0}$ defined by $t(\gamma)_{j_0} = s_{j_0} + \delta$, $t(\gamma)_{j_0+1} = s_{j_0+1} - \delta$ and $t_j = s_j$ for $j \not\in \{ j_0, j_0+1 \}$.

It is easy to check that for sufficiently small $\gamma$, it will be the case that $\| r - t(\gamma) \|_1 \leq \| r - s \|_1$. Moreover, defining $f(\gamma) = \sum_j \sqrt{t(\gamma)_j}$, we have that
\begin{equation}
\frac{df}{d\gamma}\Big|_{\gamma = 0}
	= \frac{1}{2\sqrt{s_{j_0}}} - \frac{1}{2 \sqrt{s_{j_0+1}}},
\end{equation}
which is nonpositive since $s_{j_0} \geq s_{j_0+1}$. For sufficiently small $\gamma$ then, $H_\max(t(\gamma)) \leq H_\max(s)$. (If $s_{j_0} = 0$, the derivative does not exist but the conclusion can be confirmed by looking at finite differences.) So, if $s$ were a minimizer, it is possible to  either construct a new minimizer of the prescribed form or reach a contradiction by further decreasing $H_\max$.

The statement of the lemma follows by evaluating $H_\max$ on a minimizer of the prescribed form.
\end{proof}

\bibliographystyle{unsrt}
\bibliography{EofABib}

\begin{thebibliography}{10}

\bibitem{teleportation}
C.~H. Bennett, G.~Brassard, C.~Cr\'epeau, and R.~Jozsa et~al.
\newblock Teleporting an unknown quantum state via dual classical and
  {E}instein-{P}odolsky-{R}osen channels.
\newblock {\em Physical Review Letters}, 70(13):1895--1899, 1993.

\bibitem{superdense}
C.~H. Bennett and S.~J. Wiesner.
\newblock Communication via one- and two-particle operators on
  {E}instein-{P}odolsky-{R}osen states.
\newblock {\em Physical Review Letters}, 69(20):2881--2884, 1992.

\bibitem{SVW}
J.~A. Smolin, F.~Verstraete, and A.~Winter.
\newblock Entanglement of assistance and multipartite state distillation.
\newblock {\em Physical Review A}, 72(5):052317, 2005.
\newblock ar{X}iv:quant-ph/0505038v1.

\bibitem{Merge}
M.~Horodecki, J.~Oppenheim, and A.~Winter.
\newblock Quantum state merging and negative information.
\newblock {\em Communication in Mathematical Physics}, 269(1):107--136, 2007.
\newblock ar{X}iv:quant-ph/0512247.

\bibitem{Hayden001}
A.~Abeyesinghe, I.~Devetak, P.~Hayden, and A.~Winter.
\newblock The mother of all protocols: restructuring quantum information's
  family tree.
\newblock {\em Proceedings of the Royal Society A}, 465:2537--2563, 2009.

\bibitem{DW}
I.~Devetak and A.~Winter.
\newblock Distillation of secret key and entanglement from quantum states.
\newblock {\em Proceedings of the Royal Society A}, 461:207--235, 2005.
\newblock ar{X}iv:quant-ph/0306078v1.

\bibitem{Yard01}
I.~Devetak and J.~Yard.
\newblock The operational meaning of quantum conditional information.
\newblock 2006.
\newblock ar{X}iv:quant-ph/0612050v1.

\bibitem{Yard02}
J.~Yard and I.~Devetak.
\newblock Optimal quantum source coding with quantum side information at the
  encoder and decoder.
\newblock {\em IEEE Transactions on Information Theory}, 55(11):5339--5351,
  2009.

\bibitem{zero}
G.~Smith and J.~Yard.
\newblock Quantum communication with zero-capacity channels.
\newblock {\em Science}, 321(5897):1812--1815, 2008.
\newblock ar{X}iv:quant-ph/0807.4935v2.

\bibitem{hasting}
M.~B. Hastings.
\newblock Superadditivity of communication capacity using entangled inputs.
\newblock {\em Nature Physics}, 5(4):255--257, 2009.

\bibitem{Schumacher}
B.~Schumacher and M.~A. Nielsen.
\newblock Quantum data processing and error correction.
\newblock {\em Physical Review A}, 54(4):2629--2635, 1996.

\bibitem{Negative}
N.~J. Cerf and C.~Adami.
\newblock Negative entropy and information in quantum mechanics.
\newblock {\em Physical Review Letters}, 79(26):5194--5197, 1997.

\bibitem{SW-nature}
M.~Horodecki, J.~Oppenheim, and A.~Winter.
\newblock Quantum information can be negative.
\newblock {\em Nature}, 436:673--676, 2005.
\newblock ar{X}iv:quant-ph/0505062.

\bibitem{Hayden002}
C.~Ahn, A.~Doherty, P.~Hayden, and A.~Winter.
\newblock On the distributed compression of quantum information.
\newblock {\em IEEE Transactions on Information Theory}, 52(10):4349--4357,
  2006.

\bibitem{dfm}
D.~P. DiVincenzo, C.~A. Fuchs, H.~Mabuchi, and J.~A.~Smolin et~al.
\newblock Entanglement of assistance.
\newblock In {\em Quantum Computing and Quantum Communications First NASA
  International Conference, QCQC'98 Palm Springs, California, USA February
  17-–20, 1998 Selected Papers}, volume 1509 of {\em Lecture Notes in Computer
  Science}, pages 247--257. Springer Berlin, 1999.
\newblock ar{X}iv:quant-ph/9803033v1, 1998.

\bibitem{Berta}
M.~Berta.
\newblock Single-shot quantum state merging.
\newblock Master's thesis, ETH Z\"urich, 2009.
\newblock ar{X}iv:quant-ph/0912.4495v1.

\bibitem{Renner02}
R.~Renner.
\newblock {\em Security of quantum key distribution.}
\newblock PhD thesis, ETH Z\"urich, 2005.
\newblock ar{X}iv:quant-ph/0512258.

\bibitem{Renner01}
M.~Tomamichel, R.~Colbeck, and R.~Renner.
\newblock Duality between smooth min- and max-entropies.
\newblock {\em IEEE Transactions on Information Theory}, 56(9):4674--4681,
  2010.
\newblock ar{X}iv:quant-ph/0907.5238v2.

\bibitem{Berta02}
M.~Berta, M.~Christandl, and R.~Renner.
\newblock A conceptually simple proof of the quantum reverse shannon theorem.
\newblock 2009.
\newblock ar{X}iv:quant-ph/0912.3805v1.

\bibitem{Datta1}
F.~Buscemi and N.~Datta.
\newblock How many singlets are needed to create a bipartite state using locc?
\newblock 2009.
\newblock arXiv:0906.3698v2.

\bibitem{Datta}
F.~Buscemi and N.~Datta.
\newblock General theory of assisted entanglement distillation.
\newblock 2010.
\newblock ar{X}iv:quant-ph/1009.4464v1.

\bibitem{Renes}
J.~M. Renes and R.~Renner.
\newblock One-shot classical data compression with quantum side information and
  the distillation of common randomness or secret keys.
\newblock 2010.
\newblock arXiv:1008.0452v2.

\bibitem{uhlmann:fid}
A.~Uhlmann.
\newblock The `transition probability' in the state space of a {$*$}-algebra.
\newblock {\em Reports in Mathematical Physics}, 9:273, 1976.

\bibitem{Nielsen}
M.~Nielsen and I.~L. Chuang.
\newblock {\em Quantum computation and quantum information.}
\newblock Cambridge University Press, 2001.

\bibitem{Winter01}
A.~Winter.
\newblock {\em Coding theorems of quantum information theory.}
\newblock PhD thesis, University of Bielefeld, 1999.
\newblock ar{X}iv:quant-ph/9907077v1.

\bibitem{Renyi}
A.~R\'enyi.
\newblock On measures of entropy and information.
\newblock {\em Proceedings of the 4th Berkeley Symposium on Mathematical
  Statistics and Probability}, 1:547--561, 1960.

\bibitem{Cachin97smoothentropy}
Christian Cachin.
\newblock Smooth entropy and {R}\'enyi entropy.
\newblock In {\em Advances In Cryptology - EUROCRYPT '97, Lecture Notes in
  Computer Science}, pages 193--208. SpringerVerlag, 1997.

\bibitem{Renner05simpleand}
Renato Renner and Stefan Wolf.
\newblock Simple and tight bounds for information reconciliation and privacy
  amplification.
\newblock In {\em Advances in Cryptology - ASIACRYPT 2005, Lecture Notes in
  Computer Science}, pages 199--216. SpringerVerlag, 2005.

\bibitem{Order}
S.~Baratpour, J.~Ahmadi, and N.~R. Arghami.
\newblock Characterizations based on {R}\'enyi entropy of order statistics and
  record values.
\newblock {\em Journal of Statistical Planning and Inference},
  138(8):2544--2551, 2008.

\bibitem{Diversity}
M.~M. Mayoral.
\newblock Renyi's entropy as an index of diversity in simple-stage cluster
  sampling.
\newblock {\em Information Sciences}, 105:101--114, 1998.

\bibitem{Renner03}
R.~Koenig, R.~Renner, and C.~Schaffner.
\newblock The operational meaning of conditional min- and max-entropy.
\newblock {\em IEEE Transactions on Information Theory}, 55(9):4337--4347,
  2009.

\bibitem{Hayden04}
W.~van Dam and P.~Hayden.
\newblock Universal entanglement transformations without communication.
\newblock {\em Physical Review A}, 67(6):060302, 2003.
\newblock ar{X}iv:quant-ph/0201041v1.

\bibitem{Harrow}
A.~W. Harrow.
\newblock Entanglement spread and clean resource inequalities.
\newblock In {\em XVITH Internation Congress on Mathematical Physics}, pages
  536--540. World Scientific, 2009.
\newblock ar{X}iv:quant-ph/0909.1557.

\bibitem{Reverse}
C.~H. Bennett, I.~Devetak, A.~W. Harrow, P.~W. Shor, and A.~Winter.
\newblock Quantum reverse shannon theorem.
\newblock 2009.
\newblock ar{X}iv:quant-ph/0912.5537v1.

\bibitem{Horn}
R.~A. Horn and C.~R. Johnson.
\newblock {\em Matrix Analysis}.
\newblock Cambridge University Press, 1990.

\bibitem{wiki}
Wikipedia.
\newblock Gershgorin circle theorem--- {W}ikipedia{,} the free encyclopedia,
  2010.
\newblock [Online; accessed 28-October-2010].

\bibitem{Hayden05}
P.~Hayden and A.~Winter.
\newblock Communication cost of entanglement transformations.
\newblock {\em Physical Review A}, 67(1):012326, 2003.
\newblock ar{X}iv:quant-ph/0204092v3.

\bibitem{Concentrate}
Charles~H. Bennett, Herbert~J. Bernstein, Sandu Popescu, and Benjamin
  Schumacher.
\newblock Concentrating partial entanglement by local operations.
\newblock {\em Phys. Rev. A}, 53(4):2046--2052, 1996.

\bibitem{Bennett}
C.~H. Bennett, D.~P. DiVincenzo, J.~A. Smolin, and W.~K. Wootters.
\newblock Mixed-state entanglement and quantum error correction.
\newblock {\em Physical Review A}, 54(5):3824--3851, 1996.
\newblock ar{X}iv:quant-ph/9604024v2.

\bibitem{Smolin}
J.~A. Smolin and A.~V. Thapliyal.
\newblock The power of loccq state transformations.
\newblock 2002.
\newblock ar{X}iv:quant-ph/0212098.

\bibitem{FuchsVandegraaf:fidelity}
C.~A. Fuchs and J.~van~de Graaf.
\newblock Cryptographic distinguishability measures for quantum mechanical
  states.
\newblock {\em IEEE Transactions on Information Theory}, 45:1216--1227, 1999.
\newblock ar{X}iv:quant-ph/9712042v2.

\bibitem{Fannes}
M.~Fannes.
\newblock A continuity property of the entropy density for spin lattice
  systems.
\newblock {\em Communication in Mathematical Physics}, 31:291--294, 1973.

\bibitem{Winter02}
A.~Winter.
\newblock Coding theorem and strong converse for quantum channels.
\newblock {\em IEEE Transactions on Information Theory}, 45(7):2481--2485,
  1999.

\end{thebibliography}
\end{document}